%% file: main.tex
\documentclass[acmsmall,screen,nonacm]{acmart}
\AtBeginDocument{%
  }

\usepackage{stmaryrd}
\usepackage{mathtools}
\mathtoolsset{showonlyrefs=true}
\usepackage{tikz}
\usepackage{enumitem}
\usepackage{subcaption}

\newif\iflong
\longtrue

\newcommand{\interpret}[1]{\llbracket #1 \rrbracket}
\newcommand{\arity}{\mathsf{ar}}
\newcommand{\sort}{\mathsf{sort}}
\newcommand{\propsort}{o}
\newcommand{\True}{\mathtt{true}}
\newcommand{\False}{\mathtt{false}}
\newcommand{\FreeTermVars}{\mathsf{ftv}}
\newcommand{\FreePredVars}{\mathsf{fpv}}
\newcommand{\MuVal}{\textsc{MuVal}}
\newcommand{\MuValPPM}{\textsc{MuValPPM}}
\newcommand{\MuStrat}{\textsc{MuStrat}}
\newcommand{\PCSat}{\textsc{PCSat}}
\newcommand{\Range}[2]{\{ #1, \dots, #2 \}}

\newcommand{\Play}[2][]{\mathbf{Play}_{#1}(#2)}
\newcommand{\MaximalPlay}[2][]{\mathbf{MaxPlay}_{#1}(#2)}
\newcommand{\StrategyPlay}[3][]{\mathbf{Play}_{#1}^{#2}(#3)}
\newcommand{\GameSemantics}[1]{G_{#1}}
\newcommand{\WellFoundedGame}[1]{G^{\mathrm{wf}}_{#1}}
\newcommand{\DisjunctivelyWellFoundedGame}[1]{G^{\mathrm{dwf}}_{#1}}
\newcommand{\CounterGame}[1]{G^{\mathrm{cnt}}_{#1}}

\AtEndPreamble{%
\theoremstyle{acmdefinition}

\newtheorem{remark}[theorem]{Remark}}

\usepackage[draft,anonymous,warnconjecture]{macros.aux}

\begin{document}
\title{Solving First-Order Fixed-Point Logics via a Least-to-Greatest Transformation Based on Game Semantics}

\author{Satoshi Kura}
\email{satoshikura@acm.org}
\orcid{0000-0002-3954-8255}
\affiliation{%
  \institution{Waseda University}
  \city{Tokyo}
  \country{Japan}
}

\author{Hiroshi Unno}
\email{hiroshi.unno@acm.org}
\orcid{0000-0002-4225-8195}
\affiliation{%
  \institution{Tohoku University}
  \city{Sendai}
  \country{Japan}
}

\begin{abstract}
Fixed-point logics provide an expressive intermediate framework for reasoning about temporal properties of programs.
One of the key approaches to solving their validity checking problem is via transformations from least fixed points to greatest fixed points (\emph{$\mu$-to-$\nu$ transformations}), which generalizes a reduction from termination verification to safety verification studied in binary reachability analysis.
In this paper, we introduce game-semantic interpretations of $\mu$-to-$\nu$ transformations.
We first introduce a new $\mu$-to-$\nu$ transformation based on parity relations.
We show that solving $\mu$-to-$\nu$-transformed fixed-point equation systems corresponds to finding winning strategies in the game semantics of the original fixed-point equation systems.
We apply the same game-semantic framework to interpret two existing $\mu$-to-$\nu$ transformations, one by Kobayashi et al.\ and the other by Unno et al, and show that they admit analogous game-semantic interpretations.
Furthermore, we show that the game introduced by Tsukada et al.\ corresponds to an alternative characterization of the winning condition.
On the implementation side, we propose optimization techniques for efficiently solving our new $\mu$-to-$\nu$ transformation.
We implement these techniques in a fixed-point logic solver, compare our approach with existing solvers, and demonstrate the effectiveness of the proposed optimizations through experiments.
\end{abstract}

\keywords{Fixed-point logics, Parity games, Game semantics}

\maketitle

\input{intro}
\input{prelim}

\input{ppm-translation}

\input{popl23}
\input{sas19}
\input{popl25}

\input{implementation}
\input{experiment}
\input{related-work}

\section{Conclusions and Future Work}
\label{sec:conclusion}
We have presented a novel $\mu$-to-$\nu$ transformation for $\mu$CLPs based on parity relations.
We proved that the transformation is sound and complete and established its game-semantic interpretation.
We also showed that the existing $\mu$-to-$\nu$ transformations~\cite{KobayashiSAS2019,UnnoPOPL2023} and the game introduced in~\cite{TsukadaPOPL2025} can be understood from a unified game-semantic perspective.
Finally, we implemented our transformation together with optimization techniques, including SCC-wise parity relations and priority reassignment, and demonstrated their effectiveness through experiments on benchmark suites.
As future work, we plan to develop advanced methods for exchanging information between primal and dual $\mu$CLP solving.

\checkpagelimit{25}

\bibliographystyle{ACM-Reference-Format}
\bibliography{ref}

\iflong
\clearpage
\appendix
\input{complete-progress-measure}
\input{proofs}
\fi

\end{document}

%% file: intro.tex
\section{Introduction}

Fixed-point logics have been widely studied as a foundation for reasoning about temporal properties of programs.
While much of the early work focused on finite-state systems, such as those captured by the modal $\mu$-calculus, recent research has addressed verification of infinite-state programs~\cite{NanjoLICS2018,UnnoCAV2021,UnnoPOPL2023,KobayashiSAS2019,KobayashiESOP2018,CathcartBurnPOPL2018,BeyeneCAV2013,BjornerFieldsofLogicandComputationII2015,UnnoPOPL2013,DeAngelisTheoryandPracticeofLogicProgramming2022}.
For example, properties such as the weakest precondition and weakest liberal precondition of a loop can be characterized as the least and greatest fixed points, respectively, and are thus naturally expressible within a fixed-point logic~\cite{BlassComputationTheoryandLogic1987}.
Numerous program verification methods have been developed through reductions to solving constrained Horn clauses (CHCs)~\cite{BeyeneCAV2013,BjornerFieldsofLogicandComputationII2015,CathcartBurnPOPL2018,UnnoPOPL2013,DeAngelisTheoryandPracticeofLogicProgramming2022}, which can be seen as a fragment of a first-order fixed-point logic without $\mu$ and $\exists$.
Allowing alternation of least and greatest fixed points further increases the expressiveness of fixed-point logics.
For instance, model checking of infinite-state while-programs against modal $\mu$-calculus specifications can be reduced to the validity checking problem for a first-order fixed-point logic with fixed-point alternation~\cite{KobayashiSAS2019}.
Other applications of fixed-point logics include game solving and reactive synthesis~\cite{SamuelFSE2021,MaderbacherFMCAD2022,SchmuckCAV2024,HeimPOPL2024,HeimPOPL2025,HeimCAV2025}; implementability of global protocols~\cite{LiOOPSLA2025a,LiCAV2025}; and semantics-guided synthesis~\cite{MurphyPLDI2025}.
These various applications highlight the role of fixed-point logics as a powerful intermediate formalism for program verification and synthesis.
Thus, efficient solvers for fixed-point logics are a key component towards practical verification and synthesis tools.

The validity checking problem of fixed-point logics can be solved by applying transformations from least fixed points to greatest fixed points \cite{KobayashiSAS2019,UnnoPOPL2023}.
We call such transformations \emph{least-to-greatest fixed-point transformations} or simply \emph{$\mu$-to-$\nu$ transformations}.
The idea of such $\mu$-to-$\nu$ transformations was inspired by, and generalizes, the reduction of termination verification problems to safety verification problems done in \emph{binary reachability analysis} \cite{PodelskiLICS2004,CookPLDI2006,GrebenshchikovPLDI2012,KuwaharaESOP2014,FedyukovichCAV2018}.
The resulting safety verification problems can then be solved by synthesizing suitable invariants.

To give some intuition, we first consider the simple setting where least and greatest fixed points do not alternate.
As a simple example, consider the termination of the following program.
\[ \texttt{while}\ (x \neq 0)\ \{\ x \coloneqq x - 1;\ \} \]
This problem can be encoded by the following least fixed-point equation.
\[ X(x) \quad=_{\mu}\quad x = 0 \lor X(x - 1) \qquad\qquad \text{where $x \in \mathbb{Z}$} \]
The least fixed point $X^{*}$ of the above equation gives the set of terminating states, which in this case is the set of non-negative integers.
To verify that the program terminates from the initial state, say $x = 100$, we need to show $100 \in X^{*}$.
This amounts to finding a \emph{lower bound} $X' \subseteq X^{*}$ such that $X'(100)$ holds.

Finding lower bounds of greatest fixed points (e.g.\ safety verification) is relatively easy, as it can be reduced to the problem of finding a \emph{post-fixed point} by the Knaster--Tarski theorem.
On the other hand, finding lower bounds of least fixed points (e.g.\ termination verification) is more challenging, as we cannot directly apply the Knaster--Tarski theorem.
One way to address this is through $\mu$-to-$\nu$ transformations.
In the current example, a $\mu$-to-$\nu$ transformation allows us to consider lower bounds of the following greatest fixed-point equation instead of the original least fixed-point equation.
\begin{equation}
	X'(x) \quad=_{\nu}\quad x = 0 \lor (X'(x - 1) \land R(x, x - 1)) \qquad \text{$R$: well-founded}
	\label{eq:example-mu-to-nu}
\end{equation}
Here, $R$ is an arbitrary well-founded relation.
Then, the greatest fixed point $X'^{*}$ of the above equation is a lower bound of $X^{*}$ because the well-foundedness of $R$ prohibits infinite unfolding of the equation\footnote{In general, $X'^{*}$ can be a proper subset of $X^{*}$ depending on the choice of $R$, but it is known that there exists a well-founded relation $R$ such that $X'^{*} = X^{*}$.}.
By the Knaster--Tarski theorem, any post-fixed point is a lower bound of the greatest fixed point.
Thus, the validity of $X(100)$ can be reduced to the problem of finding a predicate $X'$ and a well-founded relation $R$ that satisfy the following constraints.
\[ X'(100), \qquad X'(x) \implies x = 0 \lor (X'(x - 1) \land R(x, x - 1)) \]
In other words, the termination verification problem is now reduced to a safety verification problem.
The above constraints require finding a suitable invariant $X'$ such that (a) the initial state $x = 100$ is included in $X'$, (b) $X'$ is closed under the transition relation of the program, and (c) the transition relation is well-founded on $X'$.

By the above reduction, we obtain constrained Horn clauses (CHCs)\footnote{Precisely, we obtain \emph{co-CHCs}, which are equivalent to CHCs via mutual translations.}, which can be solved by off-the-shelf CHC solvers (if $R$ is given).
Such reductions have been studied in the context of binary reachability analysis and then generalized to more general first-order fixed-point logics in which least and greatest fixed points can alternate.
\citet{KobayashiSAS2019} proposed a $\mu$-to-$\nu$ transformation for first-order fixed-point logics, but it is not complete.
\citet{UnnoPOPL2023} proposed another transformation, which achieves completeness.
However, these transformations

In this paper, we propose a novel and simple $\mu$-to-$\nu$ transformation for first-order fixed-point logics, together with its interpretation in terms of game semantics of fixed-point logics.
Specifically, we first introduce a notion of \emph{parity progress relation} (or \emph{parity relation} for short), which is a generalization of the notion of \emph{well-founded relation} and can be constructed from parity progress measures~\cite{JurdzinskiSTACS2000}.
Then, the notion of parity relations allows us to generalize the transformation in~\eqref{eq:example-mu-to-nu}, which leads to a sound and complete reduction of the validity checking problem of first-order fixed-point logics to a constraint solving problem.

Our $\mu$-to-$\nu$ transformation can be naturally interpreted in terms of game semantics of fixed-point logics~\cite{BradfieldHandbookofModelChecking2018,BaldanPOPL2019,BaldanCONCUR2020}, which interprets a fixed-point equation system as a parity game such that the validity of the equation system corresponds to the existence of a winning strategy.
Similarly to the example above, our $\mu$-to-$\nu$ transformation inserts parity relations into a given fixed-point equation system, which ensures that the parity condition is satisfied for any infinite play in the corresponding parity game.
Therefore, if there exists a solution for the constraint solving problem obtained by our transformation, then there exists a memoryless winning strategy in the parity game.

Game semantics also provides a unified perspective on the $\mu$-to-$\nu$ transformations of~\cite{KobayashiSAS2019,UnnoPOPL2023} and the game introduced in~\cite{TsukadaPOPL2025}.
Specifically, these works can be uniformly interpreted as equipping the original parity game with auxiliary partial information about the history and then rephrasing the winning condition in terms of the auxiliary information.
By characterizing the winning condition in terms of well-founded relations, we obtain a game-semantic interpretation of~\cite{UnnoPOPL2023}.
Solving the $\mu$-to-$\nu$-transformed equation system of~\cite{UnnoPOPL2023} corresponds to constructing a winning strategy in the augmented parity game.
Similarly, the game introduced in~\cite{TsukadaPOPL2025} can be viewed as a reformulation of the winning condition in terms of disjunctively well-founded relations.
Likewise, the transformation in~\cite{KobayashiSAS2019} can be understood in terms of a game augmented with counters that bound the number of unfoldings of least fixed points.
This game-semantic analysis also reveals that the original game of~\cite{TsukadaPOPL2025} imposes an unnecessary restriction on the winning condition.

We implement our new $\mu$-to-$\nu$ transformation on top of the existing solver \MuVal~\cite{UnnoPOPL2023} for first-order fixed-point logics.
Our implementation generates constraints over predicate variables that involve (unknown) parity relations.
To synthesize parity relations, we provide templates for parity relations and implement them in the existing constraint solver \PCSat~\cite{UnnoCAV2021}.
We further develop several optimization techniques based on dependency analysis, which reduce the size of the generated constraints.
We compared our implementation with the existing solver \MuVal~\cite{UnnoPOPL2023} and \MuStrat~\cite{TsukadaPOPL2025} on benchmarks from the literature.
We also evaluated the effectiveness of our optimization techniques.

Our contributions are summarized as follows.
\begin{itemize}
	\item We present a novel and simple $\mu$-to-$\nu$ transformation for first-order fixed-point equation systems.
	We show that our $\mu$-to-$\nu$ transformation admits a natural interpretation in terms of game semantics for fixed-point logics~\cite{BaldanPOPL2019,BaldanCONCUR2020} and yields a sound and complete reduction of the validity checking problem for first-order fixed-point logics to a constraint solving problem.
	\item We discuss how the existing methods~\cite{KobayashiSAS2019,UnnoPOPL2023,TsukadaPOPL2025} can be understood from a uniform game-semantic perspective.
	In particular, we interpret the transformation proposed in~\cite{UnnoPOPL2023} in terms of well-founded relations, and the game introduced in~\cite{TsukadaPOPL2025} in terms of disjunctively well-founded relations.
	We also show that the transformation proposed in~\cite{KobayashiSAS2019} admits a game-semantic interpretation using counters.
	This game-theoretic perspective provides a unified view of these methods and our own.
	\item We implement our method and several optimization techniques based on dependency analysis to reduce generated constraints.
	We evaluate effectiveness of our implementation and optimization techniques via experiments.
\end{itemize}

\paragraph{Outline.}
This paper is organized as follows.
Section~\ref{sec:preliminaries} introduces first-order fixed-point equation systems, called $\mu$CLPs~\cite{UnnoPOPL2023}, together with the necessary background on parity games and the game semantics of $\mu$CLPs.
Section~\ref{sec:validity-checking} presents a novel, sound, and complete $\mu$-to-$\nu$ transformation for $\mu$CLPs based on parity relations.
Section~\ref{sec:comparison-popl23} relates the game semantics of $\mu$CLPs to the transformation of~\cite{UnnoPOPL2023} via well-founded relations.
Section~\ref{sec:comparison-sas19} interprets the transformation of~\cite{KobayashiSAS2019} in terms of game semantics with counters.
Section~\ref{sec:comparison-popl25} relates the game semantics of $\mu$CLPs to the game introduced in~\cite{TsukadaPOPL2025} via disjunctively well-founded relations.
Section~\ref{sec:implementation} describes efficient reduction techniques from $\mu$CLPs to constraint solving problems.
Section~\ref{sec:experiments} presents experimental results.
Finally, Section~\ref{sec:related-work} discusses related work, and Section~\ref{sec:conclusion} concludes the paper.

%% file: prelim.tex
\section{Preliminaries}\label{sec:preliminaries}

\subsection{$\mu$CLP: First-Order Fixed-Point Equation Systems}

Following~\cite{UnnoPOPL2023}, we define $\mu$CLPs, which are fixed-point equation systems over first-order predicate variables.
Then, we define the validity checking problem, which is the target problem of this paper.

\begin{definition}
	\emph{Terms} $t$ and \emph{(first-order) formulas} $\phi$ are defined as follows.
	\begin{align}
		t \ &\coloneqq\ x \mid f(t_1, \dots, t_{\arity(f)}) \\
		\phi, \psi\  &\coloneqq\ X(t_1, \dots, t_{\arity(X)}) \mid p(t_1, \dots, t_{\arity(p)}) \mid \lnot \phi \mid \phi \land \psi\mid \phi \lor \psi \mid \forall x. \phi \mid \exists x. \phi
	\end{align}
	Here, $x$ is a term variable, $f$ is a function symbol, $X \in \mathcal{X}$ is a predicate variable, and $p$ is a predicate symbol.
	We write $\arity(f), \arity(X), \arity(p) \in \mathbb{N}$ for their arities and $\sort(f), \sort(X), \sort(p)$ for their sorts.
	For each function symbol $f$, the sort is of the form $\sort(f) = \tilde{s} \to s'$ where $\tilde{s} = (s_1, \dots, s_{\arity(f)})$ is a list of sorts for the arguments of $f$ and $s'$ is the sort of the result.
	Similarly, for each predicate variable $X$ (predicate symbol $p$), the sort is of the form $\sort(X) = \tilde{s} \to \propsort$ ($\sort(p) = \tilde{s} \to \propsort$) where $\propsort$ is the sort of propositions (i.e., boolean values).
\end{definition}
We always assume that each term variable $x$ is associated with a sort $\sort(x)$ and that function symbols, predicate variables, and predicate symbols are applied to terms of appropriate sorts.
The set of free term variables and free predicate variables in $\phi$ is denoted by $\FreeTermVars(\phi)$ and $\FreePredVars(\phi)$, respectively.
We write $\phi[t/x]$ for the formula obtained by substituting a term $t$ for a term variable $x$ in $\phi$, and $\phi[\lambda (y_1, \dots, y_{\arity(X)}). \psi/X]$ for the formula obtained by substituting a formula $\psi[t_1/y_1, \dots, t_{\arity(X)}/y_{\arity(X)}]$ for each occurrence of $X(t_1, \dots, t_{\arity(X)})$ in $\phi$.

\begin{definition}
	A \emph{$\mu$}CLP is a list of equations of the following form.
	\begin{equation}
		X_1(\tilde{x}_1) =_{\eta_1} \phi_1; \qquad \dots; \qquad X_n(\tilde{x}_n) =_{\eta_n} \phi_n
		\label{eq:equation-system}
	\end{equation}
	Here, for each $i$, (a) $\tilde{x}_i = (x_1, \dots, x_{\arity(X_i)})$ is a list of term variables, (b) $\phi_i$ is a first-order formula such that every occurrence of predicate variables in $\phi_i$ is positive, and (c) $\eta_i \in \{ \mu, \nu \}$ is a symbol indicating either least or greatest fixed point.
	Note that the positive-occurrence condition is necessary to ensure the monotonicity of $\phi_i$ with respect to each predicate variable.
	If $\eta_i = \nu$ (resp.\ $\eta_i = \mu$), then we call the $i$-th equation a \emph{$\nu$-equation} (resp.\ \emph{$\mu$-equation}).
	In this paper, we always assume that $\mu$CLPs are \emph{closed}, i.e., $\FreeTermVars(\phi_i) \subseteq \tilde{x}_i$ and $\FreePredVars(\phi_i) \subseteq \{ X_1, \dots, X_n \}$ for each $i$.
\end{definition}
A $\mu$CLP~\eqref{eq:equation-system} defines for each predicate variable $X_i$, a set $X_i^{*}$ of values of sort $\tilde{s}_i$ where $\sort(X_i) = \tilde{s}_i \to \propsort$.
We call $(X_1^{*}, \dots, X_n^{*})$ the \emph{solution} of the $\mu$CLP.
The formal definition (Definition~\ref{def:solution}) will be explained later in Section~\ref{sec:semantics-muCLP}.

\begin{definition}\label{def:muCLP-with-query}
	A \emph{query} for a $\mu$CLP~\eqref{eq:equation-system} is a first-order formula $\psi$ such that $\FreePredVars(\psi) \subseteq \{ X_1, \dots, X_n \}$.
	A query $\psi$ is \emph{closed} if $\FreeTermVars(\psi) = \emptyset$.
	Given a pair of a $\mu$CLP $E$ and a closed query $\psi$, we say $(E, \psi)$ is \emph{valid}, written $\models (E, \psi)$, if the solution of $E$ satisfies $\psi$.
\end{definition}

$\mu$CLPs with a query can be seen as an alternative presentation of formulas of a first-order fixed-point logic with nested fixed point operators~\cite{BradfieldCSL1999,KobayashiSAS2019}.
Below, we consider a first-order fixed-point logic with the least and greatest fixed-point operators, written as $\mu X. \lambda \tilde{x}. \phi$ and $\nu X. \lambda \tilde{x}. \phi$, respectively, where $\sort(X) = \tilde{s} \to \propsort$ and $\sort(\tilde{x}) = \tilde{s}$.
We can inductively define a transformation $\mathrm{toFFL}$ from $\mu$CLPs to such nested fixed-points as follows.
\begin{align}
	&\mathrm{toFFL}(\emptyset, \psi) &&\coloneqq &&\psi \\\
	&\mathrm{toFFL}(E'; X(\tilde{x}) =_{\eta} \phi,\ \psi) &&\coloneqq &&\mathrm{toFFL}(E'[\eta X. \lambda \tilde{x}. \phi/X],\ \psi[\eta X. \lambda \tilde{x}. \phi/X])
\end{align}
Via the transformation $\mathrm{toFFL}$, the rightmost equation in the $\mu$CLP corresponds to the innermost fixed-point operator in the first-order fixed-point logic.
Hence, the solution of a $\mu$CLP depends on the order of equations.
The definition of the solution of $\mu$CLPs, which we will define in Section~\ref{sec:semantics-muCLP}, is consistent with this transformation.

\begin{example}\label{ex:muCLP-to-FFL}
	Consider the following $\mu$CLP over predicate variables $X_1 : \mathtt{int} \to \propsort$ and $X_2 : (\mathtt{int}, \mathtt{int}) \to \propsort$.
	\begin{equation}
		X_1(x) =_{\nu} X_1(x + 1) \land X_2(x, 0); \qquad X_2(x, y) =_{\mu} x = y \lor X_2(x, y + 1)
		\label{eq:example-muclp}
	\end{equation}
	The above $\mu$CLP with the query $X_1(1)$ is transformed into the following formula in the first-order fixed-point logic.
	\[ \Big(\nu X_1. \lambda x. X_1(x + 1) \land \big(\mu X_2. \lambda (x, y). y = x \lor X_2(x, y + 1)\big)(x, 0)\Big)(1) \]
\end{example}

Technically, the validity checking problem we consider is defined (in Definition~\ref{def:muCLP-with-query}) for a $\mu$CLP with a closed query, but we often focus on the $\mu$CLP itself without a query.
This is because if a query is monotone (i.e., any occurrence of predicate variables in $\psi$ is positive), then checking the validity of $(E, \psi)$ can be reduced to finding a lower bound of the solution of $E$ that satisfies $\psi$.
\begin{lemma}\label{lem:lower-bound-query}
	Let $(E, \psi)$ be a $\mu$CLP with a closed query such that any occurrence of predicate variables in $\psi$ is positive.
	Then, we have $\models (E, \psi)$ if and only if there exists a lower bound $(\mathcal{A}_1, \dots, \mathcal{A}_n)$ of the solution of $E$ such that $(\mathcal{A}_1, \dots, \mathcal{A}_n)$ satisfies $\psi$.
	\qed
\end{lemma}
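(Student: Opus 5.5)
The plan is to prove the two directions separately. Both rest on one observation: a first-order formula in which predicate variables occur only positively is monotone in its predicate interpretation. I will make this precise as an auxiliary monotonicity claim. Fix a structure for the term-level symbols. Let $(\mathcal{A}_1, \dots, \mathcal{A}_n)$ and $(\mathcal{B}_1, \dots, \mathcal{B}_n)$ be tuples with $\mathcal{A}_i \subseteq \mathcal{B}_i$ for every $i$, and let $\rho$ be a term-variable environment. If $\psi$ has only positive occurrences of predicate variables and $(\mathcal{A}_1, \dots, \mathcal{A}_n)$ satisfies $\psi$ under $\rho$, then so does $(\mathcal{B}_1, \dots, \mathcal{B}_n)$.

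I will prove the claim by structural induction on $\psi$, strengthened in the usual way. The induction hypothesis states monotonicity for subformulas in which every predicate occurrence is positive, and antitonicity for subformulas in which every occurrence is negative. This strengthening is needed because $\lnot$ flips polarity, so a positive occurrence may sit under two negations. The cases are then routine. An atom $X_i(\tilde t)$ is monotone because $\mathcal{A}_i \subseteq \mathcal{B}_i$. An atom $p(\tilde t)$ does not depend on the predicate interpretation at all. The cases $\land$, $\lor$, $\forall$ and $\exists$ preserve both monotonicity and antitonicity pointwise in $\rho$, and $\lnot$ swaps the two. Alternatively, I could first push negations inward to negation normal form; positivity then means that no $X_i$ appears under $\lnot$, and the induction needs only the monotone case. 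Either route works, and I expect this bookkeeping about polarity to be the only real obstacle, since it is where the notion of "positive occurrence" has to be pinned down carefully.

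With the claim in hand, the two directions are short, and I take "lower bound" to mean componentwise inclusion $\mathcal{A}_i \subseteq X_i^{*}$.

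For the ($\Rightarrow$) direction, I take $(\mathcal{A}_1, \dots, \mathcal{A}_n) = (X_1^{*}, \dots, X_n^{*})$. The solution is trivially a lower bound of itself, and it satisfies $\psi$ by the definition of $\models (E, \psi)$ in Definition~\ref{def:muCLP-with-query}.

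For the ($\Leftarrow$) direction, suppose $(\mathcal{A}_1, \dots, \mathcal{A}_n)$ satisfies $\psi$ and $\mathcal{A}_i \subseteq X_i^{*}$ for all $i$. The monotonicity claim, applied with $\mathcal{B}_i = X_i^{*}$ and any $\rho$ (the environment is irrelevant because $\psi$ is closed), gives that the solution satisfies $\psi$, i.e.\ $\models (E, \psi)$.

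Nothing about the fixed-point structure of $E$ is used. The solution, formally given later in Definition~\ref{def:solution}, enters only as some tuple of sets, so the proof does not depend on the details of that definition.
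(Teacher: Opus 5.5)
Your proof is correct, and it is the argument the paper leaves implicit, since the paper states this lemma without proof: for ($\Rightarrow$) you take the solution itself as the witness, and for ($\Leftarrow$) you use monotonicity of positive formulas in the predicate interpretation. Your strengthened induction (monotone for all-positive subformulas, antitone for all-negative ones, uniformly over term environments) correctly handles nested negations and the quantifier cases.
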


If a predicate variable $X$ appears negatively in a query $\psi$, then we need a lower bound of the complement the solution for $X$ (or equivalently, an upper bound of the solution for $X$).
In this case, we consider the dual $\mu$CLP.

\begin{definition}
	Given a $\mu$CLP as in~\eqref{eq:equation-system}, we define its \emph{dual} as the list of equations obtained by replacing the $i$-th equation $X_i(\tilde{x}_i) =_{\eta_i} \phi_i$ with the following equation for each $i$ where $\lnot \mu = \nu$ and $\lnot \nu = \mu$.
	\[ X_i^{\lnot} \quad=_{\lnot \eta_i}\quad \lnot \phi_i[\lnot X_1^{\lnot}/X_1, \dots, \lnot X_n^{\lnot}/X_n] \]
\end{definition}

\begin{lemma}
	The solution of the dual $\mu$CLP is the complement of the solution of the original $\mu$CLP.
	\qed
\end{lemma}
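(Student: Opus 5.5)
We prove the statement by induction on the number $n$ of equations, following the inductive definition of the solution (Definition~\ref{def:solution}). That definition solves the rightmost equation first, as a fixed point parameterised by the earlier predicates, substitutes it back, and recurses. Equivalently, we can argue through $\mathrm{toFFL}$, which translates a $\mu$CLP into nested fixed-point operators; in that setting we prove the corresponding fact for formulas. The key semantic ingredient is the lattice-theoretic duality
\[ \lnot\, \mu \mathcal{X}. F(\mathcal{X}) \;=\; \nu \mathcal{Y}. \lnot F(\lnot \mathcal{Y}), \qquad \lnot\, \nu \mathcal{X}. F(\mathcal{X}) \;=\; \mu \mathcal{Y}. \lnot F(\lnot \mathcal{Y}), \]
valid for any monotone $F$ on a powerset lattice. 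Here $\lnot$ is complementation, which is an order-reversing involution.

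\textbf{Step 1: the duality.} The map $G(\mathcal{Y}) = \lnot F(\lnot \mathcal{Y})$ is monotone. Complementation is a bijection between fixed points of $F$ and fixed points of $G$, and it reverses the order. Hence it sends the least fixed point of $F$ to the greatest fixed point of $G$, and vice versa. As a consequence, $\lnot\phi_i[\lnot X_1^{\lnot}/X_1,\dots]$ still has only positive occurrences of the $X_j^{\lnot}$: each $X_j$ occurs positively in $\phi_i$, and in the dual it appears under two negations.

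\textbf{Step 2: induction, parameterised by an environment.} We strengthen the claim. For every valuation $\rho$ of free predicate variables, the solution of the dual system under $\lnot\rho$ is the complement of the solution of the original system under $\rho$. Open systems arise once the last equation is eliminated, which is why this strengthening is needed.

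For the inductive step, write the system as $E'; X_n(\tilde{x}_n) =_{\eta_n} \phi_n$.
\begin{enumerate}
\item The solution for $X_n$ is $\eta_n \mathcal{X}.\, \lambda\tilde{x}_n.\, \phi_n$, taken as a function of the earlier predicates $X_1,\dots,X_{n-1}$. By Step 1, its complement is $\lnot\eta_n \mathcal{Y}.\, \lnot\phi_n[\lnot\mathcal{Y}/X_n]$ evaluated with the complemented earlier predicates. This is exactly the solution of the last dual equation.
\item Substituting back, the residual dual system $E'^{\lnot}$ with $X_n^{\lnot}$ replaced by its solution coincides with the dual of the residual original system. Applying the induction hypothesis to this residual system finishes the argument.
\end{enumerate}

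To make the second point precise we need a substitution lemma. Negating a formula and replacing each predicate by the negation of its dual commutes with substituting a fixed-point solution for $X_n$. This follows from the fact that semantics commutes with substitution, combined with the double-negation identity $\lnot\lnot\mathcal{A} = \mathcal{A}$.

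\textbf{Main obstacle.} The main difficulty is the bookkeeping in Step 2: checking that the dual of the reduced system is the reduced dual system under the complemented environment. Everything hinges on the observation that the operation $\phi \mapsto \lnot\phi[\lnot X/X]$ is an involution, up to semantic equivalence, and commutes with substitution. I would first state and verify this syntactic/semantic lemma by structural induction on formulas, using De Morgan laws for the connectives and quantifiers. The fixed-point part is then immediate from Step 1.
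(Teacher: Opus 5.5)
Your proof is correct. It is the standard argument the paper leaves implicit (the lemma is stated with a bare \qed): the fixed-point duality $\lnot\mu\mathcal{X}.F(\mathcal{X})=\nu\mathcal{Y}.\lnot F(\lnot\mathcal{Y})$, propagated through the right-to-left elimination of Definition~\ref{def:solution}. You can avoid the bookkeeping you flag (residual systems that are no longer first-order $\mu$CLPs, and the substitution lemma) by proving $\interpret{\lnot\phi_i[\lnot X_1^{\lnot}/X_1,\dots,\lnot X_n^{\lnot}/X_n]}^{(j)}(\mathcal{B}_1,\dots,\mathcal{B}_j)=\lnot\interpret{\phi_i}^{(j)}(\lnot\mathcal{B}_1,\dots,\lnot\mathcal{B}_j)$ directly by downward induction on $j$, which gives the claim at $j=0$.
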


Therefore, even if a query contains both positive and negative occurrences of predicate variables, we can reduce the validity checking problem to the case where the query contains only positive occurrences of predicate variables as follows.
First, we transform the query into the negation normal form.
Then, for each occurrence of a negated predicate variable $\lnot X_i$, we replace it with a new predicate variable $X_i^{\lnot}$ and extend the original $\mu$CLP with the dual $\mu$CLP for $X_i^{\lnot}$.
In the rest of the paper, we assume that queries contain only positive occurrences of predicate variables without loss of generality and focus on the problem of finding a lower bound of the solution of a $\mu$CLP that satisfies a given query.

\subsection{Semantics of Equation Systems}\label{sec:semantics-muCLP}

Let $\interpret{-}$ be an interpretation of sorts, function symbols, and predicate symbols.
Specifically, for each sort $s$, $\interpret{s}$ is a set of values of sort $s$.
We assume that the sort of propositions is interpreted as $\interpret{\propsort} = \{ \False, \True \}$.
For each function symbol $f$, $\interpret{f}$ is a function of type $\interpret{\sort(f)}$, and for each predicate symbol $p$, $\interpret{p}$ is a predicate of type $\interpret{\sort(p)}$ where $\interpret{-}$ is extended to lists of sorts by $\interpret{(s_1, \dots, s_k)} = \interpret{s_1} \times \cdots \times \interpret{s_k}$ and to functions by $\interpret{\tilde{s} \to s'} = \interpret{\tilde{s}} \to \interpret{s'}$.

We extend the interpretation to terms and first-order formulas in the standard way \cite{UnnoPOPL2023}.
For each term $t$ of sort $s$ with $\FreeTermVars(t) \subseteq \{ x_1, \dots, x_k \}$, the interpretation $\interpret{t}$ is defined as a function $\interpret{t} : \interpret{\sort(x_1)} \times \dots \times \interpret{\sort(x_k)} \to \interpret{s}$.
For each first-order formula $\phi$ such that $\FreePredVars(\phi) \subseteq \{ X_1, \dots, X_n \}$ and $\FreeTermVars(\phi) \subseteq \{ x_1, \dots, x_k \}$ where $\sort(X_i) = \tilde{s}_i \to \propsort$, the interpretation $\interpret{\phi}$ is defined as a function of the following type.
\[ \interpret{\phi} : \interpret{\sort(X_1)} \times \cdots \times \interpret{\sort(X_n)} \to \interpret{\sort(x_1)} \times \dots \times \interpret{\sort(x_k)} \to \{ \False, \True \} \]
We write $\mathcal{A}_1, \dots, \mathcal{A}_n; \tilde{v} \models \phi$ if $\interpret{\phi}(\mathcal{A}_1, \dots, \mathcal{A}_n)(\tilde{v}) = \True$ and $\mathcal{A}_1, \dots, \mathcal{A}_n \models \phi$ if $\mathcal{A}_1, \dots, \mathcal{A}_n; \tilde{v} \models \phi$ for every $\tilde{v}$.
We often identify $(\mathcal{A}_1, \dots, \mathcal{A}_n) \in \interpret{\sort(X_1)} \times \cdots \times \interpret{\sort(X_n)}$ with an assignment $\Theta = [X_1 \mapsto \mathcal{A}_1, \dots, X_n \mapsto \mathcal{A}_n]$ from predicate variables to their interpretations.

\begin{definition}[solution]\label{def:solution}
	Suppose we are given a $\mu$CLP as in~\eqref{eq:equation-system}.
	Note that the interpretation of the right-hand side of the $i$-th equation is given as a monotone function of the following type.
	\[ \interpret{\phi_i} : \interpret{\sort(X_1)} \times \cdots \times \interpret{\sort(X_n)} \to \interpret{\sort(X_i)} \]
	Here, we consider the order on $\{ \False, \True \}$ defined as $\False \le \True$ and extend it to the pointwise order on $\interpret{\sort(X_i)} = \interpret{\tilde{s_i}} \to \{ \False, \True \}$ for each $i$, which makes $\interpret{\sort(X_i)}$ a complete lattice.
	The solution of the $\mu$CLP is defined by solving the equations from right ($i = n$) to left ($i = 1$).
	Formally, we define $\interpret{\phi_i}^{(j)} : \interpret{\sort(X_1)} \times \cdots \times \interpret{\sort(X_j)} \to \interpret{\sort(X_i)}$ for each $i, j \in \{ 1, \dots, n \}$ by induction on $j$ from $j = n$ to $j = 0$ as follows.
	For the base case $j = n$, we define $\interpret{\phi_i}^{(n)} \coloneqq \interpret{\phi_i}$.
	For the step case, we define $\interpret{\phi_i}^{(j - 1)}$ from $\interpret{\phi_i}^{(j)}$ as follows.
	\[ \interpret{\phi_i}^{(j - 1)}(\mathcal{A}_1, \dots, \mathcal{A}_{j - 1})\ \coloneqq\ \interpret{\phi_i}^{(j)}(\mathcal{A}_1, \dots, \mathcal{A}_{j - 1}, \eta_j X_j. \interpret{\phi_j}^{(j)}(\mathcal{A}_1, \dots, \mathcal{A}_{j - 1}, X_j)) \]
	Here, for each monotone function $f : L \to L$ on a complete lattice $L$, we write $\mu X. f(X)$ (resp.\ $\nu X. f(X)$) for the least (resp.\ greatest) fixed point of $f$.
	The \emph{solution} of the $\mu$CLP is defined as $(X_1^{*}, \dots, X_n^{*}) = (\interpret{\phi_1}^{(0)}(), \dots, \interpret{\phi_n}^{(0)}())$.
	We often identify $X_i^{*} : \interpret{\tilde{s}_i} \to \{ \False, \True \}$ with the subset $\{ \tilde{x} \in \interpret{\tilde{s}_i} \mid X_i^{*}(\tilde{x}) = \True \} \subseteq \interpret{\tilde{s}_i}$.
\end{definition}

\begin{example}
	Recall the $\mu$CLP~\eqref{eq:example-muclp} in Example~\ref{ex:muCLP-to-FFL}.
	Following Definition~\ref{def:solution}, the solution is given as follows.
	Step 1: We solve the rightmost equation for $X_2$ and obtain $X_2^{*} = \{ (x, y) \mid y \geq x \}$.
	Step 2: We substitute $X_2^{*}$ for $X_2$ in the equation for $X_1$ and solve it to obtain $X_1^{*} = \{ x \mid x \ge 0 \}$.
	Note that we already have the solution $X_2^{*}$ at Step 1 in this specific example because the right-hand side of the equation for $X_2$ does not contain $X_1$.
	However, this is not the case in general: the solution of $X_2$ at Step 1 may contain $X_1$, and we need to substitute the solution of $X_1$ at Step 2 to obtain the final solution of $X_2$.
\end{example}

\subsection{Parity Graphs and Parity Games}

\begin{definition}[parity condition]
	A \emph{priority function} on a set $\mathcal{D}$ is a function $p : \mathcal{D} \to \Range{1}{n}$ that assigns a natural number (called a \emph{priority}) to each element of $\mathcal{D}$.
	A \emph{parity condition} is a pair $(p, I)$ of a priority function $p : \mathcal{D} \to \Range{1}{n}$ and a set $I \subseteq \Range{1}{n}$ of priorities.
	Let $\mathcal{D}^{\omega}$ be the set of infinite sequences over $\mathcal{D}$.
	An infinite sequence $s \in \mathcal{D}^{\omega}$ satisfies the parity condition $(p, I)$ if the minimum priority that appears infinitely often in the sequence is in $I$.
	We write the set of infinite sequences over $\mathcal{D}$ that satisfy the parity condition $(p, I)$ as $\mathbf{Parity}(p, I)$.
	\[ \mathbf{Parity}(p, I) \quad\coloneqq\quad \{ s \in \mathcal{D}^{\omega} \mid \min \mathrm{Inf}(p(s)) \in I \} \]
	Here, $p(s_0 s_1 \cdots) = p(s_0) p(s_1) \cdots$ and $\mathrm{Inf}(p_0 p_1 \cdots) \coloneqq \{ k \in \Range{1}{n} \mid k = p_i \text{ for infinitely many $i$} \}$.
\end{definition}

\begin{example}\label{ex:parity-condition-muclp}
	Given a $\mu$CLP as in~\eqref{eq:equation-system}, we often consider the parity condition $(p, I_{\nu})$ defined as follows.
	We define a set $\mathcal{D}$ as the disjoint union of the domain of each predicate variable.
	That is, $\mathcal{D} \coloneqq \bigsqcup_{i = 1}^n \mathcal{D}_i$ where $\sort(X_i) = \tilde{s_i} \to \propsort$ and $\mathcal{D}_i = \interpret{\tilde{s_i}}$ for each $i$.
	The priority function $p : \mathcal{D} \to \Range{1}{n}$ is defined by mapping each element in $\mathcal{D}_i$ to its index $i$.
	The set $I_{\nu} \subseteq \Range{1}{n}$ is defined as the set of indices of $\nu$-equations $I_{\nu} \coloneqq \{ i \mid \eta_i = \nu \}$.
\end{example}

Note that in the literature, it is common to fix $I$ to the set of even numbers $\mathbf{Even}_{\le n} = \{ 2, 4, 6, \ldots \} \subseteq \Range{1}{n}$.
We can always reduce the general case to this special case by reassigning priorities with the following lemma.
\begin{lemma}
	Let $(p, I)$ be a parity condition with $p : \mathcal{D} \to \Range{1}{n}$ and $\sigma : \Range{1}{n} \to \Range{1}{n'}$ be a monotone function.
	If $\sigma^{-1}(\sigma(I)) \subseteq I$ (i.e., $\sigma(i) \in \sigma(I)$ if and only if $i \in I$), then we have $\mathbf{Parity}(p, I) = \mathbf{Parity}(\sigma \circ p, \sigma(I))$.
	Moreover, for any parity condition $(p, I)$, there exists a monotone function $\sigma$ satisfying the above condition such that $\sigma(I)$ is the set of even priorities.
	\qed
\end{lemma}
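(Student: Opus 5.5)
The proof has two parts: an invariance claim for monotone reassignments and an existence claim. I will do them in that order.

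\emph{Invariance.} Fix $s \in \mathcal{D}^{\omega}$ and let $m \coloneqq \min \mathrm{Inf}(p(s))$. This minimum exists because $\mathrm{Inf}(p(s))$ is a nonempty subset of the finite set $\Range{1}{n}$ (pigeonhole).

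First I show $\mathrm{Inf}((\sigma \circ p)(s)) = \sigma(\mathrm{Inf}(p(s)))$. Each priority $k'$ occurring infinitely often in $\sigma(p(s))$ has the finite preimage $\sigma^{-1}(k')$. Hence some $k$ with $\sigma(k) = k'$ occurs infinitely often in $p(s)$, which gives $\subseteq$. The inclusion $\supseteq$ is immediate.

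Since $\sigma$ is monotone, $\sigma(m) \le \sigma(k)$ for every $k \in \mathrm{Inf}(p(s))$. Therefore $\min \mathrm{Inf}(\sigma(p(s))) = \sigma(m)$.

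It remains to show $m \in I \iff \sigma(m) \in \sigma(I)$. The forward direction is trivial. The backward direction is exactly the hypothesis $\sigma^{-1}(\sigma(I)) \subseteq I$. This gives $\mathbf{Parity}(p, I) = \mathbf{Parity}(\sigma \circ p, \sigma(I))$.

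\emph{Existence.} I construct $\sigma$ by induction along $1, \dots, n$. Define $\sigma(1) \coloneqq 2$ if $1 \in I$ and $\sigma(1) \coloneqq 1$ otherwise. For $i > 1$, set $\sigma(i) \coloneqq \sigma(i-1)$ if the indicators agree, i.e.\ $i \in I \iff i-1 \in I$. Otherwise set $\sigma(i) \coloneqq \sigma(i-1) + 1$.

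By construction $\sigma$ is monotone. By induction on $i$, $\sigma(i)$ is even exactly when $i \in I$: the parity of $\sigma$ flips precisely when membership in $I$ flips, and the base case is chosen correctly.

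Consequently $\sigma(i) \in \sigma(I)$ forces $\sigma(i)$ to be even, hence $i \in I$, so the hypothesis $\sigma^{-1}(\sigma(I)) \subseteq I$ holds. The values of $\sigma$ are bounded by $n' \coloneqq n+1$.

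Finally, I need $\sigma(I)$ to equal the set of even priorities in $\Range{1}{n'}$. Since $\sigma$ takes consecutive values starting at $1$ or $2$, every even number up to $\max \sigma$ is hit, and each such value lies in $\sigma(I)$. So choosing $n' \coloneqq \max \sigma$ (or $\max\sigma$ rounded as needed) makes $\sigma(I) = \mathbf{Even}_{\le n'}$.

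The only mildly delicate step is the bookkeeping on $n'$: the codomain must be chosen so that no even priority in $\Range{1}{n'}$ lies outside the image of $\sigma$. The remaining steps are routine.
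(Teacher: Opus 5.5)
Your proof is correct. The paper states this lemma without proof, treating it as routine, and your argument is the standard way to fill it in: you show $\min \mathrm{Inf}(\sigma(p(s))) = \sigma(\min \mathrm{Inf}(p(s)))$ from finiteness of preimages plus monotonicity, and you build $\sigma$ by incrementing exactly when membership in $I$ flips. Your handling of the codomain is also right: $n' \coloneqq \sigma(n)$ works exactly with no rounding, whereas the cruder bound $n' = n+1$ could leave an even priority outside $\sigma(I)$.
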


Below, we recall the definitions of parity graphs, parity games, and parity progress measures.
The definitions mostly follow~\cite{JurdzinskiSTACS2000}, except that we allow infinite graphs and games and do not assume that every vertex has a successor, to accommodate later applications to the game semantics of $\mu$CLPs.

\begin{definition}
	A \emph{parity graph} $(V, E, p, I)$ is a (possibly infinite) directed graph $(V, E)$ equipped with a parity condition $(p, I)$ on $V$.
\end{definition}

Let $\mathbf{Ord}$ be the class of ordinals.
We write $\le_{\mathrm{lex}}$ for the (non-strict) lexicographic order on $\mathbf{Ord}^n$ for any natural number $n$.
The truncation of a tuple $(\alpha_1, \dots, \alpha_n) \in \mathbf{Ord}^n$ at $i$ is written as $(\alpha_1, \dots, \alpha_n) \downharpoonright i \coloneqq (\alpha_1, \dots, \alpha_i) \in \mathbf{Ord}^i$.
For each $i \in \Range{1}{n}$, we define a relation $\le_i$ on $\mathbf{Ord}^n$ as follows.
\[ (\alpha_1, \dots, \alpha_n) \le_i (\beta_1, \dots, \beta_n) \quad\coloneqq\quad (\alpha_1, \dots, \alpha_n) \downharpoonright i \le_{\mathrm{lex}} (\beta_1, \dots, \beta_n) \downharpoonright i \]
We also define the strict version $<_i$ of $\le_i$ in the same way.
For any $I \subseteq \Range{1}{n}$, we define a binary relation $\le_i^{I}$ on $\mathbf{Ord}^n$ by $\tilde{\alpha} \le_i \tilde{\beta}$ if $i \in I$ and $\tilde{\alpha} <_i \tilde{\beta}$ if $i \notin I$.

\begin{definition}\label{def:parity-progress-measure}
	A \emph{parity progress measure} on a parity graph $(V, E, p, I)$ with $p : V \to \Range{1}{n}$ is a function $r : V \to \mathbf{Ord}^n$ if for all $(v, w) \in E$, we have $r(v) \ge_{p(v)}^I r(w)$.
\end{definition}

\begin{proposition}\label{prop:parity-progress-measure}
	\cite[Proposition 4]{JurdzinskiSTACS2000}
	If a parity graph $(V, E, p, I)$ has a parity progress measure, then any infinite walk $v_0 v_1 \cdots \in V^{\omega}$ in the graph satisfies the parity condition $(p, I)$.
	\qed
\end{proposition}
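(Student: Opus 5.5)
We argue by contradiction. Fix a parity progress measure $r : V \to \mathbf{Ord}^n$ and an infinite walk $v_0 v_1 \cdots$ in $(V, E, p, I)$. Let $k \coloneqq \min \mathrm{Inf}(p(v_0 v_1 \cdots))$, which exists because $p$ takes values in the finite set $\Range{1}{n}$. Assume $k \notin I$; we will derive a contradiction from the well-foundedness of the lexicographic order on $\mathbf{Ord}^k$.

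First we remove a finite prefix. Choose $N$ such that $p(v_j) \ge k$ for all $j \ge N$; this is possible because priorities smaller than $k$ occur only finitely often. For every $j \ge N$ the edge condition of Definition~\ref{def:parity-progress-measure} gives $r(v_j) \ge_{p(v_j)}^{I} r(v_{j+1})$, hence at least $r(v_j) \ge_{p(v_j)} r(v_{j+1})$. Since $p(v_j) \ge k$ and truncation is monotone for the lexicographic order, we obtain $r(v_j) \downharpoonright k \ge_{\mathrm{lex}} r(v_{j+1}) \downharpoonright k$. This step needs care: if $\tilde\alpha \downharpoonright i \ge_{\mathrm{lex}} \tilde\beta \downharpoonright i$ with $i \ge k$, then either the two tuples agree on the first $k$ components or they first differ at some position, and in both cases $\tilde\alpha \downharpoonright k \ge_{\mathrm{lex}} \tilde\beta \downharpoonright k$.

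Next consider the positions $j \ge N$ with $p(v_j) = k$. There are infinitely many of them, by the choice of $k$. At each such position $k \notin I$, so the edge condition is strict: $r(v_j) >_k r(v_{j+1})$, that is, $r(v_j) \downharpoonright k >_{\mathrm{lex}} r(v_{j+1}) \downharpoonright k$. Therefore the sequence $(r(v_j) \downharpoonright k)_{j \ge N}$ is non-increasing in $\le_{\mathrm{lex}}$ and decreases strictly infinitely often. Extracting the strict decreases yields an infinite strictly descending chain in $(\mathbf{Ord}^k, <_{\mathrm{lex}})$.

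Finally, this contradicts the fact that the lexicographic order on a finite product of well-orders is well-founded. We prove that fact by induction on $k$. Given an infinite descending chain, its first components form a non-increasing sequence of ordinals, which must eventually be constant. From that point on, the remaining $k-1$ components form an infinite descending chain, contradicting the induction hypothesis. Hence $k \in I$, and the walk satisfies the parity condition $(p, I)$.

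The main obstacle is the truncation monotonicity step. It is routine, but we must be careful that the edge condition compares only up to $p(v_j)$, which can exceed $k$, and still yields a non-increasing sequence at level $k$.
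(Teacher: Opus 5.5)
Your proof is correct, and it is the standard argument behind \cite[Proposition 4]{JurdzinskiSTACS2000}, which the paper cites without reproducing: if the least priority $k$ occurring infinitely often were outside $I$, then the truncations $r(v_j) \downharpoonright k$ would eventually be non-increasing with infinitely many strict drops, contradicting the well-foundedness of $<_{\mathrm{lex}}$ on $\mathbf{Ord}^k$. Your treatment of edges with $p(v_j) > k$ (they still give $\ge_{\mathrm{lex}}$ at level $k$) and your extraction of a strictly descending subsequence are both sound.
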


\begin{definition}
	A \emph{parity game} $G = (V, E, p, I, V_1, V_2)$ is a parity graph $(V, E, p, I)$ equipped with a partition $V = V_1 \cup V_2$ of the set of vertices (or positions) into those of Player 1 (``Verifier'') and Player 2 (``Falsifier'').
	A \emph{play} in a parity game is a finite or infinite walk $v_0 v_1 \cdots$ in the graph.
	A play is \emph{maximal} if it is infinite or ends at a vertex with no successor.
	We write $\Play{G}$ for the set of plays in $G$ and $\MaximalPlay{G}$ for the set of maximal plays in $G$.
	We also write $\Play[v]{G}$ for the set of plays in $G$ starting from a vertex $v \in V$ and similarly for $\MaximalPlay[v]{G}$.
	A maximal play is \emph{winning} for Player 1 if (a) it is infinite and satisfies the parity condition $(p, I)$ or (b) it is finite and ends at a vertex in $V_2$ with no successor; otherwise, it is winning for Player 2.
\end{definition}

\begin{definition}\label{def:strategy}
	Let $G = (V, E, p, I, V_1, V_2)$ be a parity game.
	A \emph{(history-dependent) strategy} for Player 1 is a function $s$ defined on the set of finite plays $v_0 \cdots v_k$ ending at a vertex $v_k \in V_1$ with a successor, which assigns to each such play a successor $s(v_0 \cdots v_k)$ of $v_k$.
	If $v_k$ has no successor, then $s(v_0 \cdots v_k)$ is undefined.
	A strategy is \emph{memoryless} if $s(v_0 \cdots v_k)$ depends only on the last vertex $v_k$.
	A play $v_0 v_1 \cdots$ is \emph{consistent} with a strategy $s$ for Player 1 if for each $k$ with $v_k \in V_1$, we have $v_{k + 1} = s(v_0 \cdots v_k)$.
	We write $\StrategyPlay{\sigma}{G}$ for the set of plays in $G$ consistent with strategy $\sigma$ and $\StrategyPlay[v]{\sigma}{G}$ for the set of plays starting from a vertex $v \in V$ that are consistent with $\sigma$.
	A strategy $s$ for Player 1 is \emph{winning} from a vertex $v \in V$ if any maximal play starting from $v$ that is consistent with $s$ is winning for Player 1.
\end{definition}

\begin{definition}
	A \emph{game parity progress measure} on a parity game $G$ is a partial function $r : V \rightharpoonup \mathbf{Ord}^n$ such that for each edge $(v, w) \in E$ such that $r(v)$ is defined,
	(a) if $v \in V_1$, then there exists a successor $w$ of $v$ such that $r(w)$ is defined and $r(v) \ge_{p(v)}^I r(w)$; and
	(b) if $v \in V_2$, then for every successor $w$ of $v$, then $r(w)$ is defined and $r(v) \ge_{p(v)}^I r(w)$.
\end{definition}

\begin{proposition}
	\cite[Corollary 7]{JurdzinskiSTACS2000}
	If a parity game $G$ has a game parity progress measure $r$, then Player 1 has a winning strategy from any vertex $v$ with $r(v)$ defined.
	\qed
\end{proposition}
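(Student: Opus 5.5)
We prove the statement (Jurdziński's Corollary 7 in our generalized setting) by building a memoryless strategy from $r$ and then applying Proposition~\ref{prop:parity-progress-measure} to the graph of plays consistent with it. Let $W \coloneqq \{ v \in V \mid r(v) \text{ defined} \}$. For each $v \in W \cap V_1$ that has a successor, clause (a) gives some successor $w$ with $r(w)$ defined and $r(v) \ge_{p(v)}^I r(w)$; we fix one such $w$ and set $s(v) \coloneqq w$, using the axiom of choice if $V$ is infinite. On vertices outside $W$ we define $s$ arbitrarily. This gives a memoryless strategy $s$ for Player 1.

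The main step is an invariant: every play $v_0 v_1 \cdots$ consistent with $s$ with $v_0 \in W$ stays inside $W$, and every step satisfies $r(v_k) \ge_{p(v_k)}^I r(v_{k+1})$. We prove this by induction on $k$. If $v_k \in V_1$, then $v_{k+1} = s(v_k)$, and the claim holds by the choice of $s$. If $v_k \in V_2$, then clause (b) says that every successor has $r$ defined and satisfies the inequality.

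With the invariant in hand, consider the subgraph $(W, E_s, p|_W, I)$. Its edges $E_s$ are the edges $(v, s(v))$ for $v \in W \cap V_1$ and all edges $(v, w)$ for $v \in W \cap V_2$. By the invariant, $r|_W$ is a parity progress measure on this parity graph in the sense of Definition~\ref{def:parity-progress-measure}. Any infinite play from $v$ consistent with $s$ is an infinite walk in this graph, so by Proposition~\ref{prop:parity-progress-measure} it satisfies $(p, I)$ and is therefore winning for Player 1.

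It remains to handle finite maximal plays. Such a play ends at some $v_k \in W$ with no successor, and we must show $v_k \in V_2$. The subtle point is clause (a) when $v_k \in V_1$ has no successors. The definition of a game parity progress measure quantifies ``for each edge $(v,w)$ such that $r(v)$ is defined'', which is vacuous for a vertex with no outgoing edges. We therefore read clause (a) as a condition on every $v \in V_1$ with $r(v)$ defined: it demands the existence of a suitable successor. Under this reading, a dead-end $v \in V_1$ cannot lie in $W$, so every finite maximal play ends in $V_2$ and is winning for Player 1. We make this reading explicit in the proof, and it is the main obstacle: the argument needs it, since otherwise the statement fails for a dead-end Player 1 vertex that has $r$ defined. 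The rest of the argument is routine.
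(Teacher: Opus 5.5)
Your proof is correct. The paper gives no proof of its own and only cites Jurdzi\'nski; your argument (extract a memoryless strategy from $r$, restrict to the induced subgraph on $\mathrm{dom}(r)$, apply Proposition~\ref{prop:parity-progress-measure} to infinite walks) is essentially the standard proof behind that citation, adapted to infinite and non-total game graphs. Your remark about clause (a) is also right: as literally worded, the definition is vacuous at a Player~1 vertex with no outgoing edges, so a single dead-end Verifier vertex with $r$ defined is a counterexample, and your per-vertex existence reading is exactly the repair the paper's setting needs.
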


\subsection{Game Semantics of $\mu$CLP}

Let $E$ be a $\mu$CLP $\{ X_1(\tilde{x}_1) =_{\eta_1} \phi_1;\ \dots;\ X_n(\tilde{x}_n) =_{\eta_n} \phi_n \}$ and $\mathcal{D}_i = \interpret{\tilde{s}_i}$ for each $i$ where $\sort(X_i) = \tilde{s}_i \to \propsort$.
In what follows, we do not use the tilde notation for tuples of values in $\mathcal{D}_i$ and write $v \in \mathcal{D}_i$ instead of $\tilde{v} \in \mathcal{D}_i$ for simplicity.

The game semantics of $\mu$CLPs is given in terms of parity games called \emph{fixpoint games}~\cite{BaldanPOPL2019,BaldanCONCUR2020}.
We remark that a similar game is also considered in~\cite[Appendix D,E]{KobayashiESOP2018} for higher-order fixed-point logic.

\begin{definition}\label{def:game-semantics}
	Given a $\mu$CLP $E$, we define a parity game $\GameSemantics{E}$ played by Verifier (Player 1) and Falsifier (Player 2) as follows.
	\begin{itemize}
		\item Verifier's positions: $(X_i, v)$ where $i = 1, \dots, n$ and $v \in \mathcal{D}_i$.
		\item Falsifier's positions: a tuple $(\mathcal{A}_1, \dots, \mathcal{A}_n)$ where $\mathcal{A}_j \subseteq \mathcal{D}_j$ for each $j$.
		\item Verifier's moves: $(X_i, v) \to (\mathcal{A}_1, \dots, \mathcal{A}_n)$ such that $\interpret{\phi_i}(\mathcal{A}_1, \dots, \mathcal{A}_n)(v)$ is true.
		\item Falsifier's moves: $(\mathcal{A}_1, \dots, \mathcal{A}_n) \to (X_j, v)$ such that $v \in \mathcal{A}_j$.
		\item Parity condition: $p(X_i, v) = i$, $p(\mathcal{A}_1, \dots, \mathcal{A}_n) = n + 1$, and $I = \{ i \mid \eta_i = \nu \}$.
	\end{itemize}
\end{definition}

The intuition of the game is as follows.
At a Verifier's position $(X_i, v)$, Verifier tries to show $v \in X^{*}_i$ where $(X^{*}_1, \dots, X^{*}_n)$ is the solution of $E$.
To do so, Verifier unfolds the fixed point and chooses an assignment $(\mathcal{A}_1, \dots, \mathcal{A}_n)$ for predicate variables that satisfies the right-hand side formula $\phi_i$ at $v$.
Then, Verifier claims ``$(\mathcal{A}_1, \dots, \mathcal{A}_n)$ is such that for any $j$ and $u \in \mathcal{A}_j$, the solution of $E$ satisfies $u \in X^{*}_j$''.
On the other hand, Falsifier chooses one of the sets $\mathcal{A}_j$ and an element $v \in \mathcal{A}_j$ to challenge Verifier to show $v \in X^{*}_j$.
The parity condition for Verifier's positions is defined as in Example~\ref{ex:parity-condition-muclp}.
The priority of Falsifier's positions is set to $p(\mathcal{A}_1, \dots, \mathcal{A}_n) = n + 1$ so that the parity condition depends only on the priorities of Verifier's positions.
This ensures that when Verifier keeps responding to Falsifier's challenges forever, if $\mu$-equations are unfolded infinitely often, then Verifier loses the game, and if $\nu$-equations are unfolded infinitely often, then Verifier wins the game.
The game semantics coincides with the semantics defined in Section~\ref{sec:semantics-muCLP}.

\begin{theorem}\label{thm:game-semantics}
	\cite[Theorem~4.8]{BaldanPOPL2019}
	Let $(X_1^{*}, \dots, X_n^{*})$ be the solution of $E$.
	Then, $X_i^{*}(v)$ is true if and only if Verifier wins from $(X_i, v)$ in $\GameSemantics{E}$.
	\qed
\end{theorem}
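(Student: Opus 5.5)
We prove both directions by induction on the number $n$ of equations, following the right-to-left unfolding of Definition~\ref{def:solution}. The idea is to treat the innermost equation $X_n(\tilde{x}_n) =_{\eta_n} \phi_n$ as a fixed point over the parameters $\mathcal{A}_1, \dots, \mathcal{A}_{n-1}$. To make the induction go through, we prove a stronger, parametrised statement. For an open $\mu$CLP $E'$ whose right-hand sides may mention \emph{external} predicate variables $Y_1, \dots, Y_m$ with fixed interpretations $\mathcal{B}_1, \dots, \mathcal{B}_m$, we extend $\GameSemantics{E'}$ with terminal Verifier-won positions. Concretely, a Falsifier position may also challenge on a $Y_k$-component, and this succeeds for Verifier iff the challenged element lies in $\mathcal{B}_k$. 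The strengthened claim is that $X_i^{*}(v)$, computed with these parameters, holds iff Verifier wins from $(X_i, v)$. Both sides are monotone in the $\mathcal{B}_k$, which we use freely.

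For the inductive step, write $f(\mathcal{A}_1, \dots, \mathcal{A}_{n-1}) = \eta_n X_n.\, \interpret{\phi_n}^{(n)}(\mathcal{A}_1, \dots, \mathcal{A}_{n-1}, X_n)$. By Definition~\ref{def:solution}, the system equals the $(n-1)$-equation system with $X_n$ replaced by this fixed point. We first treat the single-equation case with parameters, which is the core.

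\emph{Case $\eta = \nu$.} Let $W$ be the set of $v$ from which Verifier wins. The set $W$ is a post-fixed point: Verifier's winning move at $v$ is some $\mathcal{A}$ with $\interpret{\phi}(\mathcal{A})(v)$, and every $u \in \mathcal{A}$ is winning. Shrinking $\mathcal{A}$ to the winning part and using monotonicity gives $v \in \interpret{\phi}(W)$, so $W \subseteq \nu$. Conversely, if $v \in \nu$, Verifier always answers with the fixed point itself. Every infinite play then sees only priority $1 \in I$, and every finite play ends at a Falsifier dead end or a won external challenge.

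\emph{Case $\eta = \mu$.} Here we use the ordinal approximants $\mu^{\alpha}$. If $v \in \mu^{\alpha+1}$, Verifier plays $\mu^{\alpha}$, so the rank strictly decreases and there are no infinite plays; this is exactly a game parity progress measure, and Jurdzinski's corollary applies. Conversely, we show that the complement of $\mu$ is winning for Falsifier. If $v \notin \mu$, then any $\mathcal{A}$ with $v \in \interpret{\phi}(\mathcal{A})$ satisfies $\mathcal{A} \not\subseteq \mu$ by the fixed-point property. Falsifier picks some $u \in \mathcal{A} \setminus \mu$ and keeps this invariant forever, so the play is either infinite with odd minimal priority or stuck at Verifier.

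For the multi-equation step, we combine the induction hypothesis for $E_{<n}$ with the single-equation result for $X_n$, parametrised by $\mathcal{A}_1, \dots, \mathcal{A}_{n-1}$. This requires a \emph{game composition lemma}. Winning in $\GameSemantics{E}$ from $(X_i, v)$ must be equivalent to winning in the game for $E_{<n}$ in which $X_n$-challenges are resolved by the subgame for $X_n$, whose external parameters are the real solutions $X_1^{*}, \dots, X_{n-1}^{*}$.

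The main obstacle is this composition lemma. An infinite play may alternate between the $X_n$-subgame and outer positions infinitely often. In that case the minimum priority seen infinitely often is below $n$, so the outer parity condition decides, which is consistent. A play that eventually stays inside the $X_n$-subgame is decided by $\eta_n$. Strategies, however, must be glued across these phases, and the two directions need care: we must use a Falsifier winning strategy for the negative direction, relying on determinacy (Borel/Martin), or more simply replicate the complement argument above using the dual $\mu$CLP lemma. Since the statement is cited from \cite[Theorem~4.8]{BaldanPOPL2019}, we would follow their approach of proving the composition via the approximant-based progress measures, treating the remaining bookkeeping as routine.
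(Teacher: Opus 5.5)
Your single-equation cases are fine, but there is a genuine gap exactly where the difficulty of alternating fixed points lies. You state the game composition lemma and then defer it as ``routine bookkeeping'', and the one concrete description you give of it does not work. Suppose the $X_n$-subgames entered during a play are run with external parameters fixed to the real solutions $X_1^{*}, \dots, X_{n-1}^{*}$. Then an exit lands at some $(X_j, w)$ with $w \in X_j^{*}$ that need not lie in the set $\mathcal{A}_j$ of the outer move through which the subgame was entered. The compressed sequence of outer positions is then not a play of the reduced game, and over infinitely many phases nothing controls its parity. Concretely, take $X_1(x) =_{\mu} x \le 0 \lor X_2(x - 1);\ X_2(x) =_{\nu} X_1(x)$ over $\mathbb{Z}$, so $X_1^{*} = X_2^{*} = \mathbb{Z}$. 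At $(X_2, 4)$, reached from $(X_1, 5)$, your $\nu$-strategy with parameter $X_1^{*}$ answers $(\mathbb{Z}, \mathbb{Z})$, and Falsifier returns to $(X_1, 5)$. Repeating this forever yields a play that sees priority $1 \notin I$ infinitely often, although every piece was played with a ``winning'' strategy. The Falsifier direction has the same gluing problem with roles swapped. Determinacy is not what is needed there: an explicit Falsifier winning strategy from $v \notin X_i^{*}$ already excludes a Verifier winning strategy, and constructing it requires the same composition.

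To repair the induction you need two changes. (i) State the induction hypothesis for systems whose right-hand sides are arbitrary monotone maps. The reduced system has right-hand sides $\interpret{\phi_i}(\mathcal{A}_1, \dots, \mathcal{A}_{n-1}, f(\mathcal{A}_1, \dots, \mathcal{A}_{n-1}))$, which are not $\mu$CLP formulas. (ii) When the outer strategy plays $(\mathcal{A}_1, \dots, \mathcal{A}_{n-1})$, let Verifier play $(\mathcal{A}_1, \dots, \mathcal{A}_{n-1}, f(\mathcal{A}_1, \dots, \mathcal{A}_{n-1}))$ in $G_E$, and run every $X_n$-subgame entered through this move with parameters exactly $\mathcal{A}_1, \dots, \mathcal{A}_{n-1}$. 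Then every exit is a legal Falsifier reply to that outer move. An infinite play either stays eventually in one subgame, where $\eta_n$ decides it, or compresses to an infinite play of the reduced game consistent with the outer winning strategy. Since $n$ exceeds every outer priority, the minimal priority seen infinitely often is unchanged by the compression. The real solutions are the right parameters only for a subgame that starts at $(X_n, u)$ itself.

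The paper avoids composition altogether. It defines one global lexicographic measure $r'_i(v) = \min\{(l_1, \dots, l_i) \mid v \in X_i^{(l_1, \dots, l_i)}\}$ from the nested approximants of Definition~\ref{def:canonical-progress-measure}. Lemma~\ref{lem:canonical-progress-measure}(3) then lets Verifier move from $(X_i, v)$ to $(X_1^{l' \downharpoonright 1}, \dots, X_n^{l' \downharpoonright n})$ with $l' = (l_1, \dots, l_i, L_{i+1}, \dots, L_n)$. Here $(l_1, \dots, l_i)$ is $r'_i(v)$, with its last (successor) component decreased by one when $\eta_i = \mu$. This is a game parity progress measure for the whole system at once. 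For Falsifier, the paper uses the same measure for the dual system and challenges an element of minimal rank. Your route is viable once (i) and (ii) are supplied, but as written it omits precisely the step that the global measure is designed to handle.
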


%% file: ppm-translation.tex
\section{Finding Lower Bounds for $\mu$CLPs}\label{sec:validity-checking}

Throughout this section and Sections~\ref{sec:comparison-popl23} and~\ref{sec:comparison-popl25}, let $E$ be a $\mu$CLP $\{ X_1(x_1) =_{\eta_1} \phi_1;\ \dots;\ X_n(x_n) =_{\eta_n} \phi_n \}$ and $\mathcal{D}_i = \interpret{\tilde{s}_i}$ for each $i$ where $\sort(X_i) = \tilde{s}_i \to \propsort$.
Recall from Example~\ref{ex:parity-condition-muclp} that $E$ induces a parity condition on the disjoint union $\mathcal{D} \coloneqq \bigsqcup_{i=1}^n \mathcal{D}_i$.
Unless stated otherwise, we use this parity condition throughout these sections.

\subsection{Parity Relations}

We introduce the notion of parity relations, which is a generalization of well-founded relations and ensures the parity condition on infinite sequences of values in $\mathcal{D}$.

\begin{definition}\label{def:parity-relation}
	Let $R \subseteq \mathcal{D} \times \mathcal{D}$ be a relation.
	We say that $R$ is an \emph{$(p, I)$-parity progress relation} (or \emph{$(p, I)$-parity relation} for short) if any infinite sequence $x_0 x_1 \cdots \in \mathcal{D}^{\omega}$ such that $R(x_m, x_{m + 1})$ for each $m$ satisfies the parity condition $(p, I)$.
\end{definition}

\begin{example}\label{ex:well-founded-relation-as-parity-relation}
	Well-founded relations are equivalent to $(p, \emptyset)$-parity relations where $p : \mathcal{D} \to \{ 1 \}$ is the trivial priority function.
	In this case, for any partial function $r : \mathcal{D} \rightharpoonup \mathbf{Ord}$, the following relation is a $(p, \emptyset)$-parity relation (i.e.\ well-founded relation).
	Here, we write $r(x) {\downarrow}$ if $r(x)$ is defined.
	\[ R(x, y) \quad\coloneqq\quad r(x) {\downarrow} \land r(y) {\downarrow} \land r(x) > r(y) \]
\end{example}

The construction of well-founded relations in Example~\ref{ex:well-founded-relation-as-parity-relation} can be generalized to construct parity relations by replacing ranking functions with parity progress measures.
\begin{example}\label{ex:on-exit-progress-measure}
	Let $p : \mathcal{D} \to \{ 1, \dots, n \}$ be a priority function and $I \subseteq \{ 1, \dots, n \}$.
	For any partial function $r : \mathcal{D} \rightharpoonup \mathbf{Ord}^n$, the relation $R_{i, j}$ defined below is a $(p, I)$-parity relation.
	\begin{equation}
		R(x, y) \quad\coloneqq\quad \begin{cases}
			r(x) {\downarrow} \land r(y) {\downarrow} \land r(x) >_{p(x)} r(y) & \text{if $p(x) \notin I$} \\
			r(x) {\downarrow} \land r(y) {\downarrow} \land r(x) \ge_{p(x)} r(y) & \text{if $p(x) \in I$}
		\end{cases}
		\label{eq:on-exit-progress-measure-relation}
	\end{equation}
	The following relation, which uses $p(y)$ instead of $p(x)$, is also a $(p, I)$-parity relation.
	\begin{equation}
		R(x, y) \quad\coloneqq\quad \begin{cases}
			r(x) {\downarrow} \land r(y) {\downarrow} \land r(x) >_{p(y)} r(y) & \text{if $p(y) \notin I$} \\
			r(x) {\downarrow} \land r(y) {\downarrow} \land r(x) \ge_{p(y)} r(y) & \text{if $p(y) \in I$}
		\end{cases}
		\label{eq:on-entry-progress-measure-relation}
	\end{equation}
\end{example}

When $\mathcal{D} \coloneqq \bigsqcup_{i = 1}^n \mathcal{D}_i$ is a disjoint union of the sets $\mathcal{D}_i = p^{-1}(i)$ as in Example~\ref{ex:parity-condition-muclp}, we often represent a relation $R \subseteq \mathcal{D} \times \mathcal{D}$ as a family of relations $\{ R_{i, j} \subseteq \mathcal{D}_i \times \mathcal{D}_j \}_{i, j = 1, \dots, n}$ where $R_{i, j}$ is the restriction of $R$ to $\mathcal{D}_i \times \mathcal{D}_j$.
In this case, \eqref{eq:on-exit-progress-measure-relation} can be defined from a family of partial functions $\{ r_i : \mathcal{D}_i \rightharpoonup \mathbf{Ord}^n \}_{i = 1, \dots, n}$ as follows.
\begin{equation}
	R_{i, j}(x, y) \quad\coloneqq\quad \begin{cases}
		r_i(x) {\downarrow} \land r_j(y) {\downarrow} \land r_i(x) >_{i} r_j(y) & \text{if $i \notin I$} \\
		r_i(x) {\downarrow} \land r_j(y) {\downarrow} \land r_i(x) \ge_{i} r_j(y) & \text{if $i \in I$}
	\end{cases}
	\label{eq:on-exit-progress-measure-relation-disjoint-union}
\end{equation}

We will discuss a more sophisticated way to provide parity relations in Section~\ref{sec:implementation}, where we analyse the dependency between equations and reassign priorities to variables so that we obtain a compact representation of parity relations.

\subsection{A Novel Transformation Based on Parity Relations}\label{sec:main-theorems}

\begin{definition}
	We say an assignment $(\mathcal{A}_1, \dots, \mathcal{A}_n)$ for predicate variables $(X_1, \dots, X_n)$ is a \emph{post-fixed point} of a $\mu$CLP $E$ if it satisfies the following condition.
	\begin{equation}
		\mathcal{A}_1, \dots, \mathcal{A}_n;\ v \quad\models\quad X_i(x_i) \implies \phi_i \qquad\quad \text{for each $i = 1, \dots, n$ and $v \in \mathcal{D}_i$}
	\end{equation}
\end{definition}

Let $E_{\nu}$ be the $\mu$CLP obtained by replacing each $\mu$ in $E$ with $\nu$.
By the Knaster--Tarski theorem, any post-fixed point of $E$ is a lower bound of the solution of $E_{\nu}$.
Thus, if $E$ is a $\mu$CLP such that $\eta_i = \nu$ for all $i$, then the problem of finding a lower bound of the solution of $E$ can be reduced to the problem of finding a post-fixed point of $E$, which can be expressed as constraint solving over predicate variables.
However, this does not directly apply if $E$ contains $\mu$-equations.
To handle $\mu$-equations, we consider restricting the occurrences of predicate variables in the right-hand sides using parity relations.

\begin{definition}\label{def:parity-relation-transformation}
	Let $R = \{ R_{i, j} \}_{i, j = 1, \dots, n}$ be a parity relation with respect to the parity condition in Example~\ref{ex:parity-condition-muclp}.
	We define a $\mu$CLP $E[R]$ by replacing each occurrence of $X_j(t)$ in $\phi_i$ with $X_j(t) \land R_{i, j}(x_i, t)$.
	That is, the $i$-th equation $X_i(x_i) =_{\eta_i} \phi_i$ in $E$ is replaced with the following equation for each $i$.
	\[ X_i(x_i) \quad=_{\nu}\quad \phi_i[\lambda y_1. X_1(y_1) \land R_{i, 1}(x_i, y_1)/X_1, \dots, \lambda y_n. X_n(y_n) \land R_{i, n}(x_i, y_n)/X_n] \]
	Here, we implicitly introduce a fresh predicate symbol $R_{i, j}$ for each $i, j$ whose interpretation is given by the parity relation $R_{i, j}$.
\end{definition}

The $\mu$-to-$\nu$ transformation $E \mapsto E[R]$ gives us a desired reduction for general $\mu$CLPs as stated in the following theorems.
\begin{theorem}[soundness]\label{thm:soundness}
	If $R$ is a parity relation, then any post-fixed point of $E[R]$ is a lower bound of the solution of $E$.
\end{theorem}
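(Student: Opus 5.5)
We prove this through the game semantics, using Theorem~\ref{thm:game-semantics}. Let $(\mathcal{A}_1, \dots, \mathcal{A}_n)$ be a post-fixed point of $E[R]$. It suffices to show that for every $i$ and every $v \in \mathcal{A}_i$, Verifier wins from $(X_i, v)$ in $\GameSemantics{E}$; Theorem~\ref{thm:game-semantics} then gives $X_i^{*}(v)$. We will construct a memoryless strategy for Verifier whose consistent plays from such positions stay inside $\mathcal{A}$ and follow $R$.

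\textbf{Strategy.} At a position $(X_i, v)$ with $v \in \mathcal{A}_i$, Verifier moves to $(\mathcal{B}_1, \dots, \mathcal{B}_n)$ where
\[ \mathcal{B}_j \coloneqq \{ u \in \mathcal{A}_j \mid R_{i,j}(v, u) \}. \]
We must check that this is a legal move, i.e.\ that $\interpret{\phi_i}(\mathcal{B}_1, \dots, \mathcal{B}_n)(v)$ holds. The post-fixed point condition gives $\mathcal{A}; v \models \phi_i[\lambda y. X_j(y) \land R_{i,j}(x_i, y)/X_j]_j$. The interpretation of the substituted formula at $v$ is exactly $\interpret{\phi_i}$ evaluated at the assignment $j \mapsto \{u \mid u \in \mathcal{A}_j \land R_{i,j}(v,u)\} = \mathcal{B}_j$. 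This is a routine substitution lemma, proved by induction on $\phi_i$; since $x_i$ is fixed to $v$ and the bound variables do not clash, the argument goes through. Falsifier then picks some $(X_j, u)$ with $u \in \mathcal{B}_j \subseteq \mathcal{A}_j$. Hence the invariant ``the current Verifier position $(X_j,u)$ has $u \in \mathcal{A}_j$'' is preserved, and we also have $R_{i,j}(v,u)$. On positions outside the invariant the strategy is arbitrary; they are never reached.

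\textbf{Winning.} Take a maximal play from $(X_i, v)$ with $v \in \mathcal{A}_i$ that is consistent with this strategy. Verifier always has a legal move, so a finite maximal play cannot end at a Verifier position. If the play is finite, it therefore ends at a Falsifier position with no successor, which is a win for Verifier. If the play is infinite, its Verifier positions $(X_{i_0}, v_0), (X_{i_1}, v_1), \dots$ satisfy $R(v_m, v_{m+1})$ in $\mathcal{D}$ for every $m$. Since $R$ is a parity relation, the sequence $v_0 v_1 \cdots$ satisfies $(p, I_\nu)$. In the game, Falsifier positions have priority $n+1$, which exceeds all priorities $1, \dots, n$ of Verifier positions, and Verifier positions occur infinitely often. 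So the minimal priority seen infinitely often along the play equals that of the $v$-sequence, which lies in $I$, and Verifier wins.

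The main technical point is the substitution step: checking carefully that the post-fixed point condition for $E[R]$ translates exactly into legality of the move to $\mathcal{B}$ in $\GameSemantics{E}$. This requires matching the semantics of $\phi_i[\lambda y. X_j(y)\land R_{i,j}(x_i,y)/X_j]$ with $\interpret{\phi_i}(\mathcal{B})(v)$. The other steps are bookkeeping.
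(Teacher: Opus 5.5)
Your proof is correct and follows the paper's argument. As in the paper, Verifier moves from $(X_i,v)$ with $v\in\mathcal{A}_i$ to $(\mathcal{A}_1\cap R_{i,1}(v),\dots,\mathcal{A}_n\cap R_{i,n}(v))$, which keeps every reached position inside $\mathcal{A}$; infinite plays then satisfy the parity condition because $R$ is a parity relation, and Theorem~\ref{thm:game-semantics} concludes. You also spell out two points the paper leaves implicit: the substitution step, and the fact that Falsifier positions have priority $n+1$.
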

Theorem~\ref{thm:soundness} can be proved by showing the existence of a memoryless winning strategy for Verifier.
Given a parity relation $R$, consider a sub-game of $G_E$, in which Verifier's moves are restricted to the following:
\[ (X_i, v) \quad\to\quad (\mathcal{A}_1, \dots, \mathcal{A}_n) \quad\text{where}\quad \forall j. \mathcal{A}_j \subseteq R_{i, j}(v) \]
Here, we write $R_{i, j}(v) = \{ u \mid R_{i, j}(v, u) \}$.
In this game, any infinite play satisfies the parity condition since $R$ is a parity relation.
Thus, Verifier is winning unless Verifier gets stuck.
Because only Verifier's moves are restricted here, if Verifier wins in the sub-game, then Verifier also wins in the original game $G_E$.
Moreover, the existence of a post-fixed point of $E[R]$ implies that Verifier never gets stuck in the sub-game, as we see below.

\begin{proof}[Proof of Theorem~\ref{thm:soundness}]
	If $(\mathcal{A}_1, \dots, \mathcal{A}_n)$ is a post-fixed point of $E[R]$:
	\[ v \in \mathcal{A}_i \quad\implies\quad \interpret{\phi_i}(\mathcal{A}_1 \cap R_{i, 1}(v), \dots, \mathcal{A}_n \cap R_{i, n}(v))(v) \text{ is true} \quad\text{for each $i$} \]
	and if the current position $(X_i, v)$ is such that $v \in \mathcal{A}_i$,
	then Verifier can choose the next move as $(X_i, v) \to (\mathcal{A}_1 \cap R_{i, 1}(v), \dots, \mathcal{A}_n \cap R_{i, n}(v))$.
	No matter how Falsifier responds, the next position is again of the form $(X_j, u)$ such that $u \in \mathcal{A}_j$, and the play continues indefinitely unless Falsifier gets stuck.
	Since $R$ is a parity relation, any infinite play consistent with this strategy satisfies the parity condition.
	Thus, Verifier is winning from $(X_i, v)$ such that $v \in \mathcal{A}_i$.
	By Theorem~\ref{thm:game-semantics}, we conclude $v \in X_i^{*}$.
\end{proof}

\iflong
\begin{remark}
	There is an alternative proof of Theorem~\ref{thm:soundness} without game semantics.
	Let $E'[R]$ be the $\mu$CLP obtained in the same way as $E[R]$ except that we keep $\mu$ and $\nu$ in the original equations.
	Then, we can show that the solution of $E'[R]$ is equal to the solution of $E[R]$.
	By definition, the solution of $E'[R]$ is a lower bound of the solution of $E$.
	Therefore, any post-fixed point of $E[R]$ is a lower bound of the solution of $E$.
\end{remark}
\fi

The $\mu$-to-$\nu$ transformation is also complete in the following sense.

\begin{theorem}[completeness]\label{thm:completeness}
	There exists a parity relation $R$ such that the solution of $E$ is a post-fixed point of $E[R]$.
\end{theorem}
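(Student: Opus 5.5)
Let $(X_1^{*}, \dots, X_n^{*})$ be the solution of $E$. The plan is to extract a parity relation from a \emph{game parity progress measure} on the winning region of Verifier in $\GameSemantics{E}$, and then check that the solution is a post-fixed point of $E[R]$.

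First, by Theorem~\ref{thm:game-semantics}, Verifier wins from $(X_i, v)$ exactly when $v \in X_i^{*}$. I will invoke the converse direction of Jurdzi\'nski's result for (possibly infinite) parity games, stated as Corollary 7 earlier only in one direction: on the winning region of Verifier there is a game parity progress measure $r$ with values in $\mathbf{Ord}^{n+1}$. This is the content of the appendix on complete progress measures. Concretely, I would take ordinal-valued signatures, i.e.\ the tuples of approximant indices of the $\mu$-fixed points in the iterative characterisation of Definition~\ref{def:solution}. Since Falsifier's positions have priority $n+1$, the last component is irrelevant, and we may assume $r(\mathcal{A}_1,\dots,\mathcal{A}_n)$ is the pointwise maximum over its successors. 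We then define $r_i(v) \coloneqq r(X_i, v)$ for $v \in X_i^{*}$, leave it undefined otherwise, and set $R$ as in~\eqref{eq:on-exit-progress-measure-relation-disjoint-union}. By Example~\ref{ex:on-exit-progress-measure}, $R$ is a parity relation.

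Next, fix $v \in X_i^{*}$. By the progress-measure property at $(X_i, v)$, Verifier has a move to some $(\mathcal{B}_1, \dots, \mathcal{B}_n)$ with $\interpret{\phi_i}(\mathcal{B}_1,\dots,\mathcal{B}_n)(v)$ true, $r$ defined there, and $r(X_i,v) \ge^{I}_{i} r(\mathcal{B})$. Moreover, for every $u \in \mathcal{B}_j$ we have $r(\mathcal{B}) \ge_{n+1} r(X_j,u)$. Composing the two gives $r_i(v) \ge_i r_j(u)$, strictly if $i \notin I$, because truncation at $i \le n+1$ preserves $\ge$. Hence $u \in X_j^{*}$ and $R_{i,j}(v,u)$, so $\mathcal{B}_j \subseteq X_j^{*} \cap R_{i,j}(v)$. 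By monotonicity of $\phi_i$, $\interpret{\phi_i}(X_1^{*} \cap R_{i,1}(v), \dots)(v)$ is true, which is precisely the post-fixed-point condition for $E[R]$ at $v$.

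The main obstacle is the existence of a game parity progress measure for infinite games with positions ranging over arbitrary sets. This requires either memoryless determinacy of parity games on infinite arenas, or a direct ordinal signature construction from the nested fixed-point approximants. I would attempt the signature route: define $r(X_i,v)$ as the lexicographically least tuple $(\alpha_j)_{\eta_j=\mu}$ of approximation stages, padding $\nu$-components with $0$, such that $v$ belongs to the corresponding approximant. The key lemma is that unfolding $\phi_i$ can be witnessed by an assignment whose elements have signature $\le_i$ that of $v$, strictly when $\eta_i=\mu$. This is the standard signature argument, for example as in Streett--Emerson.
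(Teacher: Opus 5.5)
Your proposal is correct and, in its concrete signature form, is the paper's proof: Appendix~\ref{sec:complete-progress-measure} defines exactly your signatures as the canonical progress measure $r'_i(x) = \min \{ (l_1, \dots, l_i) \mid x \in X_i^{(l_1, \dots, l_i)} \}$ built from $\mu$-approximants, takes $R$ of the form~\eqref{eq:on-exit-progress-measure-relation}, and your deferred key lemma is Lemma~\ref{lem:canonical-progress-measure}(3) together with Lemma~\ref{lem:canonical-progress-measure-lfp-successor} (the minimal $i$-th component is a successor ordinal when $\eta_i = \mu$, which gives the strict decrease). The only difference is packaging: the paper checks the post-fixed-point condition directly, because the witnessing assignment $(X_1^{l' \downharpoonright 1}, \dots, X_n^{l' \downharpoonright n})$ has each component contained in $X_j^{*} \cap R_{i,j}(v)$ by monotonicity of the approximants and minimality of $r'_j$, so your detour through Falsifier positions, Jurdzi\'nski's corollary and Theorem~\ref{thm:game-semantics} is not needed (nor is your normalisation of $r$ at Falsifier positions, which you never use and which, read as a componentwise maximum, is not justified).
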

\begin{proof}
	See
	\iflong
	Appendix~\ref{sec:complete-progress-measure}
	\else
	Appendix A
	\fi
	for the construction of such a parity relation $R$.
\end{proof}

\begin{example}
	Consider the $\mu$CLP $E$ in Example~\ref{ex:muCLP-to-FFL} again.
	By Theorem~\ref{thm:soundness} and \ref{thm:completeness} (and Lemma~\ref{lem:lower-bound-query}), the validity of $E$ with a query $X_1(1)$ can be reduced to the problem of finding an assignment to $X_1$ and $X_2$ such that $X_1(1)$ holds and $(X_1, X_2)$ is a post-fixed point of $E[R]$ for some parity relation $R$:
	\begin{align}
		X_1(x) \quad&\implies\quad X_1(x + 1) \land R_{1, 1}(x, x + 1) \quad\land\quad X_2(x, 0) \land R_{1, 2}(x, (x, 0)) \\
		X_2(x, y) \quad&\implies\quad x = y \quad\lor\quad \big(X_2(x, y + 1) \land R_{2, 2}((x, y), (x, y + 1))\big)
	\end{align}
	The solution for the latter problem is given as follows.
	Let $R = \{ R_{i, j} \}_{i, j = 1, \dots, 2}$ be a parity relation of the form~\eqref{eq:on-exit-progress-measure-relation-disjoint-union} where a pair of partial functions is given by $r_1(x) = 0$ for any $x$ and $r_2(x, y) = x - y$ if $x \ge y$.
	Concretely, $R$ is given as follows.
	\begin{gather}
		R_{1, 1}(x, x') \quad=\quad \True, \qquad\qquad
		R_{1, 2}(x, (x', y')) \quad=\quad x' \ge y', \\
		R_{2, 2}((x, y), (x', y')) \quad=\quad x \ge y \land x' \ge y' \land x - y > x' - y'
	\end{gather}
	Then, it is easy to see that an assignment $[X_1(x) \mapsto x \ge 0,\ X_2(x, y) \mapsto x \ge y]$ is a post-fixed point of $E[R]$ and hence a lower bound of the solution of $E$.
\end{example}

%% file: popl23.tex
\section{Game Interpretation of the Transformation in \cite{UnnoPOPL2023}}
\label{sec:comparison-popl23}

We rephrase the winning condition for Verifier in the game semantics of $\mu$CLP in terms of Streett conditions and well-founded relations, and define a new two-player game for $\mu$CLP based on this characterization.
Then, we show that the transformation proposed in \cite{UnnoPOPL2023} can be interpreted as a technique to find a winning strategy in this game.

\subsection{Streett Relations}

\begin{definition}[Streett condition]
	A \emph{Streett condition} is specified by a finite family of pairs $\{ (A_i, B_i) \}_{i = 1}^m$ of subsets $A_i, B_i \subseteq \mathcal{D}$.
	An infinite sequence $x_0 x_1 \cdots$ in $\mathcal{D}^{\omega}$ satisfies the Streett condition if for all $i = 1, \dots, m$, if $x_j \in A_i$ for infinitely many $j$'s, then $x_j \in B_i$ for infinitely many $j$'s.
\end{definition}

It is well-known that the parity condition can be expressed as a Streett condition.

\begin{proposition}\label{prop:parity-condition-streett-condition}
	Let $p : \mathcal{D} \to \{ 1, \dots, n \}$ be a priority function and $I \subseteq \{ 1, \dots, n \}$.
	Then, the parity condition for $(p, I)$ is equivalent to the Streett condition $\{ (\mathcal{D}_{\le i}, \mathcal{D}_{< i}) \mid i \notin I \}$ where $\mathcal{D}_{\le k} \coloneqq \{ x \in \mathcal{D} \mid p(x) \le k \}$ and $\mathcal{D}_{< k} \coloneqq \{ x \in \mathcal{D} \mid p(x) < k \}$.
	\qed
\end{proposition}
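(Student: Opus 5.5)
The plan is to fix an arbitrary infinite sequence $s = x_0 x_1 \cdots \in \mathcal{D}^{\omega}$ and prove the two implications between the parity condition $(p, I)$ and the Streett condition $\{ (\mathcal{D}_{\le i}, \mathcal{D}_{< i}) \mid i \notin I \}$ directly. Throughout, write $k \coloneqq \min \mathrm{Inf}(p(s))$. Since $p$ takes values in the finite set $\Range{1}{n}$, $\mathrm{Inf}(p(s))$ is nonempty, so $k$ is well defined. The one observation everything rests on is that, for any $i$, the sequence $s$ visits $\mathcal{D}_{\le i}$ infinitely often iff some priority $\le i$ occurs infinitely often, iff $k \le i$. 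This holds because a set of at most $i$ priorities is visited infinitely often only if one of its members is. In the same way, $s$ visits $\mathcal{D}_{< i}$ infinitely often iff $k < i$.

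With this observation, the Streett pair $(\mathcal{D}_{\le i}, \mathcal{D}_{< i})$ for a given $i \notin I$ reads: if $k \le i$ then $k < i$. Equivalently, $k \neq i$. The Streett condition as a whole therefore says $k \ne i$ for every $i \notin I$, which is exactly $k \in I$. That is precisely the statement that $s$ satisfies the parity condition $(p, I)$.

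For the write-up I would present the two directions separately. For (parity $\Rightarrow$ Streett), take $i \notin I$ with $\mathcal{D}_{\le i}$ visited infinitely often. Then $k \le i$, and $k \in I$ while $i \notin I$, so $k \ne i$; hence $k < i$ and $\mathcal{D}_{<i}$ is visited infinitely often. For (Streett $\Rightarrow$ parity), suppose $k \notin I$. Take $i = k$: the set $\mathcal{D}_{\le k}$ is visited infinitely often, so the Streett condition forces infinitely many visits to $\mathcal{D}_{<k}$. That would give a priority $< k$ occurring infinitely often, contradicting the minimality of $k$.

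There is no real obstacle here. The only point needing care is the pigeonhole step: finiteness of the codomain of $p$ is what guarantees both that $k$ exists and that infinitely many visits to $\mathcal{D}_{\le i}$ produce a single priority $\le i$ occurring infinitely often.
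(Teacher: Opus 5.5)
Your proof is correct. The paper gives no proof of its own (it marks the proposition with \qed as a well-known fact), and your argument is the standard one. The pigeonhole observation is the key step: $\mathcal{D}_{\le i}$ (resp.\ $\mathcal{D}_{<i}$) is visited infinitely often iff $k \le i$ (resp.\ $k < i$). It reduces each Streett pair to $k \neq i$, and over all $i \notin I$ this is exactly $k \in I$.
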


Similarly to parity relations (Definition~\ref{def:parity-relation}), we define the notion of Streett relations, which ensure the Streett condition on infinite sequences of values in $\mathcal{D}$.
\begin{definition}[Streett relation]
	Let $A, B \subseteq \mathcal{D}$ and $R \subseteq \mathcal{D} \times \mathcal{D}$.
	The relation $R$ is \emph{$(A, B)$-Streett} if for any infinite sequence $x_0 x_1 \cdots$ in $\mathcal{D}^{\omega}$ such that $(x_m, x_{m+1}) \in R$ for every $m \ge 0$, the sequence satisfies the Streett condition $\{ (A, B) \}$.
	The relation $R$ is \emph{Streett} for a family of pairs $\{ (A_i, B_i) \}_{i = 1}^m$ if $R$ is $(A_i, B_i)$-Streett for every $i = 1, \dots, m$.
\end{definition}

As an immediate consequence of Proposition~\ref{prop:parity-condition-streett-condition}, a relation $R$ is $(p, I)$-parity if and only if $R$ is Streett for the family of pairs $\{ (\mathcal{D}_{\le i}, \mathcal{D}_{< i}) \mid i \notin I \}$.

\begin{example}
	For any partial function $r : \mathcal{D} \rightharpoonup \mathbf{Ord}$, the following relation $R$ is $(A, B)$-Streett.
	\[ R(x, y) \quad\coloneqq\quad \begin{cases}
		r(x) {\downarrow} \land r(y) {\downarrow} \land r(x) > r(y) & \text{if $x \in A \setminus B$} \\
		r(x) {\downarrow} \land r(y) {\downarrow} & \text{if $x \in B$} \\
		r(x) {\downarrow} \land r(y) {\downarrow} \land r(x) \ge r(y) & \text{if $x \notin A \cup B$}
	\end{cases} \]
	When $A = \mathcal{D} \setminus B$, then the Streett condition for $(A, B)$ is equivalent to the B\"uchi condition for $B$.
	Thus, the function $r$ above can be seen as a generalization of B\"uchi ranking functions \cite{ChatterjeeFM2025} to the case of Streett conditions.
\end{example}

Well-founded relations are a special case of Streett relations where $A = \mathcal{D}$ and $B = \emptyset$.
In general, a Streett relation $R$ can be seen as a generalization of well-founded relations in the following sense: when we restrict $R$ to $\mathcal{D} \setminus B$, then $R$ guarantees that there is no sequence $x_0 \mathrel{R} x_1 \mathrel{R} \cdots$ that contain infinitely many elements in $A \setminus B$.
This observation leads to the following characterization of Streett relations in terms of well-founded relations.

\begin{lemma}\label{lem:streett-relation-well-founded-relation}
	A relation $R$ is $(A, B)$-Streett if and only if the relation $\hat{R}_{A, B}$ defined below is well-founded:
	\[ \hat{R}_{A, B} \coloneqq \{ (x_0, x_m) \mid m > 0 \land x_0 \mathrel{R} x_1 \mathrel{R} \cdots \mathrel{R} x_{m-1} \mathrel{R} x_m \land x_0, x_m \in A \setminus B \land x_1, \dots, x_{m - 1} \notin A \cup B \}
	\iflong\else
	\tag*{\qed}
	\fi \]
\end{lemma}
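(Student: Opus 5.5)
We prove the two directions of Lemma~\ref{lem:streett-relation-well-founded-relation} separately, each by contraposition. In both directions an infinite $R$-chain that violates the Streett condition $\{(A, B)\}$ corresponds to an infinite descending $\hat{R}_{A,B}$-chain. Here well-foundedness of $\hat{R}_{A,B}$ means that there is no infinite sequence $y_0 \mathrel{\hat{R}_{A,B}} y_1 \mathrel{\hat{R}_{A,B}} \cdots$. This matches Definition~\ref{def:parity-relation} with the trivial priority function, as in Example~\ref{ex:well-founded-relation-as-parity-relation}.

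\emph{($\Rightarrow$)} Suppose $\hat{R}_{A,B}$ is not well-founded, and take an infinite chain $y_0 \mathrel{\hat{R}_{A,B}} y_1 \mathrel{\hat{R}_{A,B}} \cdots$. By definition, each step $y_k \mathrel{\hat{R}_{A,B}} y_{k+1}$ is witnessed by a finite $R$-path of positive length from $y_k$ to $y_{k+1}$.
\begin{itemize}
\item Both endpoints of each witnessing path lie in $A \setminus B$.
\item Its intermediate vertices lie outside $A \cup B$, and in particular outside $B$.
\end{itemize}
Concatenating these paths, and sharing the endpoints $y_k$, gives an infinite sequence $x_0 \mathrel{R} x_1 \mathrel{R} \cdots$. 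Each path has positive length, so every $y_k$ occurs in it, and hence elements of $A$ occur infinitely often. No element of the sequence lies in $B$. So the sequence violates the Streett condition $\{(A, B)\}$, and $R$ is not $(A, B)$-Streett.

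\emph{($\Leftarrow$)} Suppose $R$ is not $(A, B)$-Streett, and take an $R$-chain $x_0 x_1 \cdots$ that visits $A$ infinitely often but visits $B$ only finitely often. Choose $N$ such that $x_j \notin B$ for all $j \ge N$. From $N$ on, every visit to $A$ is therefore a visit to $A \setminus B$. Let $j_0 < j_1 < \cdots$ enumerate the indices $j \ge N$ with $x_j \in A \setminus B$. This list is infinite because $A$ is visited infinitely often.

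Consider two consecutive indices $j_k$ and $j_{k+1}$. Every strictly intermediate index $j$ satisfies $j \ge N$, so $x_j \notin B$. It also satisfies $x_j \notin A \setminus B$ by the choice of the enumeration, so $x_j \notin A$. Hence $x_j \notin A \cup B$. The segment $x_{j_k} \mathrel{R} \cdots \mathrel{R} x_{j_{k+1}}$ therefore has length $j_{k+1} - j_k > 0$ and witnesses $(x_{j_k}, x_{j_{k+1}}) \in \hat{R}_{A,B}$. This yields an infinite $\hat{R}_{A,B}$-chain, so $\hat{R}_{A,B}$ is not well-founded.

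The only delicate point is the second direction. The intermediate vertices must avoid both $A$ and $B$, which is exactly why $N$ is chosen past the last visit to $B$ before enumerating the consecutive $(A \setminus B)$-visits. The rest is routine bookkeeping about concatenating and cutting chains.
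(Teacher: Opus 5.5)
Your proof is correct and follows the paper's argument essentially step for step. For one direction you concatenate the witnessing $R$-paths of an infinite $\hat{R}_{A,B}$-chain; for the other you pass to a $B$-free suffix and extract the consecutive $(A \setminus B)$-visits. Your explicit choice of $N$ simply spells out the paper's ``without loss of generality $x_i \notin B$ for every $i$'' step.
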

\iflong
\begin{proof}
	For notational simplicity, we write $\hat{R}$ for $\hat{R}_{A, B}$.
	If $\hat{R}$ is not well-founded, then there exists an infinite sequence $x_0 x_1 \ldots$ such that $(x_i, x_{i+1}) \in \hat{R}$ for every $i \ge 0$.
	By definition of $\hat{R}$, this sequence can be expanded to an infinite sequence $y_0 y_1 \ldots$ such that $(y_i, y_{i+1}) \in R$ for every $i \ge 0$, $y_i \in A$ for infinitely many $i$'s, but $y_i \notin B$ for any $i$.
	Hence, $R$ is not $(A, B)$-Streett.

	Conversely, if $R$ is not $(A, B)$-Streett, then there exists an infinite sequence $x_0 x_1 \ldots$ such that $(x_i, x_{i+1}) \in R$ for every $i \ge 0$, $x_i \in A$ for infinitely many $i$'s, but $x_i \in B$ for only finitely many $i$'s.
	Without loss of generality, we can assume that $x_i \notin B$ for every $i \ge 0$.
	Then, there exists an infinite subsequence $x_{i_0} x_{i_1} \ldots$ of $x_0 x_1 \ldots$ such that $x_{i_j} \in A$ for every $j \ge 0$ and $(x_{i_j}, x_{i_{j + 1}}) \in \hat{R}$ for every $j \ge 0$.
	Thus, $\hat{R}$ is not well-founded.
\end{proof}
\fi

\subsection{Winning Condition in Terms of Well-Founded Relations}
\label{sec:winning-condition-well-founded-relations}
Observe that the winning condition for Verifier in the game semantics of $\mu$CLP (Definition~\ref{def:game-semantics}) can be characterized as follows.

\begin{lemma}\label{lem:winning-condition-parity-relation}
	Let $E$ be a $\mu$CLP and $\GameSemantics{E}$ be the game semantics of $E$ (Definition~\ref{def:game-semantics}).
	For each strategy $\sigma$ for Verifier and a Verifier's position $(X_i, v)$, we define a relation $R^{\sigma}_{(X_i, v)} \subseteq \mathcal{D} \times \mathcal{D}$ as follows:
	\begin{align}
		R^{\sigma}_{(X_i, v)} \quad\coloneqq\quad \{ (u, w) &\mid (X_i, v) \to \cdots \to (X_j, u) \to (\mathcal{A}_1, \dots, \mathcal{A}_n) \to (X_k, w) \in \StrategyPlay[(X_i, v)]{\sigma}{\GameSemantics{E}} \}
	\end{align}
	Verifier is winning from $(X_i, v)$ if and only if there exists a memoryless strategy $\sigma$ such that $R^{\sigma}_{(X_i, v)}$ is a parity relation and any finite maximal play consistent with $\sigma$ ends at a Falsifier's position.
\end{lemma}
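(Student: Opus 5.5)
The proof has two directions. For ``if'' the plan is to read the parity condition off $R^{\sigma}_{(X_i, v)}$ along each play. For ``only if'' it is to reuse the memoryless strategy from the proof of Theorem~\ref{thm:soundness}, instantiated with the relation supplied by Theorem~\ref{thm:completeness}. Throughout I use that positions in $\GameSemantics{E}$ alternate strictly between Verifier's positions $(X_j, u)$ and Falsifier's positions $(\mathcal{A}_1, \dots, \mathcal{A}_n)$.

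($\Leftarrow$) Let $\sigma$ be memoryless with $R^{\sigma}_{(X_i, v)}$ a parity relation, and assume every finite maximal play consistent with $\sigma$ ends at a Falsifier's position. Take a maximal play from $(X_i, v)$ consistent with $\sigma$.
\begin{itemize}
\item If it is finite, it ends at a Falsifier's position. That position has no successor, so the play is won by Verifier by clause (b) of the winning condition.
\item If it is infinite, list its Verifier's positions as $(X_{j_0}, u_0), (X_{j_1}, u_1), \dots$ with $(X_{j_0}, u_0) = (X_i, v)$. For each $m$, the prefix of the play ending at $(X_{j_{m+1}}, u_{m+1})$ is a play consistent with $\sigma$ and has the shape required in the definition of $R^{\sigma}_{(X_i, v)}$. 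Hence $R^{\sigma}_{(X_i, v)}(u_m, u_{m+1})$ holds for every $m$.
\item Since $R^{\sigma}_{(X_i, v)}$ is a parity relation, the sequence $u_0 u_1 \cdots \in \mathcal{D}^{\omega}$ satisfies $(p, I)$. Here $p(u_m) = j_m$, so this is the parity condition on the Verifier's positions.
\item The Falsifier's positions in between all have priority $n+1$. This exceeds every priority of a Verifier's position, so inserting them does not change the minimal priority seen infinitely often. The whole play therefore satisfies the parity condition of $\GameSemantics{E}$ and is won by Verifier.
\end{itemize}

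($\Rightarrow$) Suppose Verifier wins from $(X_i, v)$; by Theorem~\ref{thm:game-semantics}, $v \in X_i^{*}$. Theorem~\ref{thm:completeness} gives a parity relation $R$ such that $(X_1^{*}, \dots, X_n^{*})$ is a post-fixed point of $E[R]$. Define a memoryless $\sigma$ as in the proof of Theorem~\ref{thm:soundness}:
\begin{itemize}
\item at $(X_j, u)$ with $u \in X_j^{*}$, move to $(X_1^{*} \cap R_{j,1}(u), \dots, X_n^{*} \cap R_{j,n}(u))$;
\item at positions with $u \notin X_j^{*}$, move arbitrarily if a move exists. These positions will turn out to be unreachable.
\end{itemize}
The first clause is a legal move because of the post-fixed-point property. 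I then check by induction along plays that every Verifier's position reachable from $(X_i, v)$ under $\sigma$ has the form $(X_j, u)$ with $u \in X_j^{*}$. Two consequences follow:
\begin{itemize}
\item Verifier always has a legal move at a reachable position, so every finite maximal play consistent with $\sigma$ ends at a Falsifier's position.
\item Any pair $(u, w)$ in $R^{\sigma}_{(X_i, v)}$ comes from a move $(X_j, u) \to (\ldots) \to (X_k, w)$ with $w \in X_k^{*} \cap R_{j,k}(u)$. Hence $R^{\sigma}_{(X_i, v)} \subseteq R$.
\end{itemize}
Any subrelation of a parity relation is again a parity relation, because its infinite chains are chains of the larger relation. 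Therefore $R^{\sigma}_{(X_i, v)}$ is a parity relation.

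The main point needing care is the ``only if'' direction. It needs a \emph{memoryless} strategy on a possibly infinite game, and general positional determinacy for infinite parity games would be heavy. Routing through Theorem~\ref{thm:completeness} avoids it, but requires checking the inclusion $R^{\sigma}_{(X_i, v)} \subseteq R$ and the reachability invariant carefully. The invariant is what ensures the arbitrary choices off $X^{*}$ never matter.
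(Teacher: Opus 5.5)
Your proposal is correct and follows essentially the same route as the paper. The ``if'' direction is the direct check the paper calls obvious, and the ``only if'' direction combines Theorem~\ref{thm:completeness} with the memoryless strategy from the proof of Theorem~\ref{thm:soundness}, exactly as the paper does; you just spell out the details it leaves implicit (the invariant $u \in X_j^{*}$ on reachable positions, the inclusion $R^{\sigma}_{(X_i, v)} \subseteq R$, and the role of priority $n+1$ on Falsifier's positions).
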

\begin{proof}
	The if part is obvious.
	The only if part follows from the follows from completeness (Theorem~\ref{thm:completeness}) and the proof of soundness (Theorem~\ref{thm:soundness}) of the transformation in Section~\ref{sec:main-theorems}.
\end{proof}

By Proposition~\ref{prop:parity-condition-streett-condition}, the parity relation $R^{\sigma}_{(X_i, v)}$ can be rephrased as a Streett relation, and by Lemma~\ref{lem:streett-relation-well-founded-relation}, it is further rephrased in terms of well-founded relations.
Then, we can rephrase the winning condition for Verifier as follows:
Verifier is winning from $(X_i, v)$ in the game semantics of $\mu$CLP $E$ if and only if there exists a memoryless strategy $\sigma$ such that any finite maximal play consistent with $\sigma$ ends at a Falsifier's position and for each $\eta_j = \mu$, $(\widehat{R^{\sigma}_{(X_i, v)}})_{\mathcal{D}_{\le j}, \mathcal{D}_{< j}}$ is well-founded.
By definition, the well-foundedness condition for $(\widehat{R^{\sigma}_{(X_i, v)}})_{\mathcal{D}_{\le j}, \mathcal{D}_{< j}}$ is stated as follows:
there exists a well-founded relation $R$ such that for any play $\rho_1, \dots, \rho_m$ consistent with $\sigma$, if they are of the form
\begin{align}
	\rho_k \quad=\quad (X_i, v) \to \cdots \to (X_{j_{k - 1}}, u_{k - 1}) \to (\mathcal{A}_{k, 1}, \dots, \mathcal{A}_{k, n}) \to (X_{j_{k}}, u_{k})
\end{align}
with $j_0 = j_m = j$ and $j_1, \dots, j_{m - 1} > j$, then $(u_0, u_m) \in R$.
We can combine the plays $\rho_1, \dots, \rho_m$ into a single play.
\begin{align}
	&(X_i, v) \to && \cdots &&\to (X_{j_0}, u_0) && \qquad j_0 = j \\
	&&&\to (\mathcal{A}_{1, 1}, \dots, \mathcal{A}_{1, n}) &&\to (X_{j_1}, u_1) && \qquad j_1 > j \\
	&&&\qquad\vdots \\
	&&&\to (\mathcal{A}_{m-1, 1}, \dots, \mathcal{A}_{m-1, n}) &&\to (X_{j_{m-1}}, u_{m-1}) && \qquad j_{m-1} > j \\
	&&&\to (\mathcal{A}_{m, 1}, \dots, \mathcal{A}_{m, n}) &&\to (X_{j_m}, u_m) && \qquad j_m = j \label{eq:play-transitive-closure}
\end{align}
This play is also consistent with $\sigma$ because $\sigma$ is memoryless.
This observation leads to the following characterization of the winning condition.
\begin{corollary}\label{cor:winning-condition-well-founded-relation}
	Verifier is winning from $(X_i, v)$ in the game semantics (Definition~\ref{def:game-semantics}) if and only if there exist a memoryless strategy $\sigma$ and a family of well-founded relations $R = \{ R_j \subseteq \mathcal{D}_j \times \mathcal{D}_j \}_{j \in \{ j \mid \eta_j = \mu \}} $ such that (a) for any play of the form~\eqref{eq:play-transitive-closure}, if it is consistent with $\sigma$, then $(u_0, u_m) \in R_j$ and (b) any finite maximal play consistent with $\sigma$ ends at a Falsifier's position.
	\qed
\end{corollary}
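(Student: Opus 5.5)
We derive Corollary~\ref{cor:winning-condition-well-founded-relation} from Lemma~\ref{lem:winning-condition-parity-relation}, passing through Proposition~\ref{prop:parity-condition-streett-condition} and Lemma~\ref{lem:streett-relation-well-founded-relation}. Fix a memoryless strategy $\sigma$ such that every finite maximal play consistent with $\sigma$ ends at a Falsifier's position, and write $R^\sigma$ for $R^{\sigma}_{(X_i, v)}$. By Lemma~\ref{lem:winning-condition-parity-relation}, it suffices to show that $R^\sigma$ is a parity relation if and only if there is a family $\{R_j\}_{\eta_j = \mu}$ of well-founded relations satisfying condition (a).

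First, Proposition~\ref{prop:parity-condition-streett-condition} shows that $R^\sigma$ is parity exactly when, for every $j$ with $\eta_j = \mu$, it is $(\mathcal{D}_{\le j}, \mathcal{D}_{<j})$-Streett. Lemma~\ref{lem:streett-relation-well-founded-relation} then turns each of these into the well-foundedness of $\hat{R}_j \coloneqq (\widehat{R^\sigma})_{\mathcal{D}_{\le j}, \mathcal{D}_{<j}}$. Here $A \setminus B = \mathcal{D}_j$ and $\mathcal{D} \setminus (A \cup B) = \mathcal{D}_{>j}$. Hence $\hat{R}_j \subseteq \mathcal{D}_j \times \mathcal{D}_j$, and it consists of the pairs $(u_0, u_m)$ connected by an $R^\sigma$-chain $u_0 R^\sigma u_1 \cdots R^\sigma u_m$ whose intermediate elements have priority $> j$.

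The key step is to identify $\hat{R}_j$ with the set $P_j$ of endpoint pairs of plays of the form~\eqref{eq:play-transitive-closure} that are consistent with $\sigma$.

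For $P_j \subseteq \hat{R}_j$: take such a play. Each consecutive segment $(X_{j_{k-1}}, u_{k-1}) \to \mathcal{A}_k \to (X_{j_k}, u_k)$, together with the prefix of the play up to $(X_{j_{k-1}}, u_{k-1})$, is a play from $(X_i, v)$ consistent with $\sigma$. So $u_{k-1} \mathrel{R^\sigma} u_k$ for each $k$, and the priority constraints are exactly those required by $\hat{R}_j$.

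For $\hat{R}_j \subseteq P_j$: take a chain in $R^\sigma$ with these priority constraints. Each link $(u_{k-1}, u_k)$ is witnessed by some consistent play reaching $(X_{j_{k-1}}, u_{k-1})$. Because $\sigma$ is memoryless, Verifier's move at $(X_{j_{k-1}}, u_{k-1})$ is the same $\sigma(X_{j_{k-1}}, u_{k-1})$ in every such play. Falsifier's move to $(X_{j_k}, u_k)$ is therefore available from that same position. We concatenate: start with a witness play reaching $(X_j, u_0)$, then append the Falsifier-mediated steps one after another. The resulting single play has the shape~\eqref{eq:play-transitive-closure} and is consistent with $\sigma$.

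With $\hat{R}_j = P_j$ in hand, both directions follow. For the only-if direction, take $R_j \coloneqq \hat{R}_j$; it is well-founded, and condition (a) holds by $P_j \subseteq \hat{R}_j$. For the if direction, given any well-founded $R_j$ satisfying (a), we have $\hat{R}_j = P_j \subseteq R_j$. A subrelation of a well-founded relation is well-founded, so each $\hat{R}_j$ is well-founded, $R^\sigma$ is parity, and Lemma~\ref{lem:winning-condition-parity-relation} gives the win.

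The main obstacle is the concatenation step in $\hat{R}_j \subseteq P_j$. One must check that gluing plays is legal, and this is exactly where memorylessness is used: the Verifier move at a shared position does not depend on history, so a Falsifier response seen in one play is available in any other play reaching that position. One must also make sure the priority bookkeeping in~\eqref{eq:play-transitive-closure}, namely $j_0 = j_m = j$ and $j_k > j$ in between, matches the definition of $\hat{R}_j$ exactly.
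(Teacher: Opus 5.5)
Your proposal is correct and follows the same route as the paper. It combines Lemma~\ref{lem:winning-condition-parity-relation} with Proposition~\ref{prop:parity-condition-streett-condition} and Lemma~\ref{lem:streett-relation-well-founded-relation}, and then uses memorylessness of $\sigma$ to glue the witnessing plays into a single play of the form~\eqref{eq:play-transitive-closure}. Your explicit two-way identification of $\hat{R}_j$ with the endpoint pairs of such plays, together with the remark that a subrelation of a well-founded relation is well-founded, spells out what the paper compresses into ``by definition''.
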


Based on the above characterization, we define the following two-player game for $\mu$CLP.
\begin{definition}\label{def:well-founded-game}
	Given a $\mu$CLP $E = \{ X_1(x_1) =_{\eta_1} \phi_1; \dots; X_n(x_n) =_{\eta_n} \phi_n \}$, we define a game $\WellFoundedGame{E}$ played by Verifier (Player 1) and Falsifier (Player 2) as follows.
	\begin{itemize}
		\item Verifier's positions: $(X_i, v; h_1, \dots, h_{i - 1})$ where $v \in \mathcal{D}_i$ and $h_j \in \mathcal{D}_j \cup \{ \bot \}$ for each $j$.
		\item Falsifier's positions: $(\mathcal{A}_1, \dots, \mathcal{A}_n; h_1, \dots, h_n)$ where $\mathcal{A}_j \subseteq \mathcal{D}_j$ and $h_j \in \mathcal{D}_j \cup \{ \bot \}$ for each $j$.
		\item Verifier's moves: $(X_i, v; h_1, \dots, h_{i - 1}) \to (\mathcal{A}_1, \dots, \mathcal{A}_n; h_1, \dots, h_{i - 1}, v, \bot, \dots, \bot)$ such that $\interpret{\phi_i}(\mathcal{A}_1, \dots, \mathcal{A}_n)(v)$ is true.
		\item Falsifier's moves: $(\mathcal{A}_1, \dots, \mathcal{A}_n; h_1, \dots, h_n) \to (X_j, v; h_1, \dots, h_{j - 1})$ such that $v \in \mathcal{A}_j$.
	\end{itemize}
\end{definition}
The game in Definition~\ref{def:well-founded-game} is similar to Definition~\ref{def:game-semantics} except that the components $h_1, \dots, h_n$ are added to each position.
For each $i$, the component $h_i$ is used to remember the last visit to $(X_i, v)$, but it is reset to $\bot$ when $(X_j, u)$ for some $j < i$ is visited after the last visit to $(X_i, v)$.
Thus, the components $h_1, \dots, h_n$ maintain the partial history information required in Corollary~\ref{cor:winning-condition-well-founded-relation}.
Although only the components $h_j$ with $\eta_j = \mu$ will be used in the winning condition, we also keep the unnecessary components corresponding to $\eta_j = \nu$ for simplicity and readability.

\begin{definition}[strategy and winning condition]
	We define a \emph{strategy} for $\WellFoundedGame{E}$ (Definition~\ref{def:well-founded-game}) in the same way as that for parity games (Definition~\ref{def:strategy}).
	A strategy is \emph{strongly memoryless} if it is memoryless and does not depend on $h_1, \dots, h_n$.
	We say that \emph{Verifier wins from a position $(X_i, v; h_1, \dots, h_i)$} if there exist a family of well-founded relations $R = \{ R_j \}_{j \in \{ j \mid \eta_j = \mu \}}$ and a strategy for Verifier such that for each maximal play $\rho$ consistent with the strategy,
	\begin{itemize}
		\item if $\rho$ is finite, then it ends in a Falsifier's position (i.e., Verifier never gets stuck), and
		\item if $\rho$ is infinite and $\eta_j = \mu$, then for each Falsifier's move
		\[ (\mathcal{A}_1, \dots, \mathcal{A}_n; h_1, \dots, h_n) \to (X_j, u; h_1, \dots, h_{j - 1}) \qquad \text{in $\rho$} \]
		we have $(h_j, u) \in R_j$ if $h_j \neq \bot$.
	\end{itemize}
	Note that the family of well-founded relations $R$ is fixed across possible plays consistent with the strategy.
	Therefore, the winning condition for Verifier is defined in terms of all plays consistent with the strategy, rather than individual plays.
	We also do not define the winning condition for Falsifier.
\end{definition}

In what follows, we fix the initial position in $\WellFoundedGame{E}$ as $(X_i, v; \bot, \dots, \bot)$.
By a slight abuse of notation, we write the set of (finite or infinite) plays in $\WellFoundedGame{E}$ starting from a position $(X_i, v; h_1, \dots, h_n)$ as $\Play[(X_i, v; h_1, \dots, h_n)]{\WellFoundedGame{E}}$.
It is easy to see that if $\rho \in \Play[(X_i, v; \bot, \dots, \bot)]{\WellFoundedGame{E}}$, then by removing $h_1, \dots, h_n$ from each position in $\rho$, we can obtain a play $\rho' \in \Play[(X_i, v)]{\GameSemantics{E}}$ where $\GameSemantics{E}$ is the game semantics defined in Definition~\ref{def:game-semantics}.
In fact, this mapping is a bijection between $\Play[(X_i, v; \bot, \dots, \bot)]{\WellFoundedGame{E}}$ and $\Play[(X_i, v)]{\GameSemantics{E}}$ since we can uniquely recover $h_1, \dots, h_n$ from $\rho' \in \Play[(X_i, v)]{\GameSemantics{E}}$.
Moreover, this bijection induces a bijection between (history-dependent) strategies for $\WellFoundedGame{E}$ and those for $\GameSemantics{E}$, assuming that the initial positions are fixed as $(X_i, v; \bot, \dots, \bot)$ and $(X_i, v)$, respectively.
Thus, we often identify a strategy for $\WellFoundedGame{E}$ with the corresponding strategy for $\GameSemantics{E}$.
In particular, memoryless strategies for $\GameSemantics{E}$ correspond to strongly memoryless strategies for $\WellFoundedGame{E}$.

Moreover, if a play in $\Play[(X_i, v)]{\GameSemantics{E}}$ has the form~\eqref{eq:play-transitive-closure}, then the last move of the corresponding play in $\Play[(X_i, v; \bot, \dots, \bot)]{\WellFoundedGame{E}}$ has the form
$(\mathcal{A}_{m, 1}, \dots, \mathcal{A}_{m, n}; h_1, \dots, h_n) \to (X_{j_m}, u_m; h_1, \dots, h_{j_m - 1})$
where $h_{j_m} = u_0$ and $j_m = j$.
Conversely, if a play in $\Play[(X_i, v; \bot, \dots, \bot)]{\WellFoundedGame{E}}$ has the last move of the form
$(\mathcal{A}_1, \dots, \mathcal{A}_n; h_1, \dots, h_n) \to (X_j, u; h_1, \dots, h_{j - 1})$
with $h_j \neq \bot$, then the corresponding play in $\Play[(X_i, v)]{\GameSemantics{E}}$ has the form~\eqref{eq:play-transitive-closure} with $u_0 = h_j$ and $u_m = u$.
By Corollary~\ref{cor:winning-condition-well-founded-relation}, this correspondence gives the following proposition.

\begin{proposition}\label{prop:strongly-memoryless-well-founded-game-parity-game}
	Verifier wins from a position $(X_i, v)$ in $\GameSemantics{E}$ (Definition~\ref{def:game-semantics}) if and only if Verifier wins from a position $(X_i, v; \bot, \dots, \bot)$ in $\WellFoundedGame{E}$ with a strongly memoryless strategy.
	\iflong\else
	\qed
	\fi
\end{proposition}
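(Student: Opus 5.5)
The plan is to reduce the proposition to Corollary~\ref{cor:winning-condition-well-founded-relation}, using the bijection between plays (and strategies) of $\GameSemantics{E}$ from $(X_i, v)$ and of $\WellFoundedGame{E}$ from $(X_i, v; \bot, \dots, \bot)$. Under this bijection, memoryless strategies for $\GameSemantics{E}$ correspond exactly to strongly memoryless strategies for $\WellFoundedGame{E}$. The argument then amounts to checking that, for a fixed memoryless $\sigma$ and a fixed family $R = \{ R_j \}_{\eta_j = \mu}$, conditions (a) and (b) of the corollary coincide with the two clauses of the winning condition of $\WellFoundedGame{E}$.

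Condition (b) and the first clause match directly. The bijection preserves finiteness, maximality and the owner of the last position, since it only forgets or recovers the $h$-components and the moves are identical.

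For the second clause, I would use the correspondence stated just before the proposition. A prefix of a play in $\GameSemantics{E}$ has the form~\eqref{eq:play-transitive-closure} with endpoints $u_0, u_m$ and index $j$ exactly when the last move of the corresponding play in $\WellFoundedGame{E}$ is $(\dots; h_1, \dots, h_n) \to (X_j, u_m; \dots)$ with $h_j = u_0 \neq \bot$. To justify this, I would verify the invariant that, in any play starting from $(X_i, v; \bot, \dots, \bot)$, $h_j$ at a Falsifier position equals the argument of the most recent visit to $X_j$, provided no $X_{j'}$ with $j' < j$ has been visited since, and equals $\bot$ otherwise. This follows by induction on the length of the play from the update $h_1, \dots, h_{i-1}, v, \bot, \dots, \bot$ in Verifier's move.

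There is a mismatch to resolve. The clause in $\WellFoundedGame{E}$ only constrains moves occurring in \emph{infinite} plays, whereas (a) constrains all finite plays of the form~\eqref{eq:play-transitive-closure} consistent with $\sigma$. For the direction from $\WellFoundedGame{E}$ to $\GameSemantics{E}$, I would extend such a finite consistent play to a maximal consistent one. By the first clause this extension either ends at a Falsifier position with no successor, or is infinite. In the finite case I would handle it by restricting attention to plays that matter: Corollary~\ref{cor:winning-condition-well-founded-relation} only needs the parity property on infinite plays. So I would instead argue directly that every infinite consistent play satisfies the parity condition. Along such a play, if the minimal priority $j$ seen infinitely often had $\eta_j = \mu$, then after the last visit to any index below $j$, consecutive visits to $X_j$ produce an infinite $R_j$-descending chain via the $h_j$-check, which contradicts well-foundedness. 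Hence $\sigma$ is winning in $\GameSemantics{E}$. For the converse direction, (a) applied to every finite prefix of an infinite play gives the second clause immediately.

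I expect the main obstacle to be this asymmetry between finite-prefix and infinite-play constraints, together with the bookkeeping of the $h$-invariant. Routing the hard direction through the parity condition, rather than through (a) literally, avoids the difficulty.
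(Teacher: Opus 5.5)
Your proposal is correct. It agrees with the paper in the direction from $\GameSemantics{E}$ to $\WellFoundedGame{E}$, but takes a different route in the converse.

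The paper proves both directions by matching the winning clauses of $\WellFoundedGame{E}$ with conditions (a) and (b) of Corollary~\ref{cor:winning-condition-well-founded-relation}, keeping the same family $R$. In the converse it concludes $(h_j, u_m) \in R_j$ for every consistent play of the form~\eqref{eq:play-transitive-closure}. As you noticed, that step is not literally justified, because the winning condition of $\WellFoundedGame{E}$ only constrains moves on infinite consistent plays. A move on a play whose consistent extensions are all finite (e.g.\ Verifier later choosing $\mathcal{A}_1 = \dots = \mathcal{A}_n = \emptyset$) need not satisfy it, so (a) can fail for that $R$. One can repair this by replacing $R_j$ with the pairs $(u_0, u_m)$ realised by such plays, but proving that set well-founded needs memorylessness and essentially your argument.

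Your direct argument avoids the issue altogether. Finite maximal consistent plays end at Falsifier positions. On an infinite consistent play, if the least infinitely recurring index $j$ had $\eta_j = \mu$, the $h_j$-invariant would yield an infinite $R_j$-chain. This is exactly the reasoning of Lemma~\ref{lem:history-dependent-well-founded-game-parity-game} (through Lemma~\ref{lem:sequence-streett-condition-well-founded-relation}). Since it never uses memorylessness, it gives the stronger history-dependent statement for free.

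What the paper's route buys is that the proposition becomes a literal restatement of the corollary via the prefix/last-move correspondence, parallel to Proposition~\ref{prop:strongly-memoryless-disjunctively-well-founded-game-parity-game}.

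In writing it up, drop your intermediate attempt to extend a finite consistent play to a maximal one. It does not help when that extension is finite, and your final argument makes it unnecessary.
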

\iflong
\begin{proof}
	If Verifier wins from a position $(X_i, v)$ in $\GameSemantics{E}$, then by Corollary~\ref{cor:winning-condition-well-founded-relation}, there exists a memoryless strategy $\sigma$ and a well-founded relation $R$ that satisfy the condition in Corollary~\ref{cor:winning-condition-well-founded-relation}.
	We show that the corresponding strongly memoryless strategy for Verifier in $\WellFoundedGame{E}$ witnesses that Verifier wins from $(X_i, v; \bot, \dots, \bot)$.
	For any maximal play $\rho$ consistent with the strategy $\sigma$, if $\rho$ is finite, then it ends in a Falsifier's position.
	If $\rho$ is infinite, then for each Falsifier's move
	\[ (\mathcal{A}_1, \dots, \mathcal{A}_n; h_1, \dots, h_n) \to (X_j, u; h_1, \dots, h_{j - 1}) \]
	in $\rho$, if $\eta_j = \mu$ and $h_j \neq \bot$, then it must be the case that the corresponding play in $\GameSemantics{E}$ has the prefix of the form~\eqref{eq:play-transitive-closure} with $u_0 = h_j$ and $u_m = u$, and hence $(h_j, u) \in R_j$.

	Conversely, suppose that Verifier wins from a position $(X_i, v; \bot, \dots, \bot)$ with a strongly memoryless strategy in $\WellFoundedGame{E}$.
	We show that the corresponding memoryless strategy for Verifier in $\GameSemantics{E}$ witnesses that Verifier wins from $(X_i, v)$.
	For any finite maximal play $\rho$ in $\GameSemantics{E}$ consistent with the strategy, it ends in a Falsifier's position.
	For any play $\rho$ in $\GameSemantics{E}$ consistent with the strategy, if it has the form~\eqref{eq:play-transitive-closure}, then the last move of the corresponding play in $\WellFoundedGame{E}$ is of the form
	\[ (\mathcal{A}_{m, 1}, \dots, \mathcal{A}_{m, n}; h_1, \dots, h_n) \to (X_{j_m}, u_m; h_1, \dots, h_{j_m - 1}) \]
	with $h_{j_m} = u_0$ and $j_m = j$, and hence $(h_j, u_m) \in R_j$ if $\eta_j = \mu$.
	By Corollary~\ref{cor:winning-condition-well-founded-relation}, this implies that Verifier wins from $(X_i, v)$ in $\GameSemantics{E}$.
\end{proof}
\fi

The proof of Proposition~\ref{prop:strongly-memoryless-well-founded-game-parity-game} (specifically, Corollary~\ref{cor:winning-condition-well-founded-relation}) depends on strong memorylessness of the strategy for $\WellFoundedGame{E}$.
However, the result can be generalized to history-dependent strategies.

In Lemma~\ref{lem:streett-relation-well-founded-relation}, we only consider infinite sequences $x_0 x_1 \dots$ such that $(x_m, x_{m+1}) \in R$ for every $m$ where $R$ is a binary relation.
For an arbitrary infinite sequence, we have the following lemma.
\begin{lemma}\label{lem:sequence-streett-condition-well-founded-relation}
	Let $R$ be a well-founded relation on $\mathcal{D}$ and $x_0 x_1 \cdots \in \mathcal{D}^{\omega}$ be an infinite sequence.
	Assume that for any $l$ and $m$ with $l < m$, if $x_l, x_m \in A \setminus B$ and $x_{l + 1}, \dots, x_{m - 1} \notin A \cup B$, then $(x_l, x_m) \in R$.
	Then, the sequence $x_0 x_1 \cdots$ satisfies the Streett condition $\{ (A, B) \}$.
\end{lemma}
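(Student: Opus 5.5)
We argue by contradiction, in the style of the second half of the proof of Lemma~\ref{lem:streett-relation-well-founded-relation}. Suppose the sequence $x_0 x_1 \cdots$ violates the Streett condition $\{ (A, B) \}$. Then $x_k \in A$ for infinitely many $k$, but $x_k \in B$ for only finitely many $k$. Choose $N$ such that $x_k \notin B$ for every $k \ge N$.

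Every index $k \ge N$ with $x_k \in A$ then satisfies $x_k \in A \setminus B$, and there are infinitely many such indices. List them in increasing order as $N \le k_0 < k_1 < k_2 < \cdots$.

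Next we check the hypothesis of the lemma for each consecutive pair $(k_j, k_{j+1})$. By construction $x_{k_j}, x_{k_{j+1}} \in A \setminus B$. Take any intermediate index $m$ with $k_j < m < k_{j+1}$. Since $m \ge N$, we have $x_m \notin B$. Since $m$ lies strictly between two consecutive indices in the enumeration of all $A$-positions beyond $N$, we also have $x_m \notin A$. Hence $x_m \notin A \cup B$.

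The assumption of the lemma, applied with $l = k_j$ and $m = k_{j+1}$, therefore gives $(x_{k_j}, x_{k_{j+1}}) \in R$ for every $j$. So $x_{k_0} \mathrel{R} x_{k_1} \mathrel{R} x_{k_2} \mathrel{R} \cdots$ is an infinite $R$-chain, which contradicts the well-foundedness of $R$ (with the convention used throughout the paper: there is no infinite sequence along $R$). Hence the sequence satisfies $\{ (A, B) \}$.

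The only delicate step is making sure that the intermediate elements lie outside $A \cup B$, not merely outside $A$. This is exactly why we first discard the finite prefix containing the $B$-elements and only then take consecutive $A$-indices. No other obstacle is expected.
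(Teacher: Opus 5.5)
Your proof is correct and follows essentially the same route as the paper's. The paper likewise passes to a tail free of $B$, takes consecutive $A$-indices there (so the intermediate elements avoid $A \cup B$), and obtains an infinite $R$-chain contradicting well-foundedness.
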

\begin{proof}
	Suppose that $x_0 x_1 \dots \in T$ violates the Streett condition.
	Then, there exists an infinite sequence $m_0 < m_1 < m_2 < \cdots$ such that (a) for any $m \ge m_0$, $x_m \notin B$, (b) $x_{m_k} \in A$ for every $k \ge 0$, and (c) for any $m$ such that $m_k < m < m_{k + 1}$, $x_m \notin A$.
	By assumption, we have $(x_{m_k}, x_{m_{k + 1}}) \in R$ for every $k \ge 0$, which contradicts the well-foundedness of $R$.
\end{proof}

Then, we can show that allowing history-dependent strategies does not increase the winning region for Verifier in $\WellFoundedGame{E}$.
\begin{lemma}\label{lem:history-dependent-well-founded-game-parity-game}
	If Verifier has a history-dependent winning strategy $\sigma$ in $\WellFoundedGame{E}$ from the position $(X_i, v; \bot, \dots, \bot)$, then $\sigma$ is also a winning strategy for Verifier in $\GameSemantics{E}$ from the position $(X_i, v)$.
\end{lemma}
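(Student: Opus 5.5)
We identify $\sigma$ with the corresponding strategy for $\GameSemantics{E}$ via the bijection between plays in $\Play[(X_i, v; \bot, \dots, \bot)]{\WellFoundedGame{E}}$ and $\Play[(X_i, v)]{\GameSemantics{E}}$. We then check that every maximal play $\rho'$ in $\GameSemantics{E}$ consistent with $\sigma$ is winning for Verifier. Let $\rho$ be its counterpart in $\WellFoundedGame{E}$; it is maximal and consistent with $\sigma$. If $\rho'$ is finite, the winning condition of $\WellFoundedGame{E}$ says $\rho$ ends at a Falsifier's position with no successor, and so does $\rho'$. Hence $\rho'$ is winning.

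Now let $\rho'$ be infinite, and let $x_0 x_1 \cdots \in \mathcal{D}^{\omega}$ be the sequence of values at its Verifier positions, with $x_m \in \mathcal{D}_{j_m}$. The Falsifier positions have priority $n+1$, so they never contribute the minimal infinitely occurring priority. Thus $\rho'$ satisfies the parity condition iff $x_0 x_1 \cdots$ satisfies $(p, I_{\nu})$. By Proposition~\ref{prop:parity-condition-streett-condition}, it suffices to show that for every $j$ with $\eta_j = \mu$ the sequence satisfies the Streett condition $\{(\mathcal{D}_{\le j}, \mathcal{D}_{<j})\}$. For this we apply Lemma~\ref{lem:sequence-streett-condition-well-founded-relation} with $A = \mathcal{D}_{\le j}$ and $B = \mathcal{D}_{<j}$. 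Then $A \setminus B = \mathcal{D}_j$ and $\mathcal{D} \setminus (A \cup B) = \mathcal{D}_{>j}$. As the well-founded relation we take $R_j$ from the family witnessing that $\sigma$ is winning, extended trivially to $\mathcal{D}$ by adding no pairs outside $\mathcal{D}_j \times \mathcal{D}_j$; this keeps it well-founded.

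The key step is to verify the hypothesis of Lemma~\ref{lem:sequence-streett-condition-well-founded-relation}. Take $l < m$ with $x_l, x_m \in \mathcal{D}_j$ and $x_{l+1}, \dots, x_{m-1} \in \mathcal{D}_{>j}$. We must show that at the Falsifier move of $\rho$ leading to $(X_j, x_m; \dots)$, the component $h_j$ equals $x_l$. This follows by tracking the history components through the moves of $\WellFoundedGame{E}$:
\begin{itemize}
	\item At the Verifier move from $(X_j, x_l; h_1,\dots,h_{j-1})$, the component $h_j$ is set to $x_l$.
	\item Each intermediate Falsifier move to $(X_{k}, x_{l'}; h_1, \dots, h_{k-1})$ with $k > j$ keeps $h_1, \dots, h_j$ unchanged, since $j \le k-1$.
	\item The subsequent Verifier move from $X_k$ rewrites only the components $h_k, h_{k+1}, \dots$, all of which have index greater than $j$.
\end{itemize}
An induction on $m - l$ makes this precise. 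Consequently, at the last Falsifier move before $x_m$ we have $h_j = x_l \ne \bot$. The winning condition, which applies because $\rho$ is infinite and $\eta_j = \mu$, then gives $(x_l, x_m) \in R_j$. Lemma~\ref{lem:sequence-streett-condition-well-founded-relation} therefore yields the Streett condition for each $\mu$-index $j$. Hence $x_0 x_1 \cdots$ satisfies the parity condition, $\rho'$ is winning, and $\sigma$ is a winning strategy in $\GameSemantics{E}$.

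The main obstacle is this bookkeeping of the $h$ components, namely that $h_j$ survives exactly through visits of higher index. Note that no memorylessness is needed anywhere. The argument is carried out play by play rather than through a relation on $\mathcal{D}$, which is precisely why Lemma~\ref{lem:sequence-streett-condition-well-founded-relation} is used here in place of Lemma~\ref{lem:streett-relation-well-founded-relation}.
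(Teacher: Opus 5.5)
Your proof is correct and follows the paper's own argument. In the infinite case you reduce to the Streett conditions $\{(\mathcal{D}_{\le j}, \mathcal{D}_{<j})\}$ for each $\mu$-index $j$, and you discharge the hypothesis of Lemma~\ref{lem:sequence-streett-condition-well-founded-relation} using the winning condition of $\WellFoundedGame{E}$ on the corresponding play. The only difference is that you track the $h$-components explicitly to show $h_j = x_l$ at the relevant Falsifier move; the paper takes this from its earlier remark on the correspondence between plays of $\GameSemantics{E}$ and $\WellFoundedGame{E}$.
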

\begin{proof}
	Let $R = \{ R_j \}_{j \in \{ j \mid \eta_j = \mu \}}$ be a family of well-founded relations that witness the winning condition in $\WellFoundedGame{E}$.
	Let $\rho$ be a maximal play in $\GameSemantics{E}$ consistent with $\sigma$.
	If $\rho$ is finite, then it ends in a Falsifier's position by the winning condition in $\WellFoundedGame{E}$.
	If $\rho$ is infinite, then it suffices to show that for each $j$ with $\eta_j = \mu$, $\rho$ satisfies the Streett condition $\{ (\mathcal{D}_{\le j}, \mathcal{D}_{< j}) \}$.
	Suppose that $\rho$ contains moves of the form~\eqref{eq:play-transitive-closure} for $j$ with $\eta_j = \mu$.
	By considering the corresponding play in $\WellFoundedGame{E}$, we have $(u_0, u_m) \in R_j$ since $\sigma$ and $R$ witness the winning condition in $\WellFoundedGame{E}$.
	By Lemma~\ref{lem:sequence-streett-condition-well-founded-relation}, this implies that $\rho$ satisfies the Streett condition $\{ (\mathcal{D}_{\le j}, \mathcal{D}_{< j}) \}$.
\end{proof}

\begin{theorem}\label{thm:well-founded-game-parity-game}
	Verifier wins from a position $(X_i, v)$ in $\GameSemantics{E}$ (Definition~\ref{def:game-semantics}) if and only if Verifier wins from a position $(X_i, v; \bot, \dots, \bot)$ in $\WellFoundedGame{E}$.
\end{theorem}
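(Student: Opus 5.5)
The theorem follows by combining Proposition~\ref{prop:strongly-memoryless-well-founded-game-parity-game} and Lemma~\ref{lem:history-dependent-well-founded-game-parity-game}, one for each direction.

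For the ``only if'' direction, assume Verifier wins from $(X_i, v)$ in $\GameSemantics{E}$. Proposition~\ref{prop:strongly-memoryless-well-founded-game-parity-game} then gives a strongly memoryless strategy with which Verifier wins from $(X_i, v; \bot, \dots, \bot)$ in $\WellFoundedGame{E}$. A strongly memoryless strategy is in particular a strategy in the sense of Definition~\ref{def:strategy}. The winning condition in $\WellFoundedGame{E}$ only asks for some strategy together with a family of well-founded relations $R$ satisfying the two clauses, so this direction is immediate.

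For the ``if'' direction, assume Verifier wins from $(X_i, v; \bot, \dots, \bot)$ in $\WellFoundedGame{E}$. This is witnessed by a possibly history-dependent strategy $\sigma$ and a family $R = \{ R_j \}_{\eta_j = \mu}$. We use the bijection between plays, and hence between strategies, of $\WellFoundedGame{E}$ from $(X_i, v; \bot, \dots, \bot)$ and of $\GameSemantics{E}$ from $(X_i, v)$, described before Proposition~\ref{prop:strongly-memoryless-well-founded-game-parity-game}. Through it we regard $\sigma$ as a strategy for $\GameSemantics{E}$. Lemma~\ref{lem:history-dependent-well-founded-game-parity-game} then states that $\sigma$ is winning in $\GameSemantics{E}$ from $(X_i, v)$.

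The only point needing care is in that lemma: an infinite play satisfies the parity condition once it satisfies the Streett condition $\{ (\mathcal{D}_{\le j}, \mathcal{D}_{< j}) \}$ for every $j$ with $\eta_j = \mu$. This is exactly Proposition~\ref{prop:parity-condition-streett-condition}, since Falsifier's positions have priority $n+1$ and do not affect the minimum priority seen infinitely often. Each Streett pair is then supplied by Lemma~\ref{lem:sequence-streett-condition-well-founded-relation}, applied to the sequence of Verifier positions of the play. The hypothesis of that lemma, that $(u_l, u_m) \in R_j$ whenever $u_l, u_m \in \mathcal{D}_j$ with all intermediate priorities $> j$, is precisely what the $h_j$-bookkeeping provides: $h_j$ holds the last visit to $X_j$ and is reset by any lower-indexed visit. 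With this observation, combining the two directions proves the theorem.
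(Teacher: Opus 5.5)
Your proposal is correct and matches the paper's proof exactly. The ``only if'' direction is Proposition~\ref{prop:strongly-memoryless-well-founded-game-parity-game}, since a strongly memoryless strategy is in particular a strategy, and the ``if'' direction is Lemma~\ref{lem:history-dependent-well-founded-game-parity-game}; your remarks on the priority $n+1$ of Falsifier's positions and on applying Lemma~\ref{lem:sequence-streett-condition-well-founded-relation} to the sequence of Verifier positions only make explicit what the paper's proof of that lemma leaves implicit.
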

\begin{proof}
	The if part follows from Lemma~\ref{lem:history-dependent-well-founded-game-parity-game}.
	The only if part follows from Proposition~\ref{prop:strongly-memoryless-well-founded-game-parity-game}.
\end{proof}

\subsection{Translation of $\mu$CLP}

To solve a $\mu$CLP $E$, \citet{UnnoPOPL2023} proposed a $\mu$-to-$\nu$ transformation $\mathbf{elim}_{\mu}(E)$ using well-founded relations.
We show that finding a post-fixed point of $\mathbf{elim}_{\mu}(E)$ can be seen as finding a winning strategy for Verifier in the game $\WellFoundedGame{E}$ defined in Definition~\ref{def:well-founded-game}.

\begin{definition}[$\mathbf{elim}_{\mu}$ in \cite{UnnoPOPL2023}, slightly modified]
	Given a $\mu$CLP $E = \{ X_1(x_1) =_{\eta_1} \phi_1; \dots; X_n(x_n) =_{\eta_n} \phi_n \}$, a $\mu$CLP $\mathbf{elim}_{\mu}(E)$ is defined by replacing the $i$-th equation $X_i(x_i) =_{\eta_i} \phi_i$ in $E$ with the following equation for each $i = 1, \dots, n$.
	\[ \hat{X}_i(\hat{x}_1, \dots, \hat{x}_{i - 1}, x_i) \quad=_{\nu}\quad \phi_i[\lambda y. \psi_{i, 1}/X_1, \dots, \lambda y. \psi_{i, n}/X_n] \]
	where for each $j$, $\hat{x}_j$ is a fresh term variable whose sort $\sort(\hat{x}_j)$ extends that of $x_j$ with an additional element $\bot$, i.e., $\interpret{\sort(\hat{x}_j)} = \interpret{\sort(x_j)} \cup \{ \bot \}$; and $\psi_{i, j}$ is a formula defined as follows.
	\[ \psi_{i, j} \quad\coloneqq\quad \begin{cases}
		\hat{X}_{j}(\hat{x}_1, \dots, \hat{x}_{j - 1}, y) \land (\hat{x}_{j} \neq \bot \implies R_{j}(\hat{x}_{j}, y)) & j < i \\
		\hat{X}_{i}(\hat{x}_1, \dots, \hat{x}_{i - 1}, y) \land R_{i}(x_i, y) & j = i \\
		\hat{X}_{j}(\hat{x}_1, \dots, \hat{x}_{i - 1}, x_{i}, \bot, \dots, \bot, y) & j > i
	\end{cases} \]
	Here, $R_i \subseteq \mathcal{D}_i \times \mathcal{D}_i$ is a well-founded relation if $\eta_i = \mu$, and $R_i(x, y) = \True$ if $\eta_i = \nu$ (i.e.\ the conjunct involving $R_i$ can be omitted if $\eta_i = \nu$).
\end{definition}

\citet{UnnoPOPL2023} showed that if $(\hat{A}_1, \dots, \hat{A}_n)$ is a post-fixed point of $\mathbf{elim}_{\mu}(E)$, then $x_i \in X_i^{*}$ holds for each $i$ and $x_i$ such that $(\bot, \dots, \bot, x_i) \in \hat{A}_i$.
We interpret this result in terms of the winning strategy for Verifier in $\WellFoundedGame{E}$.

\begin{remark}
	For presentation purposes, we slightly modified the original transformation in~\cite{UnnoPOPL2023}, but the essence is the same.
	Specifically, we do not actually need to introduce additional arguments $\hat{x}_j$ and relations $W_j$ for $\nu$-equations.
	The original transformation in \cite{UnnoPOPL2023} is defined so that it does not introduce such redundant arguments and relations.
	Also, the sort of $\hat{x}_j$ is defined as the product $\mathtt{sort}(x_j) \times \mathtt{bool}$ of the sort of $x_j$ and the sort of boolean values in \cite{UnnoPOPL2023} so that a value $v$ of sort $\mathtt{sort}(x_j)$ can be encoded as $(v, \True)$ and $\bot$ can be encoded as $({*}, \False)$ where ${*}$ is an arbitrary value.
\end{remark}

\begin{theorem}\label{thm:well-founded-strategy}
	Let $(\hat{\mathcal{A}}_1, \dots, \hat{\mathcal{A}}_n)$ be a post-fixed point of $\mathbf{elim}_{\mu}(E)$.
	Then, there exists a memoryless (but not strongly memoryless) winning strategy for Verifier in $\WellFoundedGame{E}$ from any position $(X_i, v; h_1, \dots, h_{i - 1})$ such that $(h_1, \dots, h_{i - 1}, v) \in \hat{\mathcal{A}}_i$.
	In particular, if $(\bot, \dots, \bot, v) \in \hat{\mathcal{A}}_i$, then the solution of $E$ satisfies $v \in X_i^{*}$.
\end{theorem}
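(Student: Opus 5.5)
We construct the strategy directly from the post-fixed point, in the same way as the proof of Theorem~\ref{thm:soundness}. Unfolding the post-fixed point condition for the $i$-th equation of $\mathbf{elim}_{\mu}(E)$ at $(h_1, \dots, h_{i-1}, v) \in \hat{\mathcal{A}}_i$, we see that $\interpret{\phi_i}(\mathcal{B}_1, \dots, \mathcal{B}_n)(v)$ is true, where each $\mathcal{B}_j \subseteq \mathcal{D}_j$ is read off from $\psi_{i,j}$:
\begin{itemize}
\item for $j < i$, $\mathcal{B}_j = \{ y \mid (h_1, \dots, h_{j-1}, y) \in \hat{\mathcal{A}}_j \land (h_j \neq \bot \Rightarrow R_j(h_j, y)) \}$;
\item for $j = i$, $\mathcal{B}_i = \{ y \mid (h_1, \dots, h_{i-1}, y) \in \hat{\mathcal{A}}_i \land R_i(v, y) \}$;
\item for $j > i$, $\mathcal{B}_j = \{ y \mid (h_1, \dots, h_{i-1}, v, \bot, \dots, \bot, y) \in \hat{\mathcal{A}}_j \}$.
\end{itemize}
The strategy plays $(X_i, v; h_1, \dots, h_{i-1}) \to (\mathcal{B}_1, \dots, \mathcal{B}_n; h_1, \dots, h_{i-1}, v, \bot, \dots, \bot)$. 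This choice depends on the current position, including the $h$-components, so the strategy is memoryless but not strongly memoryless.

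The key step is an invariant. Call a Verifier position $(X_j, u; h'_1, \dots, h'_{j-1})$ \emph{good} if $(h'_1, \dots, h'_{j-1}, u) \in \hat{\mathcal{A}}_j$. From a good position, Verifier's move above is legal. Let $h'$ be the history tuple of the resulting Falsifier position, namely $(h_1, \dots, h_{i-1}, v, \bot, \dots, \bot)$. For every Falsifier answer $(X_j, u; h'_1, \dots, h'_{j-1})$ with $u \in \mathcal{B}_j$, the new position is again good. This is checked in the same three cases, using that $h'_1, \dots, h'_{j-1}$ is the prefix of $h'$:
\begin{itemize}
\item for $j < i$, the prefix is $h_1, \dots, h_{j-1}$;
\item for $j = i$, it is $h_1, \dots, h_{i-1}$;
\item for $j > i$, it is $h_1, \dots, h_{i-1}, v, \bot, \dots, \bot$.
\end{itemize}
These are exactly the argument tuples in the definition of $\psi_{i,j}$. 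Hence, by induction along the play, every Verifier position reached is good, so Verifier never gets stuck, and every finite maximal play ends at a Falsifier position.

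For the well-foundedness clause we take the given family $\{ R_j \}_{\eta_j = \mu}$ as the witness. Consider a Falsifier move $(\mathcal{B}_1, \dots, \mathcal{B}_n; h'_1, \dots, h'_n) \to (X_j, u; \dots)$ with $\eta_j = \mu$ and $h'_j \neq \bot$. Then $j \le i$, because for $j > i$ the component $h'_j$ is reset to $\bot$. If $j < i$, then $h'_j = h_j$, and membership $u \in \mathcal{B}_j$ gives $R_j(h_j, u)$. If $j = i$, then $h'_i = v$, and $u \in \mathcal{B}_i$ gives $R_i(v, u)$. So $(h'_j, u) \in R_j$ in every case, and the strategy is winning in $\WellFoundedGame{E}$ from every good position.

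The ``in particular'' part then follows from Theorem~\ref{thm:well-founded-game-parity-game}, or directly from Lemma~\ref{lem:history-dependent-well-founded-game-parity-game}. A winning strategy from $(X_i, v; \bot, \dots, \bot)$ yields that Verifier wins from $(X_i, v)$ in $\GameSemantics{E}$, and Theorem~\ref{thm:game-semantics} gives $v \in X_i^{*}$.

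The main obstacle is bookkeeping rather than any single step. The histories on the $\mathbf{elim}_{\mu}$ side must match those of $\WellFoundedGame{E}$ exactly, in particular the reset to $\bot$ for indices above $i$ and the truncation to $h_1, \dots, h_{j-1}$ on Falsifier moves. I would verify this match case by case for $j<i$, $j=i$, $j>i$ before writing the invariant argument.
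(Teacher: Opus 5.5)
Your proposal is correct and follows the paper's proof essentially step for step. Both arguments use the same Verifier move built from the sets read off $\psi_{i,1},\dots,\psi_{i,n}$, the same invariant $(h'_1,\dots,h'_{j-1},u)\in\hat{\mathcal{A}}_j$ checked in the three cases $j<i$, $j=i$, $j>i$, and the given $R_j$ as the well-foundedness witness, and both derive the ``in particular'' part from Theorem~\ref{thm:well-founded-game-parity-game} together with Theorem~\ref{thm:game-semantics}.
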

\begin{proof}
	The latter statement follows from Theorem~\ref{thm:well-founded-game-parity-game} and~\ref{thm:game-semantics}, so we focus on the former statement.
	We define a strategy for Verifier in $\WellFoundedGame{E}$ as follows.
	If the current position is $(X_i, v; h_1, \dots, h_{i - 1})$ such that $(h_1, \dots, h_{i - 1}, v) \in \hat{\mathcal{A}}_i$
	then since $(\hat{\mathcal{A}}_1, \dots, \hat{\mathcal{A}}_n)$ is a post-fixed point of $\mathbf{elim}_{\mu}(E)$,
	\[ \interpret{\phi_i[\lambda y. \psi_{i, 1}/X_1, \dots, \lambda y. \psi_{i, n}/X_n]}(\hat{\mathcal{A}}_1, \dots, \hat{\mathcal{A}}_n)(h_1, \dots, h_{i - 1}, v) = \interpret{\phi_i}(\hat{\mathcal{A}'}_1, \dots, \hat{\mathcal{A}'}_n)(v) = \True \]
	where $\hat{\mathcal{A}'}_1, \dots, \hat{\mathcal{A}'}_n$ are defined as follows.
	\[ \hat{\mathcal{A}'}_j \quad\coloneqq\quad \begin{cases}
		\{ u \mid (h_1, \dots, h_{j-1}, u) \in \hat{\mathcal{A}}_j \land (h_j \neq \bot \implies (h_j, u) \in R_j) \} & j < i \\
		\{ u \mid (h_1, \dots, h_{i-1}, u) \in \hat{\mathcal{A}}_i \land (v, u) \in R_i \} & j = i \\
		\{ u \mid (h_1, \dots, h_{i-1}, v, \bot, \dots, \bot, u) \in \hat{\mathcal{A}}_j \} & j > i
	\end{cases} \]
	Therefore, we define the next move of Verifier to be $(\hat{\mathcal{A}'}_1, \dots, \hat{\mathcal{A}'}_n; h_1, \dots, h_{i - 1}, v, \bot, \dots, \bot)$.
	\iflong

	It remains to show that this strategy is winning.
	For each Falsifier's response to the above strategy of Verifier:
	\[ (X_i, v; h_1, \dots, h_{i - 1}) \to (\hat{\mathcal{A}'}_1, \dots, \hat{\mathcal{A}'}_n; h_1, \dots, h_{i - 1}, v, \bot, \dots, \bot) \to (X_j, u; h'_1, \dots, h'_{j-1}) \]
	if $h_j \neq \bot$, then $j$ must be less than or equal to $i$.
	If $j < i$ and $h'_j = h_j \neq \bot$, then $(h_j, u) \in R_j$ by the definition of $\hat{\mathcal{A}'}_j$.
	If $j = i$, then $(v, u) \in R_i$.
	In either case, we have $(h'_1, \dots, h'_{j-1}, u) \in \hat{\mathcal{A}}_j$ again.
	Therefore, any maximal play $\rho$ consistent with the strategy satisfies the winning condition for Verifier in $\WellFoundedGame{E}$.
	\else
	It is straightforward to show that this strategy is winning.
	\fi
\end{proof}

Now, we show the relationship between $E[R]$ in Section~\ref{sec:validity-checking} and $\mathbf{elim}_{\mu}(E)$.
Recall that a post-fixed point of $E[R]$ induces a memoryless winning strategy for Verifier in $\GameSemantics{E}$ (Theorem~\ref{thm:soundness}) and that a memoryless winning strategy for Verifier in $\GameSemantics{E}$ corresponds to a strongly memoryless winning strategy for Verifier in $\WellFoundedGame{E}$ (Proposition~\ref{prop:strongly-memoryless-well-founded-game-parity-game}).
Thus, it is natural to ask whether a post-fixed point of $E[R]$ can be used to construct a post-fixed point of $\mathbf{elim}_{\mu}(E)$.
The following theorem answers this question in the affirmative.
Since $E[R]$ is complete (Theorem~\ref{thm:completeness}), this theorem also gives another proof of completeness of $\mathbf{elim}_{\mu}(E)$ \cite{UnnoPOPL2023}.

\begin{theorem}\label{thm:parity-relation-well-founded-relation}
	Let $R = \{ R_{i, j} \}_{i, j = 1, \dots, n}$ be a parity relation on $\mathcal{D}$ and $(\mathcal{A}_1, \dots, \mathcal{A}_n)$ be a post-fixed point of $E[R]$ (Definition~\ref{def:parity-relation-transformation}).
	Then, there exists a post-fixed point $(\hat{\mathcal{A}}_1, \dots, \hat{\mathcal{A}}_n)$ of $\mathbf{elim}_{\mu}(E)$ such that for any $x \in \mathcal{A}_i$, we have $(\bot, \dots, \bot, x) \in \hat{\mathcal{A}}_i$.
	Here, we use $\{ \hat{R}_{\mathcal{D}_{\le j}, \mathcal{D}_{< j}} \mid \eta_j = \mu \}$ as well-founded relations in $\mathbf{elim}_{\mu}(E)$ (cf.\ Lemma~\ref{lem:streett-relation-well-founded-relation}).
\end{theorem}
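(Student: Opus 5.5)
We will construct $(\hat{\mathcal{A}}_1, \dots, \hat{\mathcal{A}}_n)$ explicitly from $(\mathcal{A}_1, \dots, \mathcal{A}_n)$ and $R$, by recording in the extra arguments exactly the history information that $\WellFoundedGame{E}$ maintains along plays of the memoryless strategy from the proof of Theorem~\ref{thm:soundness}. Write $\hat{R}_j$ for $\hat{R}_{\mathcal{D}_{\le j}, \mathcal{D}_{< j}}$. This relation is well-founded by Lemma~\ref{lem:streett-relation-well-founded-relation}, since $R$ is $(\mathcal{D}_{\le j}, \mathcal{D}_{< j})$-Streett (Proposition~\ref{prop:parity-condition-streett-condition}).

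For each $i$, we put $(h_1, \dots, h_{i-1}, x) \in \hat{\mathcal{A}}_i$ iff $x \in \mathcal{A}_i$ and the history is \emph{reachable}. Reachability means there exist an $R$-chain $y_0 \mathrel{R} y_1 \mathrel{R} \cdots \mathrel{R} y_m = x$ with every $y_k \in \mathcal{A}_{p(y_k)}$ and $y_0 \in \mathcal{A}_{p(y_0)}$, and the $h_j$ are the values computed from this chain by the update rule of $\WellFoundedGame{E}$. Concretely, for each $j < i$, $h_j$ is the last $y_k$ with $p(y_k) = j$ such that all later $y_{k'}$ ($k < k' < m$) have priority $> j$; if there is none, $h_j = \bot$. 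The empty chain gives $(\bot, \dots, \bot, x) \in \hat{\mathcal{A}}_i$ for every $x \in \mathcal{A}_i$, which yields the required property. We may restrict to chains whose first element has minimal priority among the chain, or simply take all chains and read off $h$; since we only need a post-fixed point, a slightly larger set is harmless as long as the next step goes through.

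The main step is to verify the post-fixed point condition. Take $(h_1, \dots, h_{i-1}, x) \in \hat{\mathcal{A}}_i$ witnessed by a chain. Since $x \in \mathcal{A}_i$ and $(\mathcal{A}_j)_j$ is post-fixed for $E[R]$, we have $\interpret{\phi_i}(\mathcal{A}_1 \cap R_{i,1}(x), \dots, \mathcal{A}_n \cap R_{i,n}(x))(x)$. By monotonicity it suffices to show that each $u \in \mathcal{A}_j \cap R_{i,j}(x)$ belongs to the $j$-th argument set used in $\mathbf{elim}_\mu(E)$, namely $\{u \mid \psi_{i,j}\}$ evaluated at the current history. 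Extend the chain by $u$. The history read off the extended chain is precisely the argument of $\hat{X}_j$ in $\psi_{i,j}$ in each of the three cases: for $j<i$, the truncation to $h_1, \dots, h_{j-1}$; for $j=i$, the old $h_1, \dots, h_{i-1}$; for $j>i$, the old history, then $x$, then $\bot$'s. So $u \in \hat{\mathcal{A}}_j$ at the right arguments. It remains to check the well-founded side conditions. When $j<i$ and $h_j \neq \bot$, the chain segment from $h_j$ to $u$ runs through priorities $> j$ strictly between its endpoints and has both endpoints of priority $j$, so $(h_j, u) \in \hat{R}_j$ by definition. When $j = i$, the segment is the single step $x \mathrel{R} u$, so $(x, u) \in \hat{R}_i$. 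For $\nu$-indices the relations are $\True$. Here $\mathcal{D}_{\le j} \setminus \mathcal{D}_{<j} = \mathcal{D}_j$, and intermediates $\notin \mathcal{D}_{\le j}$ means priority $> j$, matching the definition of $\hat{R}_j$.

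The expected obstacle is bookkeeping: showing that the update of the read-off history under appending $u$ coincides exactly with the three cases of $\psi_{i,j}$, including the $j>i$ case where $h_i$ becomes $x$ and the $h_{i+1}, \dots, h_{j-1}$ reset to $\bot$. This requires that $x$ is the last chain element of priority $i$ with no lower priority after it, which holds trivially since $x$ is last. A second point is to allow $h_j$ to be defined only relative to the chain, not by a global invariant.
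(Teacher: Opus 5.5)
Your proof is correct and has the same structure as the paper's: build $\hat{\mathcal{A}}_i$ from $\mathcal{A}_i$ and $R$-chains, use monotonicity of $\phi_i$ to reduce the post-fixed-point condition to showing that $\mathcal{A}_j \cap R_{i,j}(x)$ is contained in the $j$-th argument set of $\mathbf{elim}_{\mu}(E)$, and prove that inclusion (including the $\hat{R}_{\mathcal{D}_{\le j},\mathcal{D}_{<j}}$ side conditions) by appending the single step $x \mathrel{R} u$. The only difference is the invariant. You keep exactly the histories read off from one joint chain inside $\mathcal{A}$, i.e.\ the positions reachable in the well-founded game under the $E[R]$-strategy. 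The paper instead uses a larger componentwise set: it requires only $x_i \in \mathcal{A}_i$ and, for each $\mu$-index $j < i$ with $\hat{x}_j \neq \bot$, some separate $R$-chain from $\hat{x}_j$ to $x_i$ through priorities $> j$ (its relation $Q_{j,i}$), leaving $\nu$-components free, which spares it from recomputing a joint history in the three cases of $\psi_{i,j}$.
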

\begin{proof}
	See
	\iflong
	Appendix~\ref{sec:proofs}
	\else
	Appendix~B
	\fi
	for the proof.
\end{proof}

%% file: sas19.tex
\section{Game Interpretation of the Transformation in \cite{KobayashiSAS2019}}
\label{sec:comparison-sas19}

We show how the $\mu$-to-$\nu$ transformation proposed by \citet{KobayashiSAS2019} can be interpreted in terms of game semantics.

Let $E = \{ X_1(x_1) =_{\eta_1} \phi_1; \dots; X_n(x_n) =_{\eta_n} \phi_n \}$ a $\mu$CLP.
In this section, we assume that $n = 2 n'$ is even and $\eta_i = \mu$ if and only if $i$ is odd without loss of generality.
\citet{KobayashiSAS2019} proposed the following $\mu$-to-$\nu$ transformation.

\begin{definition}
	We define a $\mu$CLP $E^{\mathsf{cnt}}$ by applying the following transformation to each equation in $E$.
	\begin{align*}
		X_{2 k}(x_{2 k}) =_{\eta_{2 k}} \phi_{2 k} \quad&\mapsto\quad \dot{X}_{2 k}(c_1, c_3, \dots, c_{2 n' - 1}, x_{2 k}) =_{\nu} \phi_{2 k}[\Theta_{2 k}] \\
		X_{2 k - 1}(x_{2 k - 1}) =_{\eta_{2 k - 1}} \phi_{2 k - 1} \quad&\mapsto\quad \dot{X}_{2 k - 1}(c_1, c_3, \dots, c_{2 n' - 1}, x_{2 k - 1}) =_{\nu} c_{2 k - 1} > 0 \land \phi_{2 k - 1}[\Theta_{2 k - 1}]
	\end{align*}
	where $\Theta_i$ is a substitution for predicate variables defined as follows.
	\[ \Theta_{2 k}(X_j) \coloneqq \dot{X}_j(c_1, \dots, c_{2 n' - 1}, {-}) \qquad \Theta_{2 k - 1}(X_j) \coloneqq \dot{X}_j(c_1, \dots, c_{2 k - 1} - 1, \dots, c_{2 n' - 1}, {-}) \]
	That is, when we unfold the fixed-point equation for a $\mu$-variable $\dot{X}_{2 k - 1}$, the corresponding counter $c_{2 k - 1}$ is decremented by one on the right-hand side of the equation, while the other counters remain unchanged.
\end{definition}

This transformation satisfies the property that $(c_1, \dots, c_{2 n' - 1}, x_i) \in \dot{X}^{*}_i$ implies $x_i \in X^{*}_i$ for each $i$ and $(c_1, \dots, c_{2 n' - 1}, x_i)$.
We interpret this result in terms of game semantics.

\begin{definition}\label{def:counter-game}
	Given a $\mu$CLP $E$, we define a game $\CounterGame{E}$ played by Verifier (Player 1) and Falsifier (Player 2) as follows.
	\begin{itemize}
		\item Verifier's positions: $(X_i, v; c_1, \dots, c_{2 n' - 1})$ where $v \in \mathcal{D}_i$ and $c_j \in \mathbb{N}$ for each $j$.
		\item Falsifier's positions: a tuple $(\mathcal{A}_1, \dots, \mathcal{A}_n; c_1, \dots, c_{2 n' - 1})$ where $\mathcal{A}_j \subseteq \mathcal{D}_j$ for each $j$.
		\item Verifier's moves: Let $(\mathcal{A}_1, \dots, \mathcal{A}_n)$ be such that $\interpret{\phi_i}(\mathcal{A}_1, \dots, \mathcal{A}_n)(v)$ is true.
		\begin{align*}
			(X_{2 k - 1}, v; c_1, \dots, c_{2 n' - 1}) &\to (\mathcal{A}_1, \dots, \mathcal{A}_n; c_1, \dots, c_{2 k - 1} - 1, \dots, c_{2 n' - 1}) \qquad \text{if $c_{2 k - 1} > 0$} \\
			(X_{2 k}, v; c_1, \dots, c_{2 n' - 1}) &\to (\mathcal{A}_1, \dots, \mathcal{A}_n; c_1, \dots, c_{2 n' - 1})
		\end{align*}
		\item Falsifier's moves: $(\mathcal{A}_1, \dots, \mathcal{A}_n; c_1, \dots, c_{2 n' - 1}) \to (X_j, v; c_1, \dots, c_{2 n' - 1})$ such that $v \in \mathcal{A}_j$.
	\end{itemize}
	A maximal play is winning for Verifier if (a) it is finite and ends in a Falsifier's position or (b) it is infinite.
\end{definition}

It is straightforward to see that if Verifier is winning from $(X_i, v; c_1, \dots, c_{2 n' - 1})$ in $\CounterGame{E}$, then $(X_i, v)$ is also winning in the game semantics $\GameSemantics{E}$ defined in Definition~\ref{def:game-semantics}.

\begin{proposition}
	If Verifier has a winning strategy $\sigma$ from $(X_i, v; c_1, \dots, c_{2 n' - 1})$ in $\CounterGame{E}$, then Verifier also has a winning strategy $\sigma'$ from $(X_i, v)$ in $\GameSemantics{E}$.
\end{proposition}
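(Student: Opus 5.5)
The plan is to simulate the counter game inside $\GameSemantics{E}$ by projecting away the counters, and to argue that any play of $\GameSemantics{E}$ consistent with the simulated strategy lifts to a play of $\CounterGame{E}$ consistent with $\sigma$. Concretely, I define $\sigma'$ on finite plays of $\GameSemantics{E}$ starting at $(X_i, v)$ by induction on length, maintaining a lifting map. The empty-prefix play $(X_i,v)$ is lifted to $(X_i, v; c_1, \dots, c_{2n'-1})$. A Verifier move $(X_j,u) \to (\mathcal{A}_1,\dots,\mathcal{A}_n)$ prescribed by $\sigma$ on the lifted play is copied by $\sigma'$. The counters in the next lifted position are updated deterministically: decrement $c_j$ if $j$ is odd, and leave them unchanged otherwise. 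A Falsifier move $(\mathcal{A}_1,\dots,\mathcal{A}_n) \to (X_k,w)$ is lifted to the same move in $\CounterGame{E}$ with the counters unchanged. This works because Falsifier's moves in $\CounterGame{E}$ are exactly those of $\GameSemantics{E}$ decorated with the current counters. Since the lift is uniquely determined by the projected play, $\sigma'$ is a well-defined (history-dependent) strategy. By construction, every play consistent with $\sigma'$ is the projection of a play consistent with $\sigma$.

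Next I check the winning condition. First, consider a finite maximal play $\rho$ consistent with $\sigma'$, and suppose it ends at a Verifier position $(X_j,u)$ with no successor in $\GameSemantics{E}$. Then its lift also ends at a Verifier position, and that position has no successor in $\CounterGame{E}$. This holds because every counter-game move projects to a move of $\GameSemantics{E}$, so the absence of moves in $\GameSemantics{E}$ rules out moves in $\CounterGame{E}$. The lift is therefore a finite maximal play consistent with $\sigma$ that ends at a Verifier position, contradicting that $\sigma$ is winning. Conversely, whenever $\sigma$ prescribes a move at the lifted position, $\sigma'$ also prescribes one, because the copied move is a legal move of $\GameSemantics{E}$. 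Hence plays under $\sigma'$ only stop where the lifted plays stop, and finite maximal plays consistent with $\sigma'$ end at Falsifier positions.

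The main obstacle is infinite plays, since $\CounterGame{E}$ accepts every infinite play while $\GameSemantics{E}$ requires the parity condition. I will prove that the lift of an infinite play under $\sigma$ satisfies the parity condition in $\GameSemantics{E}$. In fact, I expect every infinite play in $\CounterGame{E}$ to satisfy it automatically. Suppose the minimal priority $m$ occurring infinitely often were odd, i.e.\ $\eta_m = \mu$. After some point only indices $\ge m$ are visited, and $c_m$ is modified only when $X_m$ is unfolded, where it is decremented. Each such unfolding requires $c_m > 0$ and counters are natural numbers, so infinitely many decrements of $c_m$ without any reset are impossible. Hence $m$ is even, so $m \in I$. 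Here I must check that no move in Definition~\ref{def:counter-game} ever increases or resets a counter, which holds since only decrements appear. Combining the finite and infinite cases, $\sigma'$ is winning from $(X_i,v)$ in $\GameSemantics{E}$.
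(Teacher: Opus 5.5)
Your proposal is correct and follows essentially the same route as the paper: you forget the counters to map plays of $\CounterGame{E}$ onto $\GameSemantics{E}$, let $\sigma'$ copy $\sigma$ on lifted plays, handle finite plays by the fact that a winning $\sigma$ never gets stuck, and handle infinite plays by the fact that counters never increase, so each $X_{2k-1}$ is unfolded at most $c_{2k-1}$ times. Two details are worth stating explicitly: $\sigma$ is always defined at lifted $\sigma$-consistent Verifier positions, since reaching a position like $(X_{2k-1},u;\ldots)$ with $c_{2k-1}=0$ would make the lift a losing maximal play; and $\sigma'$ may be defined arbitrarily on plays that are not such lifts.
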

\begin{proof}
	We have an injective but not surjective mapping from $\Play[(X_i, v; c_1, \dots, c_{2 n' - 1})]{\CounterGame{E}}$ to $\Play[(X_i, v)]{\GameSemantics{E}}$ by forgetting the counters.
	Let $P \subseteq \Play[(X_i, v)]{\GameSemantics{E}}$ be the image of this mapping.
	Then, $P$ is closed under appending any move by Falsifier.

	We define a (history-dependent) strategy $\sigma'$ as follows.
	If a play $\rho \in \Play[(X_i, v)]{\GameSemantics{E}}$ ending in a Verifier's position is in $P$, then we define $\sigma'(\rho)$ by forgetting the counters in $\sigma(\rho')$.
	Otherwise, we define $\sigma'(\rho)$ arbitrarily.
	Then, it is straightforward to see that $\StrategyPlay[(X_i, v)]{\sigma'}{\GameSemantics{E}} \subseteq P$.

	For any finite maximal play $\rho \in \StrategyPlay[(X_i, v)]{\sigma'}{\GameSemantics{E}}$, since the corresponding play in $\CounterGame{E}$ is winning for Verifier, $\rho$ must end in a Falsifier's position.
	For any infinite play $\rho \in \StrategyPlay[(X_i, v)]{\sigma'}{\GameSemantics{E}}$, by definition of $\CounterGame{E}$, $X_{2 k - 1}$ can occur at most $c_{2 k - 1} + 1$ times in $\rho$ for each $k = 1, \dots, n'$.
	Therefore, $\rho$ satisfies the parity condition, and hence it is winning for Verifier.
\end{proof}

\begin{theorem}
	Let $(\dot{\mathcal{A}}_1, \dots, \dot{\mathcal{A}}_n)$ is a post-fixed point of $E^{\mathsf{cnt}}$.
	Then, there exists a (memoryless) strategy for Verifier in $\CounterGame{E}$ that witnesses that $(X_i, v; c_1, \dots, c_{2 n' - 1})$ is winning for each $i$ and $(c_1, \dots, c_{2 n' - 1}, v) \in \dot{\mathcal{A}}_i$.
	In particular, if $(c_1, \dots, c_{2 n' - 1}, v) \in \dot{\mathcal{A}}_i$, then the solution of $E$ satisfies $v \in X_i^{*}$.
\end{theorem}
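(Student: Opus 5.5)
The plan follows the proof of Theorem~\ref{thm:well-founded-strategy}: read off a memoryless strategy for $\CounterGame{E}$ directly from the post-fixed point, and check that it never gets stuck. The second statement then follows from the preceding proposition together with Theorem~\ref{thm:game-semantics}.

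\emph{Construction of the strategy.} Fix a Verifier position $(X_i, v; c_1, \dots, c_{2n'-1})$ with $(c_1, \dots, c_{2n'-1}, v) \in \dot{\mathcal{A}}_i$. Let $\tilde{c}'$ be the counter vector after the move:
\begin{itemize}
\item if $i = 2k-1$, then $\tilde{c}' = (c_1, \dots, c_{2k-1}-1, \dots, c_{2n'-1})$;
\item if $i = 2k$, then $\tilde{c}' = \tilde{c}$.
\end{itemize}
Because $(\dot{\mathcal{A}}_1, \dots, \dot{\mathcal{A}}_n)$ is a post-fixed point of $E^{\mathsf{cnt}}$, the right-hand side of the $i$-th transformed equation holds at $(\tilde{c}, v)$.

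In the odd case this gives $c_{2k-1} > 0$, so the counter move is legal and $\tilde{c}'$ still lies in $\mathbb{N}^{n'}$. In both cases, unfolding the substitution $\Theta_i$ as in the proof of Theorem~\ref{thm:well-founded-strategy} yields
\[ \interpret{\phi_i}(\mathcal{A}'_1, \dots, \mathcal{A}'_n)(v) = \True, \qquad \mathcal{A}'_j \coloneqq \{ u \mid (\tilde{c}', u) \in \dot{\mathcal{A}}_j \}. \]
The strategy moves to $(\mathcal{A}'_1, \dots, \mathcal{A}'_n; \tilde{c}')$. On positions outside this invariant the move is chosen arbitrarily, since such positions will turn out to be unreachable. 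The strategy depends only on the current position, so it is memoryless.

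\emph{Verification.} Every Falsifier response has the form $(X_j, u; \tilde{c}')$ with $u \in \mathcal{A}'_j$, that is, $(\tilde{c}', u) \in \dot{\mathcal{A}}_j$. Hence the invariant ``the current Verifier position $(X_j, u; \tilde{d})$ satisfies $(\tilde{d}, u) \in \dot{\mathcal{A}}_j$'' is preserved along every play consistent with the strategy. By the construction above, Verifier always has a legal move while the invariant holds. So every finite maximal play ends at a Falsifier position, and every infinite play is winning by definition of $\CounterGame{E}$.

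For the ``in particular'' part, apply the preceding proposition to transfer the strategy to a winning strategy in $\GameSemantics{E}$ from $(X_i, v)$, and then use Theorem~\ref{thm:game-semantics} to conclude $v \in X_i^{*}$.

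The only delicate step is the bookkeeping in the substitution identity. Here the free counters $c_1, \dots, c_{2n'-1}$ of the transformed equation are instantiated by the current position, and $\Theta_{2k-1}$ must be matched exactly with the counter decrement of Verifier's move in $\CounterGame{E}$. The rest is routine.
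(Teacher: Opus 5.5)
Your proposal is correct and follows the paper's proof: you read off a memoryless strategy from the post-fixed point, using $c_{2k-1}>0$ and the decremented counter vector in the odd case and unchanged counters in the even case. You then note that every Falsifier response preserves membership in $\dot{\mathcal{A}}_j$, so Verifier never gets stuck, and you obtain the ``in particular'' part from the preceding proposition together with Theorem~\ref{thm:game-semantics}.
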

\begin{proof}
	We define a memoryless strategy $\sigma$ as follows.
	Let $(X_i, v; c_1, \dots, c_{2 n' - 1})$ be a Verifier's position in $\CounterGame{E}$ such that $(c_1, \dots, c_{2 n' - 1}, v) \in \dot{\mathcal{A}}_i$.
	\begin{itemize}
		\item If $i = 2 k - 1$ is odd, then by definition of $E^{\mathsf{cnt}}$, $c_{2 k - 1} > 0$ and $\interpret{\phi_i}(\mathcal{A}_1, \dots, \mathcal{A}_n)(v)$ is true where $\mathcal{A}_j \coloneqq \{ x_i \mid (c_1, \dots, c_{2 k - 1} - 1, \dots, c_{2 n' - 1}, v) \in \dot{\mathcal{A}}_j \}$.
		Hence, we define the next move of $\sigma$ to be $(\mathcal{A}_1, \dots, \mathcal{A}_n; c_1, \dots, c_{2 k - 1} - 1, \dots, c_{2 n' - 1})$.
		\item If $i = 2 k$ is even, then $\interpret{\phi_i}(\mathcal{A}_1, \dots, \mathcal{A}_n)(v)$ is true where $\mathcal{A}_j \coloneqq \{ x_i \mid (c_1, \dots, c_{2 n' - 1}, v) \in \dot{\mathcal{A}}_j \}$.
		Hence, we define the next move of $\sigma$ to be $(\mathcal{A}_1, \dots, \mathcal{A}_n; c_1, \dots, c_{2 n' - 1})$.
	\end{itemize}
	In either case, for any Falsifier's response $(X_j, u; c'_1, \dots, c'_{2 n' - 1})$, we have $(c'_1, \dots, c'_{2 n' - 1}, u) \in \dot{\mathcal{A}}_j$.
	Therefore, Verifier never gets stuck, and hence $\sigma$ is a winning strategy.
\end{proof}

%% file: popl25.tex
\section{Comparison of \cite{TsukadaPOPL2025} with the Parity Game Semantics of $\mu$CLPs}
\label{sec:comparison-popl25}

Recall that the transformation in \cite{UnnoPOPL2023} uses the relationship between Streett relations and well-founded relations (Lemma~\ref{lem:streett-relation-well-founded-relation}).
In this section, we consider a similar relationship between disjunctively well-founded relations \cite{PodelskiLICS2004} and Streett relations.
Then, this explains the relationship between the game semantics of $\mu$CLPs and the game introduced in \cite{TsukadaPOPL2025}.

\begin{definition}
	A \emph{disjunctively well-founded relation} $R$ is a finite union $R = R_1 \cup \dots \cup R_k$ of well-founded relations $R_1, \dots, R_k$.
\end{definition}

\begin{lemma}\label{lem:streett-relation-disjunctively-well-founded-relation}
	A relation $R$ is $(A, B)$-Streett if and only if the relation $\check{R}_{A, B}$ defined below is disjunctively well-founded:
	\[ \check{R}_{A, B} \quad\coloneqq\quad \{ (x_0, x_m) \mid m > 0 \land x_0 \mathrel{R} x_1 \mathrel{R} \cdots \mathrel{R} x_{m-1} \mathrel{R} x_m \land x_0, x_m \in A \land x_0, x_1 \dots, x_{m} \notin B \} \]
\end{lemma}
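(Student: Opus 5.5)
The plan is to prove the two directions separately. The ``only if'' direction should be almost immediate, since $\check{R}_{A, B}$ will turn out to be well-founded itself. The ``if'' direction requires the standard Ramsey-type argument of \citet{PodelskiLICS2004} for disjunctively well-founded transition invariants.

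\emph{Only if.} Suppose $R$ is $(A, B)$-Streett. I will show that $\check{R}_{A, B}$ is well-founded, hence disjunctively well-founded with $k = 1$. Assume towards a contradiction an infinite chain $y_0 \mathrel{\check{R}_{A, B}} y_1 \mathrel{\check{R}_{A, B}} \cdots$. Each step $(y_l, y_{l+1})$ is witnessed by a finite $R$-path from $y_l$ to $y_{l+1}$ all of whose elements lie outside $B$. Consecutive witnessing paths share their endpoints, so concatenating them yields an infinite sequence $x_0 x_1 \cdots$ with $(x_m, x_{m+1}) \in R$ for all $m$. This sequence never visits $B$, but it visits $A$ infinitely often, because every $y_l \in A$. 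This violates the Streett condition $\{(A, B)\}$, contradicting the assumption on $R$.

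\emph{If.} Suppose $\check{R}_{A, B} = R_1 \cup \dots \cup R_k$ with each $R_j$ well-founded, and suppose towards a contradiction that $R$ is not $(A, B)$-Streett. As in the proof of Lemma~\ref{lem:streett-relation-well-founded-relation}, there is then an infinite $R$-chain $x_0 x_1 \cdots$ that visits $A$ infinitely often and $B$ only finitely often. After discarding a finite prefix, I may assume $x_m \notin B$ for all $m$. Let $i_0 < i_1 < \cdots$ enumerate the indices with $x_{i_l} \in A$. For any $l < l'$, the segment $x_{i_l} \cdots x_{i_{l'}}$ witnesses $(x_{i_l}, x_{i_{l'}}) \in \check{R}_{A, B}$; this is exactly why $\check{R}_{A, B}$ is defined via arbitrary intermediate points outside $B$ and not only consecutive visits to $A$. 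Colour each pair $\{l, l'\}$ with $l < l'$ by some $j$ such that $(x_{i_l}, x_{i_{l'}}) \in R_j$. By the infinite Ramsey theorem, there is an infinite set $l_0 < l_1 < \cdots$ whose pairs all receive the same colour $j$. Then $x_{i_{l_0}} \mathrel{R_j} x_{i_{l_1}} \mathrel{R_j} \cdots$ is an infinite $R_j$-chain, which contradicts the well-foundedness of $R_j$.

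The only nontrivial step is the application of Ramsey's theorem in the ``if'' direction. It relies on $\check{R}_{A, B}$ containing \emph{all} pairs of $A$-visits along a $B$-free $R$-path, not merely consecutive ones. The remaining care point is the bookkeeping when concatenating witnessing paths, which is routine.
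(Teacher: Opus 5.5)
Your proof is correct, but it is organized differently from the paper's. The paper does not argue about $\check{R}_{A,B}$ directly. It chains Lemma~\ref{lem:streett-relation-well-founded-relation} ($R$ is $(A,B)$-Streett iff $\hat{R}_{A,B}$ is well-founded) with \cite[Theorem~1]{PodelskiLICS2004} (a relation is well-founded iff its transitive closure is disjunctively well-founded). The link is the identity $(\hat{R}_{A,B})^{+} = \check{R}_{A,B}$, which comes from cutting a $B$-free $R$-path between two $A$-points at its intermediate $A$-visits.

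You instead bypass $\hat{R}_{A,B}$ and inline the content of both cited facts. Your ``only if'' direction applies the concatenation argument from the proof of Lemma~\ref{lem:streett-relation-well-founded-relation} directly to $\check{R}_{A,B}$. This collapses two steps, ``Streett implies $\hat{R}_{A,B}$ well-founded'' and ``the transitive closure of a well-founded relation is well-founded'', into one. Your ``if'' direction is the Ramsey argument behind Podelski--Rybalchenko, essentially the same one the paper later gives for Lemma~\ref{lem:sequence-streett-condition-disjunctively-well-founded-relation}.

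Your route has three advantages. It is self-contained. It does not need the transitive-closure identity, which the paper states without spelling out. It also makes explicit the slightly stronger fact that for $\check{R}_{A,B}$, well-foundedness and disjunctive well-foundedness coincide: both are equivalent to $R$ being $(A,B)$-Streett.

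The paper's route is shorter and more informative for this section's purpose. It shows that the condition underlying \cite{TsukadaPOPL2025} is exactly the transitive closure of the condition underlying \cite{UnnoPOPL2023}. That is the point of the comparison being drawn.
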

\begin{proof}
	By Lemma~\ref{lem:streett-relation-well-founded-relation}, $R$ is $(A, B)$-Streett if and only if $\hat{R}_{A, B}$ is well-founded.
	By \cite[Theorem~1]{PodelskiLICS2004}, a relation $Q$ is well-founded if and only if its transitive closure $Q^{+}$ is disjunctively well-founded.
	Since $(\hat{R}_{A, B})^{+} = \check{R}_{A, B}$ holds, $\hat{R}_{A, B}$ is well-founded if and only if $\check{R}_{A, B}$ is disjunctively well-founded.
\end{proof}

Similarly to Corollary~\ref{cor:winning-condition-well-founded-relation}, we can rephrase the winning condition for Verifier in the game semantics of $\mu$CLP in terms of disjunctively well-founded relations.

\begin{proposition}\label{prop:winning-condition-disjunctively-well-founded-relation}
	Verifier is winning from $(X_i, v)$ in the game semantics of $\mu$CLP $E$ if and only if there exist a memoryless strategy $\sigma$ and a family of disjunctively well-founded relations $R = \{ R_j \subseteq \mathcal{D}_j \times \mathcal{D}_j \}_{j \in \{ j \mid \eta_j = \mu \}} $ such that (a) for any play of the form~\eqref{eq:play-transitive-closure-non-strict}, if it is consistent with $\sigma$, then $(u_0, u_m) \in R_j$ and (b) any finite maximal play consistent with $\sigma$ ends at a Falsifier's position.
	\begin{align}
		&(X_i, v) \to && \cdots &&\to (X_{j_0}, u_0) && \qquad j_0 = j \\
		&&&\to (\mathcal{A}_{1, 1}, \dots, \mathcal{A}_{1, n}) &&\to (X_{j_1}, u_1) && \qquad j_1 \ge j \\
		&&&\qquad\vdots \\
		&&&\to (\mathcal{A}_{m-1, 1}, \dots, \mathcal{A}_{m-1, n}) &&\to (X_{j_{m-1}}, u_{m-1}) && \qquad j_{m-1} \ge j \\
		&&&\to (\mathcal{A}_{m, 1}, \dots, \mathcal{A}_{m, n}) &&\to (X_{j_m}, u_m) && \qquad j_m = j \label{eq:play-transitive-closure-non-strict}
	\end{align}
	Note that the play~\eqref{eq:play-transitive-closure-non-strict} requires non-strict inequalities on $j_1, \dots, j_{m-1}$ whereas the play~\eqref{eq:play-transitive-closure} in Corollary~\ref{cor:winning-condition-well-founded-relation} requires strict inequalities.
	\qed
\end{proposition}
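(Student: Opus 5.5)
We follow the route of Corollary~\ref{cor:winning-condition-well-founded-relation}, replacing Lemma~\ref{lem:streett-relation-well-founded-relation} by Lemma~\ref{lem:streett-relation-disjunctively-well-founded-relation}.

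\emph{Only if.} Suppose Verifier wins from $(X_i, v)$. By Lemma~\ref{lem:winning-condition-parity-relation} there is a memoryless strategy $\sigma$ such that $Q \coloneqq R^{\sigma}_{(X_i, v)}$ is a parity relation and every finite maximal play consistent with $\sigma$ ends at a Falsifier's position. This gives (b) immediately. By Proposition~\ref{prop:parity-condition-streett-condition}, $Q$ is $(\mathcal{D}_{\le j}, \mathcal{D}_{< j})$-Streett for each $j$ with $\eta_j = \mu$. Lemma~\ref{lem:streett-relation-disjunctively-well-founded-relation} then makes $\check{Q}_{\mathcal{D}_{\le j}, \mathcal{D}_{< j}}$ disjunctively well-founded. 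We take $R_j$ to be its restriction to $\mathcal{D}_j \times \mathcal{D}_j$; a subset of a finite union of well-founded relations is again such a union. To check (a), take a play of the form~\eqref{eq:play-transitive-closure-non-strict} consistent with $\sigma$. Each consecutive pair $(u_{k-1}, u_k)$ lies in $Q$, since the prefix ending at $(X_{j_k}, u_k)$ is a play consistent with $\sigma$ from $(X_i, v)$. Moreover $u_0, u_m \in \mathcal{D}_j \subseteq \mathcal{D}_{\le j}$, and all $u_k$ have priority $j_k \ge j$, so none lies in $\mathcal{D}_{< j}$. Hence $(u_0, u_m) \in \check{Q}_{\mathcal{D}_{\le j}, \mathcal{D}_{< j}}$, and it is in $R_j$.

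\emph{If.} Let $\sigma$ and $\{R_j\}$ be as stated, and let $\rho$ be a maximal play consistent with $\sigma$. If $\rho$ is finite, it is winning by (b). If $\rho$ is infinite, suppose the minimal priority occurring infinitely often is some $j$ with $\eta_j = \mu$. Fix a suffix after which all Verifier positions $(X_{j_k}, u_k)$ have $j_k \ge j$, and let $t_0 < t_1 < \cdots$ be the indices in this suffix where $j_{t_l} = j$. For any $l < l'$, the prefix of $\rho$ up to $t_{l'}$ has the form~\eqref{eq:play-transitive-closure-non-strict} with $u_0 = u_{t_l}$ and $u_m = u_{t_{l'}}$, and it is consistent with $\sigma$. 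So $(u_{t_l}, u_{t_{l'}}) \in R_j$ for \emph{all} $l < l'$.

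Write $R_j = W_1 \cup \dots \cup W_r$ with each $W_s$ well-founded. Colour each pair $\{l, l'\}$ with some $s$ such that $(u_{t_l}, u_{t_{l'}}) \in W_s$. By the infinite Ramsey theorem there is an infinite monochromatic set $l_0 < l_1 < \cdots$. This yields an infinite $W_s$-chain $u_{t_{l_0}}, u_{t_{l_1}}, \dots$, contradicting well-foundedness. Hence the minimal infinitely-occurring priority lies in $I_\nu$ (Falsifier positions have priority $n+1$, so they do not matter), and $\sigma$ is winning.

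The main obstacle is the if direction. Unlike Corollary~\ref{cor:winning-condition-well-founded-relation}, consecutive pairs alone do not suffice, since a union of well-founded relations need not be well-founded. Condition (a) must therefore be applied to all pairs $l < l'$, which the non-strict form~\eqref{eq:play-transitive-closure-non-strict} permits, and the Ramsey argument is needed to finish. Memorylessness is needed only in the only-if direction, to ensure each consecutive pair of a consistent play lies in $Q$.
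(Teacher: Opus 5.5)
Your proof is correct. The only-if direction is the paper's intended route: Lemma~\ref{lem:winning-condition-parity-relation}, then Proposition~\ref{prop:parity-condition-streett-condition}, then Lemma~\ref{lem:streett-relation-disjunctively-well-founded-relation}, exactly as for Corollary~\ref{cor:winning-condition-well-founded-relation}.

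The if direction differs from the paper's implied argument. By analogy with that corollary, the paper would run the same chain of equivalences backwards:
\begin{itemize}
\item Using memorylessness, splice an arbitrary chain of $Q$-steps (with $Q = R^{\sigma}_{(X_i,v)}$) into a single consistent play of the form~\eqref{eq:play-transitive-closure-non-strict}.
\item Condition (a) then gives $\check{Q}_{\mathcal{D}_{\le j}, \mathcal{D}_{< j}} \subseteq R_j$, so $\check{Q}_{\mathcal{D}_{\le j}, \mathcal{D}_{< j}}$ is disjunctively well-founded.
\item Hence $Q$ is Streett, hence a parity relation, and the if part of Lemma~\ref{lem:winning-condition-parity-relation} finishes.
\end{itemize}
You instead apply Ramsey's theorem directly to each infinite consistent play. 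That is precisely the argument of Lemma~\ref{lem:sequence-streett-condition-disjunctively-well-founded-relation}, which the paper saves for Theorem~\ref{thm:history-dependent-disjunctively-well-founded-game-parity-game}.

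Your version never uses memorylessness in this direction, so it also yields the if half of that theorem for history-dependent strategies. The paper's version is more modular: it hides the Ramsey argument inside Podelski--Rybalchenko's theorem via Lemma~\ref{lem:streett-relation-disjunctively-well-founded-relation}. However, it genuinely needs $\sigma$ to be memoryless for the splicing step.

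One correction to your closing remark. Consecutive Verifier positions of a play consistent with $\sigma$ from $(X_i, v)$ lie in $Q$ by the definition of $Q$, whether or not $\sigma$ is memoryless. Memorylessness matters in the only-if direction for a different reason. It is what Lemma~\ref{lem:winning-condition-parity-relation} uses to make $Q$ a parity relation, i.e., to control $Q$-chains that do not come from a single play. It is also simply required because the statement asks for a memoryless $\sigma$.
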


Now, we define a game based on disjunctively well-founded relations.
The definition is similar to the game in Definition~\ref{def:well-founded-game} based on well-founded relations, but remembering only the last visit to $(X_i, v)$ is not enough in this case.
Instead, we need to remember all the previous visits to $(X_i, v)$.

\begin{definition}\label{def:disjunctively-well-founded-game}
	Given a $\mu$CLP $E = \{ X_1 =_{\eta_1} \phi_1; \dots; X_n =_{\eta_n} \phi_n \}$, we define a game $\DisjunctivelyWellFoundedGame{E}$ played by Verifier (Player 1) and Falsifier (Player 2) as follows.
	\begin{itemize}
		\item Verifier's positions: $(X_i, v; h_1, \dots, h_i)$ where $v \in \mathcal{D}_i$ and $h_j \subseteq_{\mathrm{fin}} \mathcal{D}_j$ for each $j$.
		Here, $\subseteq_{\mathrm{fin}}$ denotes the finite subset relation.
		\item Falsifier's positions: $(\mathcal{A}_1, \dots, \mathcal{A}_n; h_1, \dots, h_n)$ where $\mathcal{A}_j \subseteq \mathcal{D}_j$ and $h_j \subseteq_{\mathrm{fin}} \mathcal{D}_j$ for each $j$.
		\item Verifier's moves: $(X_i, v; h_1, \dots, h_i) \to (\mathcal{A}_1, \dots, \mathcal{A}_n; h_1, \dots, h_i \cup \{ v \}, \emptyset, \dots, \emptyset)$ such that $\interpret{\phi_i}(\mathcal{A}_1, \dots, \mathcal{A}_n)(v)$ is true.
		\item Falsifier's moves: $(\mathcal{A}_1, \dots, \mathcal{A}_n; h_1, \dots, h_n) \to (X_j, v; h_1, \dots, h_j)$ such that $v \in \mathcal{A}_j$.
	\end{itemize}
\end{definition}
\begin{definition}[winning condition]
	Verifier wins from a position $(X_i, v; h_1, \dots, h_i)$ if there exist a family of disjunctively well-founded relations $\{ R_j \}_{j \in \{ j \mid \eta_j = \mu \}}$ and a strategy $\sigma$ for Verifier such that for each maximal play $\rho$ consistent with the strategy $\sigma$,
	\begin{itemize}
		\item if $\rho$ is finite, then it ends in a Falsifier's position (i.e., Verifier never gets stuck), and
		\item if $\rho$ is infinite and $\eta_j = \mu$, then for each Verifier's move
		\[ (X_j, v; h_1, \dots, h_j) \quad\to\quad (A_1, \dots, A_n; h_1, \dots, h_j \cup \{ v \}, \emptyset, \dots, \emptyset) \qquad \text{in $\rho$} \]
		and for any $u \in h_j$, we have $(u, v) \in R_j$.
	\end{itemize}
\end{definition}

The game defined above is almost the same as the game considered in \cite[Section~7.2]{TsukadaPOPL2025}.
The main difference is that Verifier's moves in \cite{TsukadaPOPL2025} are defined as $(X_i, v; h_1, \dots, h_i) \to (\mathcal{A}_1, \dots, \mathcal{A}_n; h_1, \dots, h_i \cup \{ v \}, h_{i+1}, \dots, h_n)$ without resetting $h_{i+1}, \dots, h_n$ to $\emptyset$, which makes the winning condition for Verifier unnecessarily stronger (i.e.\ it is harder for Verifier to win).

In what follows, we fix an initial position in $\DisjunctivelyWellFoundedGame{E}$ as $(X_i, v; \emptyset, \dots, \emptyset)$.
Similarly to the case in Section~\ref{sec:comparison-popl23}, it is straightforward to see that the mapping $\Play[(X_i, v; \emptyset, \dots, \emptyset)]{\DisjunctivelyWellFoundedGame{E}} \to \Play[(X_i, v)]{\GameSemantics{E}}$ that removes $h_1, \dots, h_n$ from each position in a play is a bijection, which further induces a bijection between strategies for $\DisjunctivelyWellFoundedGame{E}$ and those for $\GameSemantics{E}$.

For any play $\rho \in \Play[(X_i, v)]{\GameSemantics{E}}$ of the form~\eqref{eq:play-transitive-closure-non-strict}, the corresponding play $\rho' \in \Play[(X_i, v; \emptyset, \dots, \emptyset)]{\DisjunctivelyWellFoundedGame{E}}$ contains the position $(X_j, u_m; h_1, \dots, h_j)$ at the end of the play such that $u_0 \in h_j$.
Conversely, for any play $\rho' \in \Play[(X_i, v; \emptyset, \dots, \emptyset)]{\DisjunctivelyWellFoundedGame{E}}$ ending with a position $(X_j, u; h_1, \dots, h_j)$ such that $w \in h_j$, the corresponding play $\rho \in \Play[(X_i, v)]{\GameSemantics{E}}$ has the form~\eqref{eq:play-transitive-closure-non-strict} such that $w = u_0$ and $u = u_m$.
Therefore, Proposition~\ref{prop:winning-condition-disjunctively-well-founded-relation} can be rephrased as the following correspondence between (memoryless) winning strategies for $\GameSemantics{E}$ and strongly memoryless winning strategies for $\DisjunctivelyWellFoundedGame{E}$.
Here, we say a strategy for $\DisjunctivelyWellFoundedGame{E}$ is \emph{strongly memoryless} if it is memoryless and does not depend on $h_1, \dots, h_n$.

\begin{proposition}\label{prop:strongly-memoryless-disjunctively-well-founded-game-parity-game}
	Verifier wins from a position $(X_i, v)$ in $\GameSemantics{E}$ (Definition~\ref{def:game-semantics}) if and only if Verifier wins from a position $(X_i, v; \emptyset, \dots, \emptyset)$ in $\DisjunctivelyWellFoundedGame{E}$ with a strongly memoryless strategy.
	\qed
\end{proposition}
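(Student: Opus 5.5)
The plan is to follow the proof of Proposition~\ref{prop:strongly-memoryless-well-founded-game-parity-game} almost verbatim. We replace Corollary~\ref{cor:winning-condition-well-founded-relation} by Proposition~\ref{prop:winning-condition-disjunctively-well-founded-relation}, and we use the bijection between plays of $\GameSemantics{E}$ from $(X_i, v)$ and plays of $\DisjunctivelyWellFoundedGame{E}$ from $(X_i, v; \emptyset, \dots, \emptyset)$ noted above. Under this bijection, memoryless strategies for $\GameSemantics{E}$ correspond exactly to strongly memoryless strategies for $\DisjunctivelyWellFoundedGame{E}$, since the latter ignore the components $h_1, \dots, h_n$. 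The whole argument therefore reduces to checking that the two winning conditions match under this correspondence.

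The key fact to establish first is the correspondence already sketched before the statement. Suppose a play of $\DisjunctivelyWellFoundedGame{E}$ reaches a Verifier position $(X_j, u; h_1, \dots, h_j)$ with $w \in h_j$. Then the underlying play of $\GameSemantics{E}$ has the shape~\eqref{eq:play-transitive-closure-non-strict} with $u_0 = w$ and $u_m = u$. Conversely, every play of that shape yields such a position with $u_0 \in h_j$. I would prove this by induction on the length of the play, using the following invariant: $h_j$ is exactly the set of values $u'$ such that $(X_j, u')$ was visited after the last visit to any $(X_k, \cdot)$ with $k < j$. This invariant holds because a move at $X_k$ adds to $h_k$ and resets $h_{k+1}, \dots, h_n$ to $\emptyset$, while Falsifier's moves merely truncate the tuple.

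Next, we translate between the two winning conditions. The condition in $\DisjunctivelyWellFoundedGame{E}$ requires $(u, v) \in R_j$ for each $u \in h_j$ at Verifier's move from $(X_j, v; h_1, \dots, h_j)$. By the invariant, this is precisely the requirement of Proposition~\ref{prop:winning-condition-disjunctively-well-founded-relation}(a) for every prefix of the form~\eqref{eq:play-transitive-closure-non-strict}. The no-stuck conditions coincide literally.

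For the only-if direction, we take the memoryless $\sigma$ and the family $R$ given by Proposition~\ref{prop:winning-condition-disjunctively-well-founded-relation} and lift $\sigma$ to a strongly memoryless strategy. Every position $(X_j, v; h)$ with $u \in h_j$ that occurs on an infinite consistent play arises from a consistent prefix of shape~\eqref{eq:play-transitive-closure-non-strict}, so $(u, v) \in R_j$. For the if direction, we project the strongly memoryless strategy to a memoryless $\sigma$ and keep the same disjunctively well-founded relations.

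The main obstacle is a subtlety in this if direction. The winning condition of $\DisjunctivelyWellFoundedGame{E}$ only constrains moves occurring in infinite maximal plays, whereas Proposition~\ref{prop:winning-condition-disjunctively-well-founded-relation}(a) quantifies over all finite consistent plays. A finite prefix of shape~\eqref{eq:play-transitive-closure-non-strict} might only extend to finite maximal plays, because Falsifier gets stuck. I would handle this in one of two ways. One option is to enlarge each $R_j$ by the offending pairs while keeping it disjunctively well-founded. The other is to argue directly, as in Lemma~\ref{lem:history-dependent-well-founded-game-parity-game}, that every infinite consistent play satisfies the Streett condition; Lemma~\ref{lem:sequence-streett-condition-well-founded-relation} applies after noting that a disjunctively well-founded relation containing all pairs along such a run forbids an infinite chain, by Ramsey's theorem as in \cite{PodelskiLICS2004}. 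I would try the second route first: it yields the parity condition on each infinite play and hence winning in $\GameSemantics{E}$ outright, with the same memoryless strategy.
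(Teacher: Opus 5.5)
Your proposal is correct and follows the paper's intended argument. That argument has three ingredients: the play bijection; the invariant that $h_j$ collects exactly the values $u_0$ of consistent prefixes of the form~\eqref{eq:play-transitive-closure-non-strict}; and Proposition~\ref{prop:winning-condition-disjunctively-well-founded-relation} for the only-if direction.

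For the if direction, your second route is the argument the paper gives for Theorem~\ref{thm:history-dependent-disjunctively-well-founded-game-parity-game}. It correctly avoids the mismatch you noticed: the winning condition of the augmented game constrains only infinite plays, whereas condition (a) ranges over all finite consistent plays. The paper's one-line ``rephrasing'' does not address this. One correction: invoke Lemma~\ref{lem:sequence-streett-condition-disjunctively-well-founded-relation} rather than Lemma~\ref{lem:sequence-streett-condition-well-founded-relation}. Each $R_j$ is only disjunctively well-founded, and the Ramsey argument you sketch is the proof of the former lemma.
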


Similarly to the case in Section~\ref{sec:comparison-popl23} (Theorem~\ref{thm:well-founded-game-parity-game}), we can also show that allowing history-dependent strategies does not increase the winning region for Verifier in $\DisjunctivelyWellFoundedGame{E}$.

\begin{theorem}\label{thm:history-dependent-disjunctively-well-founded-game-parity-game}
	Verifier wins from a position $(X_i, v)$ in $\GameSemantics{E}$ (Definition~\ref{def:game-semantics}) if and only if Verifier wins from a position $(X_i, v; \emptyset, \dots, \emptyset)$ in $\DisjunctivelyWellFoundedGame{E}$.
	\iflong\else
	\qed
	\fi
\end{theorem}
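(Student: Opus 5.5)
The proof follows the structure of Theorem~\ref{thm:well-founded-game-parity-game}. The ``only if'' direction is immediate from Proposition~\ref{prop:strongly-memoryless-disjunctively-well-founded-game-parity-game}: a winning position $(X_i, v)$ in $\GameSemantics{E}$ yields a strongly memoryless winning strategy in $\DisjunctivelyWellFoundedGame{E}$ from $(X_i, v; \emptyset, \dots, \emptyset)$, which is in particular a winning strategy. All the work is therefore in the ``if'' direction, where we must show that a possibly history-dependent winning strategy $\sigma$ in $\DisjunctivelyWellFoundedGame{E}$, witnessed by disjunctively well-founded relations $\{ R_j \}_{\eta_j = \mu}$, is also winning in $\GameSemantics{E}$. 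We use the bijection between plays and strategies of the two games described before Proposition~\ref{prop:strongly-memoryless-disjunctively-well-founded-game-parity-game}.

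Let $\rho$ be a maximal play in $\GameSemantics{E}$ consistent with $\sigma$. If $\rho$ is finite, the corresponding play in $\DisjunctivelyWellFoundedGame{E}$ ends in a Falsifier's position by the first winning clause, and so does $\rho$. If $\rho$ is infinite, it suffices by Proposition~\ref{prop:parity-condition-streett-condition} to show, for each $j$ with $\eta_j = \mu$, that the sequence of Verifier's positions satisfies the Streett condition $\{ (\mathcal{D}_{\le j}, \mathcal{D}_{< j}) \}$. Suppose it does not. Then from some index onward no position with priority $< j$ occurs, while positions $(X_j, u_0), (X_j, u_1), \dots$ occur infinitely often, say at increasing indices. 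Consider any $k < l$. Between these two visits all positions have priority $\ge j$, so the component $h_j$ is never reset. Hence, in the corresponding play of $\DisjunctivelyWellFoundedGame{E}$, the Verifier's move at $(X_j, u_l; h_1, \dots, h_j)$ has $u_k \in h_j$, and the winning condition gives $(u_k, u_l) \in R_j$ for \emph{all} $k < l$.

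The remaining step is a Ramsey-type argument, which is the standard one behind \cite[Theorem~1]{PodelskiLICS2004}. Write $R_j = Q_1 \cup \dots \cup Q_r$ with each $Q_s$ well-founded, and colour each pair $k < l$ by some $s$ with $(u_k, u_l) \in Q_s$. By the infinite Ramsey theorem there is an infinite index set $k_0 < k_1 < \cdots$ on which all pairs share one colour $s$. Then $u_{k_0} \mathrel{Q_s} u_{k_1} \mathrel{Q_s} \cdots$ is an infinite descending chain, contradicting the well-foundedness of $Q_s$. Hence $\rho$ satisfies the Streett condition for every $\mu$-index, and so the parity condition; $\sigma$ is winning in $\GameSemantics{E}$.

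The main obstacle is the bookkeeping step: checking precisely that $h_j$ accumulates every earlier visit $u_k$ after the last visit to a priority below $j$. This requires the reset-to-$\emptyset$ behaviour on Verifier's moves at $X_i$ with $i < j$, together with the fact that Falsifier's moves preserve $h_1, \dots, h_j$. With the non-resetting variant of \cite{TsukadaPOPL2025} the same argument still works for this direction, but the converse would fail. That is why the reset matters and why it is confined to Proposition~\ref{prop:strongly-memoryless-disjunctively-well-founded-game-parity-game}.
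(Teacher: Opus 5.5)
Your proof is correct and follows the paper's argument. The only-if direction comes from Proposition~\ref{prop:strongly-memoryless-disjunctively-well-founded-game-parity-game}, and the if direction transfers a history-dependent winning strategy to $\GameSemantics{E}$ by checking the Streett condition $\{(\mathcal{D}_{\le j}, \mathcal{D}_{< j})\}$ for each $\mu$-index $j$ via Ramsey's theorem; you simply inline the Ramsey step and the $h_j$ bookkeeping, which the paper packages as Lemma~\ref{lem:sequence-streett-condition-disjunctively-well-founded-relation} together with the play correspondence stated before Proposition~\ref{prop:strongly-memoryless-disjunctively-well-founded-game-parity-game}.
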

\iflong
\begin{proof}
	It suffices to show that if Verifier in $\DisjunctivelyWellFoundedGame{E}$ has a (history-dependent) winning strategy $\sigma$ from $(X_i, v; \emptyset, \dots, \emptyset)$, then $\sigma$ is also a winning strategy for Verifier in $\GameSemantics{E}$ from $(X_i, v)$.
	Let $\{ R_j \}_{j \in \{ j \mid \eta_j = \mu \}}$ be a family of disjunctively well-founded relations witnessing that $\sigma$ is a winning strategy.
	For any maximal play $\rho$ in $\GameSemantics{E}$ consistent with $\sigma$ and starting from $(X_i, v)$, since the corresponding play in $\DisjunctivelyWellFoundedGame{E}$ satisfies the winning condition, if $\rho$ is finite, then it ends in a Falsifier's position.
	Otherwise, if $\rho$ is infinite, then for any $j$ such that $\eta_j = \mu$, and for any prefix of $\rho$ of the form~\eqref{eq:play-transitive-closure-non-strict}, we have $(u_0, u_m) \in R_j$.
	By Lemma~\ref{lem:sequence-streett-condition-disjunctively-well-founded-relation}, this implies that $\rho$ satisfies the Streett condition $\{ (\mathcal{D}_{\le j}, \mathcal{D}_{< j}) \}$ for each $j$ such that $\eta_j = \mu$, which is equivalent to the parity condition for $\rho$.
	Therefore, $\sigma$ is a winning strategy for Verifier in $\GameSemantics{E}$ from $(X_i, v)$.
\end{proof}
\fi

Theorem~\ref{thm:history-dependent-disjunctively-well-founded-game-parity-game} is proved using the following lemma (cf.\ Lemma~\ref{lem:sequence-streett-condition-well-founded-relation}).
\begin{lemma}\label{lem:sequence-streett-condition-disjunctively-well-founded-relation}
	Let $R$ be a disjunctively well-founded relation on $\mathcal{D}$ and $x_0 x_1 \cdots \in \mathcal{D}^{\omega}$ be an infinite sequence.
	Assume that for any $l$ and $m$ with $l < m$, if $x_l, x_m \in A$ and $x_l, x_{l + 1}, \dots, x_m \notin B$, then $(x_l, x_m) \in R$.
	Then, the sequence satisfies the Streett condition $\{ (A, B) \}$.
\end{lemma}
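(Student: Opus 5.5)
The argument is by contradiction, via the infinite Ramsey theorem, which is the standard way to exploit disjunctive well-foundedness (as in \cite[Theorem~1]{PodelskiLICS2004}). Write $R = R_1 \cup \dots \cup R_k$ with each $R_c$ well-founded. Suppose the sequence $x_0 x_1 \cdots$ violates the Streett condition $\{ (A, B) \}$. Then there is an index $N$ such that $x_m \notin B$ for all $m \ge N$, and there are infinitely many indices $N \le m_0 < m_1 < m_2 < \cdots$ with $x_{m_k} \in A$ for every $k$. This first step is identical to the start of the proof of Lemma~\ref{lem:sequence-streett-condition-well-founded-relation}, except that we need not require that no element of $A$ appears between consecutive $m_k$.

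Next, for any $k < k'$ the pair $(l, m) = (m_k, m_{k'})$ satisfies the hypothesis of the lemma. Indeed, $x_l, x_m \in A$, and $x_l, x_{l+1}, \dots, x_m \notin B$ because all these indices are at least $N$. Hence $(x_{m_k}, x_{m_{k'}}) \in R$ for all $k < k'$. This is exactly where the non-strict form of the hypothesis matters: it covers all pairs, not only consecutive visits. We colour each pair $\{k, k'\}$ with $k < k'$ by some $c \in \{ 1, \dots, k \}$ such that $(x_{m_k}, x_{m_{k'}}) \in R_c$, choosing one if several apply.

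By the infinite Ramsey theorem for pairs with finitely many colours, there is an infinite set $k_0 < k_1 < k_2 < \cdots$ and a single colour $c$ with $(x_{m_{k_a}}, x_{m_{k_b}}) \in R_c$ for all $a < b$. In particular $(x_{m_{k_a}}, x_{m_{k_{a+1}}}) \in R_c$ for every $a$. This is an infinite $R_c$-chain, which contradicts the well-foundedness of $R_c$. So the sequence satisfies the Streett condition $\{ (A, B) \}$.

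The only nontrivial ingredient is the Ramsey step. The proof simply cites it, and no additional obstacle is expected beyond checking that every pair of chosen indices really satisfies the hypothesis.
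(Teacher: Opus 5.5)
Your proposal is correct and matches the paper's proof essentially step for step. You extract infinitely many $A$-indices beyond the last visit to $B$, use the hypothesis to put every pair of them (not only consecutive ones) into $R = R_1 \cup \dots \cup R_k$, and apply the infinite Ramsey theorem to get an infinite chain in a single $R_c$, contradicting its well-foundedness. The only blemish, which the paper shares, is that $k$ serves both as the number of component relations and as a sequence index; renaming one of them would help clarity.
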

\iflong
\begin{proof}
	We use Ramsey's theorem, similarly to the proof of \cite[Theorem~1]{PodelskiLICS2004}.
	Let $R_1, \dots, R_k$ be well-founded relations such that $R = R_1 \cup \cdots \cup R_k$.
	Suppose that $x_0 x_1 \dots \in T$ violates the Streett condition.
	Then, there exists an infinite sequence $m_0 < m_1 < m_2 < \cdots$ such that (a) for any $m \ge m_0$, $x_m \notin B$ and (b) $x_{m_k} \in A$ for every $k \ge 0$.
	By assumption, for any pair $(k, k')$ with $k < k'$, there exists $i \in \{ 1, \dots, k \}$ such that $(x_{m_k}, x_{m_{k'}}) \in R_i$.
	By Ramsey's theorem, there exists an infinite subsequence $m_{k_0} < m_{k_1} < m_{k_2} < \cdots$ and $i \in \{ 1, \dots, k \}$ such that $(x_{m_{k_l}}, x_{m_{k_{l'}}}) \in R_i$ for any $l < l'$.
	This contradicts the well-foundedness of $R_i$.
\end{proof}
\else
\begin{proof}[Proof Sketch]
	Similarly to the proof of \cite[Theorem~1]{PodelskiLICS2004}, this is proved by applying Ramsey's theorem.
\end{proof}
\fi

%% file: implementation.tex
\section{Efficient Reduction to Constraint Solving}\label{sec:implementation}

In this section, we explain how to reduce the validity checking problem of a $\mu$CLP with a closed query to a constraint solving problem, using the transformation $E[R]$ in Section~\ref{sec:validity-checking}.
Then, we briefly describe techniques to decrease the size of the generated constraints.

\subsection{Reducing $\mu$CLPs to Constraint Solving Problems}

Suppose that a $\mu$CLP $E$ with a closed query $\psi$ is given.
To show the validity $\models (E, \psi)$ (Definition~\ref{def:muCLP-with-query}), it suffices to find an assignment $X_i(\tilde{x}_i) \mapsto \mathcal{A}_i \subseteq \mathcal{D}_i$ to each predicate variable $X_i$ such that the assignment satisfies the query $\psi$ and is a post-fixed point of $E[R]$ for some parity relation $R$ by Lemma~\ref{lem:lower-bound-query} and Theorem~\ref{thm:soundness} and~\ref{thm:completeness}:
\begin{equation}
	{\mathcal{A}}_1, \dots, \mathcal{A}_n \quad\models\quad \psi \land \bigwedge_{i = 1}^n (X_i(\tilde{x}_i) \implies \phi'_i)
	\label{eq:constraint-solving-muCLP}
\end{equation}
where for each $i$, $\phi'_i$ is the right-hand side of the $i$-th equation in $E[R]$:
\[ \phi'_i \quad\coloneqq\quad \phi_i[\lambda \tilde{y}_1. X_1(\tilde{y}_1) \land R_{i, 1}(\tilde{x}_i, \tilde{y}_1)/X_1, \dots, \lambda \tilde{y}_n. X_n(\tilde{y}_n) \land R_{i, n}(\tilde{x}_i, \tilde{y}_n)/X_n]. \]
This problem can be naturally captured as a constraint solving problem for a slight extension of pfwCSP~\cite{UnnoCAV2021}, which we explain below.

A \emph{pCSP}~\cite{SatakeAAAI2020} is a finite set of constraints over predicate variables of the form $\phi \lor \bigvee_{j = 0}^l X_{i_j}(\tilde{t}_j) \lor \bigvee_{k = l + 1}^m \lnot X_{i_k}(\tilde{t}_k)$ where $0 \le l \le m$.
A pCSP is \emph{valid} if there exists an assignment $X_i(\tilde{x}_i) \mapsto \mathcal{A}_i \subseteq \mathcal{D}_i$ to each predicate variable $X_i$ such that the assignment satisfies all the constraints.
Note that pCSPs are a generalization of constrained Horn clauses (CHCs): a pCSP is CHCs (co-CHCs) if each constraint contains at most one positive (negative) occurrence of predicate variables.
A \emph{pfwCSP}~\cite{UnnoCAV2021} is an extension of pCSPs where predicate variables are partitioned into three disjoint sets: \emph{ordinary predicate variables}, \emph{functional predicate variables}, and \emph{well-founded predicate variables}.
Ordinary predicate variables are those in pCSPs.
A functional predicate variable $F(\tilde{x}, y)$ is a predicate variable that must be instantiated with a functional relation, i.e., for any $\tilde{x}$, there must be exactly one $y$ such that $F(\tilde{x}, y)$ holds.
A well-founded predicate variable $W(\tilde{x}, \tilde{y})$ is a predicate variable that must be instantiated with a well-founded relation, i.e., there is no infinite sequence $\tilde{x}_0, \tilde{x}_1, \dots$ such that $W(\tilde{x}_k, \tilde{x}_{k + 1})$ holds for any $k$.

For our purpose, we slightly extend pfwCSPs by adding a new kind of predicate variables called \emph{parity predicate variables}, which is a family of predicate variables $\{ P_i(\tilde{x}, \tilde{y}) \}_{i = 1, \dots, n}$ that must be instantiated with a parity relation (Definition~\ref{def:parity-relation}).
Then, it is straightforward to encode the problem~\eqref{eq:constraint-solving-muCLP} as a pfwCSP with parity predicate variables.
Since we assume positive occurrences of predicate variables in $\mu$CLPs, the resulting pfwCSP constraints are of co-CHC form.
Here, we comment on how to encode quantifiers.
Handling universal quantifiers does not cause any problem by definition of the validity of p(fw)CSPs.
Existential quantifiers can be eliminated by Skolemization using functional predicate variables, i.e., replacing $\exists y. \phi$ with $\forall y. F(\tilde{x}, y) \implies \phi$ where $F$ is a fresh functional predicate variable and $\tilde{x}$ are the term variables visible at $\exists y. \phi$.

\subsection{Dependency Analysis and Priority Reassignment}\label{sec:dependency-analysis}

After encoding the problem as a pfwCSP with parity predicate variables, we solve it using the template-based pfwCSP solver \PCSat~\cite{UnnoCAV2021}.
A straightforward way to support parity predicate variables in \PCSat\ is to use the parity relation~\eqref{eq:on-exit-progress-measure-relation-disjoint-union} as a template.
In this approach, the solver needs to synthesize functions $\{r_i : \mathcal{D}_i \rightharpoonup \mathbf{Ord}^n\}_{i=1,\dots,n}$ such that the induced parity relation $R$ satisfies the given constraints.
However, this approach may impose a significant burden on the solver.
It requires synthesizing $n$ functions $r_i : \mathcal{D}_i \rightharpoonup \mathbf{Ord}^n$ with $n$ components, where $n$ is the number of equations in the input $\mu$CLP.
Moreover, each component $r_{i,j} : \mathcal{D}_i \rightharpoonup \mathbf{Ord}$ is typically represented as a function $r_{i,j} : \mathcal{D}_i \to \mathbb{N}^m$ for some $m$, where $\mathbb{N}^m$ is equipped with the lexicographic order.
To mitigate this issue, we present a more efficient template obtained by analysing variable dependencies in the input $\mu$CLP.

One simple way to reduce the number of components of the functions $r_i$ is to assign the same priority to consecutive $\mu$-equations or consecutive $\nu$-equations.
For example, consider a $\mu$CLP $E = \{ X_1 =_{\mu} \phi_1; X_2 =_{\mu} \phi_2; X_3 =_{\nu} \phi_3; X_4 =_{\nu} \phi_4 \}$.
We can assign the same priority to $X_1$ and $X_2$, and likewise to $X_3$ and $X_4$.
As a result, the functions to be synthesized become $\{r_i : \mathcal{D}_i \rightharpoonup \mathbf{Ord}^2\}_{i=1, 2, 3, 4}$, instead of $\{r_i : \mathcal{D}_i \rightharpoonup \mathbf{Ord}^4\}_{i=1, 2, 3, 4}$.

There is a more sophisticated way.
Let $E = \{ X_1 =_{\eta_1} \phi_1; \dots; X_n =_{\eta_n} \phi_n \}$ be a $\mu$CLP.
It is often the case that the set of free predicate variables $\FreePredVars(\phi_i)$ in the right-hand side of the $i$-th equation is a proper subset of $\{ X_1, \dots, X_n \}$.
By exploiting this fact, we can reassign priorities to the equations to reduce the number of components of the functions $r_i : \mathcal{D}_i \rightharpoonup \mathbf{Ord}^n$.
To this end, we adapt the notion of variable dependency graphs (e.g.~\cite[Section~3.3]{NeeleLMCS2024}) and alternation depth (e.g.~\cite{Gradel2002}) to our setting as follows.

\begin{definition}[variable dependency graph]
	Let $E = \{ X_1 =_{\eta_1} \phi_1; \dots; X_n =_{\eta_n} \phi_n \}$ be a $\mu$CLP.
	The \emph{variable dependency graph} of $E$ is a vertex-labelled graph $(V, F, \eta)$ where $V = \{ 1, \dots, n \}$ is the set of indices, and $F = \{ (i, j) \mid i = 1, \dots, n; X_j \in \FreePredVars(\phi_i) \} \subseteq V \times V$.
	Each vertex $i \in V$ is labelled with $\eta_i \in \{ \mu, \nu \}$.
\end{definition}

\begin{definition}[alternation depth]\label{def:alternation-depth}
	We say that a vertex $j$ is \emph{$k$-reachable} from $i$ (denoted $i \xrightarrow[> k]{}^{+} j$) if there exists a path $i = i_0 \to i_1 \to \dots \to i_m = j$ in the variable dependency graph such that $m > 0$ and for any $l \in \{ 1, \dots, m - 1 \}$, we have $i_l > k$.
	For each $i = 1, \dots, n$, the \emph{alternation depth} $\alpha_i$ of the $i$-th equation is defined as follows, where $[\eta_j \neq \eta_i]$ is $1$ if $\eta_j \neq \eta_i$ and $0$ otherwise.
	\[ \alpha_i \quad\coloneqq\quad \max\ \{ 1 \} \cup \{ \alpha_j + [\eta_j \neq \eta_i] \mid j > i \land i \xrightarrow[> i]{}^{+} j \xrightarrow[> i]{}^{+} i \} \]
\end{definition}

It immediately follows from the definition that we have $\alpha_i \le i$.

\begin{example}\label{ex:alternation-depth}
	Suppose that we have the following $\mu$CLP.
	\[ X_1(x) =_{\mu} x \ge 0 \land X_3(x - 1);\qquad X_2(x) =_{\nu} X_1(x);\qquad X_3(x) =_{\mu} X_2(x) \]
	Then, its variable dependency graph $(V, F, \eta)$ is given as follows.
	\[ V = \{ 1, 2, 3 \}, \qquad F = \{ (1, 3), (2, 1), (3, 2) \}, \qquad (\eta_1, \eta_2, \eta_3) = (\mu, \nu, \mu) \]
	In this case, the alternation depths are given by $\alpha_1 = \alpha_2 = \alpha_3 = 1$.
\end{example}

\begin{example}
	If the variable dependency graph of a $\mu$CLP $E = \{ X_1 =_{\eta_1} \phi_1; \dots; X_n =_{\eta_n} \phi_n \}$ is a complete graph, then the alternation depth is determined by the number of alternations of $\mu$ and $\nu$ in the sequence of equations.
	For example, if $n = 5$ and $(\eta_1, \eta_2, \eta_3, \eta_4, \eta_5) = (\mu, \mu, \nu, \nu, \mu)$, then we have $(\alpha_1, \alpha_2, \alpha_3, \alpha_4, \alpha_5) = (3, 3, 2, 2, 1)$.
\end{example}

\begin{definition}\label{def:priority-reassignment-by-alternation-depth}
	Let $\alpha_{\max} \coloneqq \max \{ \alpha_k \mid k = 1, \dots, n \}$ and $\sigma : \Range{1}{n} \to \Range{1}{\alpha_{\max} + 1}$ be a function defined as follows.
	\[ \sigma(i) \ \coloneqq\ \begin{cases}
		\text{the largest odd number less than or equal to $2 + \alpha_{\max} - \alpha_i$} & \eta_i = \mu \\
		\text{the largest even number less than or equal to $2 + \alpha_{\max} - \alpha_i$} & \eta_i = \nu
	\end{cases} \]
\end{definition}

\begin{lemma}\label{lem:alternation-depth-parity-condition}
	Let $w = w_0 w_1 \dots$ be an infinite walk in the variable dependency graph (i.e., an infinite sequence such that $(w_i, w_{i + 1}) \in E$ for each $i$).
	The walk $w$ satisfies the parity condition $(\mathrm{id}, I_{\nu})$ if and only if $w$ satisfies the parity condition $(\sigma, \mathbf{Even}_{\le \alpha_{\max} + 1})$ where $\mathrm{id} : V \to \Range{1}{n}$ is the identity priority function, $\sigma$ is as in Definition~\ref{def:priority-reassignment-by-alternation-depth}, $\mathbf{Even}_{\le \alpha_{\max} + 1} = \{ 2, 4, \dots, 2 \lceil\alpha_{\max} / 2 \rceil \}$ is the set of even numbers up to $\alpha_{\max} + 1$, and $I_{\nu} = \{ k \mid \eta_k = \nu \}$.
\end{lemma}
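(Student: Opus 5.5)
The plan is to identify, for a given infinite walk $w$, the vertex that decides the parity condition $(\mathrm{id}, I_{\nu})$, and then to show that under $\sigma$ the minimal priority seen infinitely often is attained at that same vertex, or at vertices with the same fixed-point type. (In the statement, ``$(w_i, w_{i+1}) \in E$'' should be read as membership in the edge set $F$ of the dependency graph.)

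\emph{Setup.} Let $S \subseteq V$ be the set of vertices occurring infinitely often in $w$, and let $m \coloneqq \min S$. Since $V$ is finite, there is an index $N$ such that $w_l \in S$ for all $l \ge N$. By definition, $w$ satisfies $(\mathrm{id}, I_{\nu})$ if and only if $\eta_m = \nu$. It therefore suffices to show that $\min \{ \sigma(k) \mid k \in S \}$ is even if and only if $\eta_m = \nu$. Since $\sigma(k)$ is even exactly when $\eta_k = \nu$, this reduces to two claims:
\begin{itemize}
\item $\sigma(m) \le \sigma(k)$ for every $k \in S$;
\item whenever $\sigma(k) = \sigma(m)$, we have $\eta_k = \eta_m$.
\end{itemize}

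\emph{Key step: relating alternation depths.} Fix $k \in S$ with $k \neq m$, so $k > m$. Because $m$ and $k$ both occur infinitely often after $N$, we can choose indices $N \le a < b < c$ with $w_a = m$, $w_b = k$, $w_c = m$ and $w_l \neq m$ for all $a < l < c$, $l \neq b$. Every such intermediate vertex lies in $S \setminus \{m\}$ and is therefore $> m$. This gives
\[ m \xrightarrow[> m]{}^{+} k \xrightarrow[> m]{}^{+} m, \]
and since $k > m$, Definition~\ref{def:alternation-depth} yields $\alpha_m \ge \alpha_k + [\eta_k \neq \eta_m]$. Producing this closed path with the right interior bound is the only step that really uses the structure of the walk. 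The subtle point is to take the segment between consecutive occurrences of $m$ inside the tail, so that all interior vertices are strictly larger than $m$.

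\emph{Arithmetic on $\sigma$.} Write $t_i \coloneqq 2 + \alpha_{\max} - \alpha_i$, so that $\sigma(i)$ is the largest number of the parity prescribed by $\eta_i$ that is at most $t_i$. Note that $t_i \ge 2$, so $\sigma(i) \in \Range{1}{\alpha_{\max}+1}$.
\begin{itemize}
\item If $\eta_k = \eta_m$, then $\alpha_k \le \alpha_m$ gives $t_k \ge t_m$. The map ``largest number of a fixed parity $\le t$'' is monotone in $t$, so $\sigma(k) \ge \sigma(m)$.
\item If $\eta_k \neq \eta_m$, then $\alpha_k \le \alpha_m - 1$ gives $t_k \ge t_m + 1$. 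Hence
\[ \sigma(k) \ge t_k - 1 \ge t_m \ge \sigma(m). \]
Since $\sigma(k)$ and $\sigma(m)$ have different parities, the inequality is strict.
\end{itemize}

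\emph{Conclusion.} The minimum of $\sigma$ over $S$ is $\sigma(m)$, and every vertex of $S$ attaining it has the same fixed-point type as $m$. Hence that minimum is even if and only if $\eta_m = \nu$, which is exactly the claimed equivalence. Both directions of the ``if and only if'' come out of this single computation.
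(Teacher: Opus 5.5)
Your proof is correct and follows the same route as the paper's. Both take $m = \min \mathrm{Inf}_{\mathrm{id}}(w)$ and use a segment between consecutive occurrences of $m$ in a tail where all other vertices exceed $m$, which gives $m \xrightarrow[> m]{}^{+} k \xrightarrow[> m]{}^{+} m$; the alternation-depth inequality then shows that $\sigma(m)$ is the least priority seen infinitely often, and its parity encodes $\eta_m$. Your only additions are spelling out the parity arithmetic for $\sigma$ that the paper leaves implicit, and restricting attention to infinitely-occurring vertices rather than all tail vertices.
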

\begin{proof}
	Let $m = \min \mathrm{Inf}_{\mathrm{id}}(w)$.
	Let $i_1$ be an index such that $w_{i_1} = m$ and for any $i \ge i_1$, we have $w_i \ge m$.
	Let $i_1 < i_2 < \dots$ be the infinite sequence of all the indices after $i_1$ such that $w_{i_k} = m$.
	Then, for any $k$ and any $l$ such that $i_k < l < i_{k + 1}$, we have $w_l > m$ and $w_{i_k} \xrightarrow[> m]{}^{+} w_l \xrightarrow[> m]{}^{+} w_{i_{k + 1}}$.
	By definition of alternation depth, we have $\alpha_{w_{i_k}} > \alpha_{w_l}$ if $\eta_m \neq \eta_{w_l}$ and $\alpha_{w_{i_k}} \ge \alpha_{w_l}$ if $\eta_m = \eta_{w_l}$.
	In either case, we have $\sigma(m) = \sigma(w_{i_k}) \le \sigma(w_l)$ and thus $\sigma(m) = \min \mathrm{Inf}_{\sigma}(w)$.
	Now, the claim follows because we have $\eta_m = \nu$ if and only if $\sigma(m)$ is even by definition of $\sigma$.
\end{proof}

Lemma~\ref{lem:alternation-depth-parity-condition} allows us to reassign priorities to the equations via $\sigma$ and gives a more compact representation of the parity relation~\eqref{eq:on-exit-progress-measure-relation-disjoint-union} as follows.

\begin{definition}\label{def:parity-relation-implementation-naive}
	Let $\{ r_i : \mathcal{D}_i \rightharpoonup \mathbf{Ord}^{\alpha_{\max} + 1} \}_{i = 1, \dots, n}$ be a family of partial functions.
	We define a parity relation $R = \{ R_{i, j} \}_{i, j = 1, \dots, n}$ as follows: if $(i, j)$ is an edge of the variable dependency graph, then
	\[ R_{i, j}(x, y)\ \coloneqq\ \begin{cases}
		r_i(x) {\downarrow} \land r_j(y) {\downarrow} \land r_i(x) >_{\sigma(i)} r_j(y) & \text{if $\eta_i = \mu$ ($\sigma(i)$ is odd)} \\
		r_i(x) {\downarrow} \land r_j(y) {\downarrow} \land r_i(x) \ge_{\sigma(i)} r_j(y) & \text{if $\eta_i = \nu$ ($\sigma(i)$ is even)}
	\end{cases} \]
	and $R_{i, j}(x, y) \coloneqq \False$ otherwise.
\end{definition}

By Lemma~\ref{lem:alternation-depth-parity-condition}, the relation $R$ defined in Definition~\ref{def:parity-relation-implementation-naive} is indeed a parity relation with respect to the parity condition $(p, I_{\nu})$ in Example~\ref{ex:parity-condition-muclp}, where $p : \mathcal{D} \to \{ 1, \dots, n \}$ maps elements in $\mathcal{D}_i$ to $i$.
Let $x = x_0 x_1 \dots$ be an infinite sequence such that $R(x_k, x_{k + 1})$ for any $k$.
The progress measure $r$ ensures that $\min \mathrm{Inf} (\sigma(p(x)))$ is even, i.e., the sequence $p(x)$ satisfies the parity condition $(\sigma, \mathbf{Even}_{\le \alpha_{\max} + 1})$ (cf.~Example~\ref{ex:on-exit-progress-measure}).
By definition of $R$, the sequence $p(x) \in \{ 1, \dots, n \}^{\omega}$ form an infinite walk in the variable dependency graph.
By Lemma~\ref{lem:alternation-depth-parity-condition}, $p(x)$ also satisfies the parity condition $(\mathrm{id}, I_{\nu})$, and hence $x$ satisfies the parity condition $(p, I_{\nu})$.

The number of components of $r_i : \mathcal{D}_i \rightharpoonup \mathbf{Ord}^{\alpha_{\max} + 1}$ can be further reduced to $\lceil (\alpha_{\max} + 1) / 2 \rceil$ without losing expressiveness.
For example, suppose that we are given $\{ r_i : \mathcal{D}_i \rightharpoonup \mathbf{Ord}^{\alpha_{\max} + 1} \}_{i = 1, \dots, n}$ where $\alpha_{\max} = 4$.
Then, we construct $\{ r'_i : \mathcal{D}_i \rightharpoonup \mathbf{Ord}^{3} \}_{i = 1, \dots, n}$ as follows.
\[ r_i = (r_{i, 1}, r_{i, 2}, r_{i, 3}, r_{i, 4}, r_{i, 5}) \quad\mapsto\quad r'_i = ((r_{i, 1}, r_{i, 2}), (r_{i, 3}, r_{i, 4}), r_{i, 5}) \]
Here, we assume that a pair of ordinals $(\alpha, \beta) \in \mathbf{Ord}^2$ ordered lexicographically is encoded as a single ordinal.
Then, for each odd number $k$, $r_i(x) >_k r_j(y)$ implies $r'_i(x) >_{\lceil k/2 \rceil} r'_j(y)$, and for each even number $k$, $r_i(x) \ge_k r_j(y)$ implies $r'_i(x) \ge_{\lceil k/2 \rceil} r'_j(y)$.
This observation leads to the following definition of parity relations.

\begin{definition}\label{def:parity-relation-implementation}
	Let $\{ r_i : \mathcal{D}_i \rightharpoonup \mathbf{Ord}^{\lceil (\alpha_{\max} + 1) / 2 \rceil} \}_{i = 1, \dots, n}$ be a family of partial functions.
	We define a parity relation $R = \{ R_{i, j} \}_{i, j = 1, \dots, n}$ as follows: if $(i, j)$ is an edge of the variable dependency graph, then
	\[ R_{i, j}(x, y)\ \coloneqq\ \begin{cases}
		r_i(x) {\downarrow} \land r_j(y) {\downarrow} \land r_i(x) >_{\lceil \sigma(i)/2 \rceil} r_j(y) & \text{if $\eta_i = \mu$ ($\sigma(i)$ is odd)} \\
		r_i(x) {\downarrow} \land r_j(y) {\downarrow} \land r_i(x) \ge_{\lceil \sigma(i)/2 \rceil} r_j(y) & \text{if $\eta_i = \nu$ ($\sigma(i)$ is even)}
	\end{cases} \]
	and $R_{i, j}(x, y) \coloneqq \False$ otherwise.
\end{definition}

\subsection{Decomposition into Strongly Connected Components}
\label{sec:scc-wise-parity-relation}

A \emph{strongly connected component} (SCC) of a directed graph is a maximal set of vertices such that each vertex is reachable from any other vertex in the set.
It is known that parity games can be solved SCC-wise~\cite{FriedmannATVA2009}.
Similarly, we can synthesize parity relations SCC-wise by decomposing the variable dependency graph into SCCs.

\begin{lemma}\label{lem:scc-wise-parity-relation}
	Let $G = (V, F, \eta)$ be the variable dependency graph of a $\mu$CLP $E$ and $\{ C_1, \dots, C_m \}$ be the set of all the SCCs of $G$.
	That is, for each $k$, $C_k \subseteq V$ is a maximal set such that for any $i, j \in C_k$, there exists a path from $i$ to $j$ in $G$.
	Suppose that we have a parity relation $R^k = \{ R^k_{i, j} \}_{i, j \in C_k}$ for $C_k$ for each $k = 1, \dots, m$.
	Then, the following relation $R = \{ R_{i, j} \}_{i, j = 1, \dots, n}$ is a parity relation for $E$: if $i, j \in C_k$ for some $k$, then we define $R_{i, j} \coloneqq R^k_{i, j}$; if $i, j$ are in different SCCs and $(i, j) \in F$, then $R_{i, j} \coloneqq \True$; otherwise, $R_{i, j} \coloneqq \False$.
	\qed
\end{lemma}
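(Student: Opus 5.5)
Take an arbitrary infinite sequence $x_0 x_1 \cdots \in \mathcal{D}^{\omega}$ with $R(x_l, x_{l+1})$ for all $l$, where $R$ is the relation built in the statement, and show it satisfies the parity condition $(p, I_{\nu})$ of Example~\ref{ex:parity-condition-muclp}. Write $i_l = p(x_l)$, so $x_l \in \mathcal{D}_{i_l}$. Every $R_{i,j}$ with $(i,j) \notin F$ is $\False$, so $(i_l, i_{l+1}) \in F$ for every $l$, and $i_0 i_1 \cdots$ is an infinite walk in $G$.

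\textbf{Step 1: the walk eventually stays in one SCC.} Contract the SCCs of $G$ to get the condensation DAG. A step between different SCCs goes strictly forward in a topological order of this DAG. There are only $m$ SCCs, so such steps can occur only finitely often. Hence there is an index $L$ and some $k$ with $i_l \in C_k$ for all $l \ge L$.

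\textbf{Step 2: apply the SCC-local parity relation.} For $l \ge L$, both $i_l$ and $i_{l+1}$ lie in $C_k$. By the definition of $R$, we then have $R^k_{i_l, i_{l+1}}(x_l, x_{l+1})$. So the suffix $x_L x_{L+1} \cdots$ is an infinite $R^k$-chain in $\bigsqcup_{i \in C_k} \mathcal{D}_i$. Since $R^k$ is a parity relation for $C_k$, this suffix satisfies the parity condition there.

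\textbf{Step 3: transfer back to $E$.} The parity condition depends only on the tail of the sequence, so the whole sequence satisfies $(p, I_{\nu})$ once the suffix does. The one point to settle, and the main (if mild) obstacle, is what ``parity relation for $C_k$'' means. I would take it to mean the restricted parity condition: the priority function $p$ restricted to $\bigsqcup_{i \in C_k}\mathcal{D}_i$, with $I_{\nu} \cap C_k$.

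Under that reading, the minimum priority seen infinitely often in the suffix is the same number under either condition. Whether it lies in $I_{\nu}$ is also the same under both. Step 3 is then immediate.

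If instead the SCC-wise relations use reassigned priorities, as in Definition~\ref{def:parity-relation-implementation} applied within $C_k$, one more step is needed. Apply Lemma~\ref{lem:alternation-depth-parity-condition} to the subgraph induced by $C_k$ to return to the original indices. This works because the suffix $i_L i_{L+1} \cdots$ is an infinite walk in that subgraph.

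Either way, every infinite $R$-chain satisfies the parity condition, so $R$ is a parity relation for $E$.
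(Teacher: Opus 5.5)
Your proof is correct and is essentially the paper's argument: an infinite $R$-chain can only cross between SCCs along edges of $F$, so it moves forward in the condensation and eventually stays in one $C_k$, after which $R^k$ forces the parity condition on the tail, and the parity condition depends only on the tail. One small slip: $R_{i,j}$ is $\False$ on non-edges only when $i,j$ lie in different SCCs (inside $C_k$ it is the arbitrary $R^k_{i,j}$), so the index sequence need not be a walk in $G$; this does no harm, because Step~1 only uses that cross-SCC steps are edges, which is true, and in your reassigned-priority variant the template of Definition~\ref{def:parity-relation-implementation} is $\False$ off edges anyway.
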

\begin{proof}
	Let $\{ x_n \in \mathcal{D}_{i_n} \}_n$ be an infinite sequence such that $(x_n, x_{n + 1}) \in R_{i_n, i_{n + 1}}$ for each $n$.
	Then, this sequence must eventually stay in some SCC $C_k$.
	Since $R^k$ is a parity relation, the sequence $\{ x_n \in \mathcal{D}_{i_n} \}_n$ satisfies the parity condition.
\end{proof}

\iflong
We can use parity relations in Lemma~\ref{lem:scc-wise-parity-relation} without losing expressiveness.
\begin{lemma}
	Let $E$ be a $\mu$CLP and $R \subseteq \mathcal{D} \times \mathcal{D}$ be a parity relation for $E$ and $\{ C_1, \dots, C_m \}$ be the set of all the SCCs of the variable dependency graph of $E$.
	For each $k = 1, \dots, m$, let $R^k = \{ R_{i, j} \subseteq \mathcal{D}_i \times \mathcal{D}_j \}_{i, j \in C_k}$ be the restriction of $R$ to $C_k$.
	Then, $R^k$ is a parity relation on $\bigsqcup_{i \in C_k} \mathcal{D}_i$.
	Moreover, if we define $R' \subseteq \mathcal{D} \times \mathcal{D}$ from $\{ R^k \}_{k = 1, \dots, m}$ as in Lemma~\ref{lem:scc-wise-parity-relation}, then any post-fixed point of $E[R]$ is also a post-fixed point of $E[R']$.
\end{lemma}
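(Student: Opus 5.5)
The lemma has two claims: $R^k$ is a parity relation, and post-fixed points transfer from $E[R]$ to $E[R']$. I would prove them in that order.

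\textbf{First claim.} Take an infinite sequence $x_0 x_1 \cdots$ in $\bigsqcup_{i \in C_k} \mathcal{D}_i$ with $R^k(x_m, x_{m+1})$ for every $m$. Since $R^k$ is the restriction of $R$, each step also satisfies $R(x_m, x_{m+1})$. Because $R$ is a $(p, I_{\nu})$-parity relation, the sequence satisfies the parity condition. The priorities of the elements are unchanged by the restriction: we keep $p$ restricted to $C_k$. Hence $R^k$ is a parity relation for the induced parity condition on $\bigsqcup_{i \in C_k} \mathcal{D}_i$.

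\textbf{Second claim.} Let $(\mathcal{A}_1, \dots, \mathcal{A}_n)$ be a post-fixed point of $E[R]$. For each $i$ and each $v \in \mathcal{A}_i$ we have
\[ \interpret{\phi_i}(\mathcal{A}_1 \cap R_{i,1}(v), \dots, \mathcal{A}_n \cap R_{i,n}(v))(v) = \True. \]
We need the same statement with $R'_{i,j}$ in place of $R_{i,j}$. By monotonicity of $\interpret{\phi_i}$ (all predicate variables occur positively), it suffices to replace each argument by a superset. The argument must still be a value for which the truth of $\phi_i$ is unaffected.

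I would split on the index $j$:
\begin{enumerate}
\item If $i, j$ lie in the same SCC $C_k$, then $R'_{i,j} = R^k_{i,j} = R_{i,j}$, so the $j$-th argument is unchanged.
\item If $i, j$ lie in different SCCs and $(i,j) \in F$, then $R'_{i,j} = \True$. The $j$-th argument becomes $\mathcal{A}_j \supseteq \mathcal{A}_j \cap R_{i,j}(v)$, and monotonicity applies.
\item If $(i,j) \notin F$, then $X_j \notin \FreePredVars(\phi_i)$. The value of $\interpret{\phi_i}$ does not depend on its $j$-th argument, so replacing $\mathcal{A}_j \cap R_{i,j}(v)$ by $\mathcal{A}_j \cap \emptyset$ changes nothing.
\end{enumerate}
Combining the three cases gives
\[ \interpret{\phi_i}(\mathcal{A}_1 \cap R'_{i,1}(v), \dots, \mathcal{A}_n \cap R'_{i,n}(v))(v) = \True, \]
so $(\mathcal{A}_1, \dots, \mathcal{A}_n)$ is a post-fixed point of $E[R']$.

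\textbf{Main obstacle.} The only subtle point is the bookkeeping in case~3. We must note that in case~3 $i$ and $j$ may lie in the same SCC. The definition in Lemma~\ref{lem:scc-wise-parity-relation} then gives $R'_{i,j} = R^k_{i,j} = R_{i,j}$, so case~1 already covers it. When $i, j$ are in different SCCs and $(i,j) \notin F$, we have $R'_{i,j} = \False$, and case~3 applies.

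To be precise, I would argue independence from the $j$-th argument syntactically. In Definition~\ref{def:parity-relation-transformation} the substitution for $X_j$ has no effect when $X_j$ does not occur in $\phi_i$, so the right-hand side of $E[R]$ and of $E[R']$ agree in that component.

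Finally, $R'$ is itself a parity relation by Lemma~\ref{lem:scc-wise-parity-relation}, applied to the family $\{R^k\}_k$ obtained in the first claim. So $E[R']$ is a legitimate instance of the transformation, and the two claims together justify restricting to SCC-wise parity relations without loss of expressiveness.
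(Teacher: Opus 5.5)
Your proposal is correct and follows the paper's proof step for step. The first claim holds because an $R^k$-chain is an $R$-chain. For the second, you reduce to $R_{i,j} \subseteq R'_{i,j}$ on edges of the dependency graph via monotonicity of $\interpret{\phi_i}$, split into the same-SCC, cross-SCC-edge and non-edge cases. Your bookkeeping is slightly more careful than the paper's, which asserts $R'_{i,j} = \False$ for every non-edge, even though the definition gives $R'_{i,j} = R_{i,j}$ when $i$ and $j$ lie in the same SCC; as you note, this is harmless because $R_{i,j}$ never occurs in $E[R]$ for non-edges.
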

\begin{proof}
	It is obvious that $R^k$ is a parity relation since any infinite sequence in $\bigsqcup_{i \in C_k} \mathcal{D}_i$ is also an infinite sequence in the whole $\bigsqcup_{i} \mathcal{D}_i$.
	To show the latter claim, it suffices to show that for any edge $(i, j)$ in the variable dependency graph, we have $R_{i, j} \subseteq R'_{i, j}$.
	If $i, j \in C_k$ for some $k$, then we have $R_{i, j} = R'_{i, j}$ by definition.
	If $i, j$ are in different SCCs and $(i, j) \in F$, then we have $R'_{i, j} = \True$ by definition.
	If $(i, j)$ is not an edge, then we have $R'_{i, j} = \False$, but this does not make any difference of $E[R]$ and $E[R']$ since $R_{i, j}$ does not appear in $E[R]$.
\end{proof}
\fi

\subsection{Combining SCC-wise Parity Relations and Priority Reassignment}
\label{sec:combining-scc-ad}
We can combine SCC-wise parity relations (Lemma~\ref{lem:scc-wise-parity-relation}) with the priority reassignment technique (Definition~\ref{def:parity-relation-implementation}) as follows to obtain a template for efficient synthesis of parity relations.
We first decompose the variable dependency graph into its SCCs using standard graph algorithms (e.g., Tarjan's algorithm~\cite{TarjanSIAMJComput1972}).
Then, we consider the parity relation in Lemma~\ref{lem:scc-wise-parity-relation} where each $R^k$ is defined as in Definition~\ref{def:parity-relation-implementation}.
Here, the alternation depths are computed for each SCC $C_k$ from the subgraph induced by $C_k$.
Moreover, if an SCC $C_k$ consists of only $\nu$-vertices, then $R^k = \True$ is a valid choice of parity relation for $C_k$.
This allows us to further simplify the template and reduce the size of the generated constraints.

By Definition~\ref{def:alternation-depth}, the alternation depth $\alpha_i$ is not affected by the vertices outside the SCC containing $i$.
In this sense, the priority reassignment technique is not completely orthogonal to the SCC decomposition technique.
On the other hand, as we have shown in Example~\ref{ex:alternation-depth}, even if the variable dependency graph consists of a single SCC, the alternation depth can reduce the number of components of the functions $r_i$.

%% file: experiment.tex
\section{Implementation and Experiments}
\label{sec:experiments}

Following the approach described in Section~\ref{sec:implementation}, we extended \MuVal\footnote{\url{https://github.com/hiroshi-unno/coar}}, an existing solver for $\mu$CLP~\cite{UnnoPOPL2023}, to implement the new $\mu$-to-$\nu$ transformation presented in Section~\ref{sec:main-theorems}.
We also implemented the template for parity predicate variables defined in Definition~\ref{def:parity-relation-implementation}, together with the optimization techniques based on SCC decomposition described in Section~\ref{sec:scc-wise-parity-relation}.
We call our extended solver \MuValPPM.
In \PCSat, each component $r_{i, j} : \mathcal{D}_i \rightharpoonup \mathbf{Ord}$ of a function
$r_i : \mathcal{D}_i \rightharpoonup \mathbf{Ord}^{\lceil (\alpha_{\max} + 1) / 2 \rceil}$ in Definition~\ref{def:parity-relation-implementation}
is represented using a template of piecewise lexicographic affine functions, whose shape is almost identical to that used in~\cite{UnnoCAV2021}.
The template has several parameters that control its expressiveness, such as the number of pieces and the dimension of the lexicographic functions.
These parameters are shared among all functions $r_i$ that constitute a parity relation.
\PCSat\ incrementally increases the expressiveness of the template in a counterexample-guided manner (see~\cite{UnnoCAV2021} for details of the procedure).
Note that, as explained in~\cite{UnnoPOPL2023}, \MuVal\ solves both the original (primal) $\mu$CLP and its dual in parallel.

We evaluated \MuValPPM\ on two groups of benchmarks: (i) 202 benchmarks for validity checking of $\mu$CLPs taken from~\cite{UnnoPOPL2023}, which include LTL, CTL, CTL*, and modal $\mu$-calculus verification problems, and (ii) 1222 benchmarks from the Termination Competition 2025 (Integer Transition Systems).
We refer to the former group as the POPL'23 benchmarks and the latter group as the TermCOMP ITS benchmarks.
Our experiments were conducted on AWS t3.2xlarge instances equipped with an Intel(R) Xeon(R) Platinum 8259CL CPU @ 2.50 GHz and 32 GB of memory.
The timeout is set to 300 seconds.
Via these experiments, we aim to answer the following research question:
\begin{enumerate}[label=\textbf{RQ\arabic*},leftmargin=*]
	\item How does our implementation of the new $\mu$-to-$\nu$ transformation compare with the existing tools for $\mu$CLP solving, \MuVal~\cite{UnnoPOPL2023} and \MuStrat~\cite{TsukadaPOPL2025}?
	\label{item:rq-ppm}
	\item How effective are the proposed optimization techniques in Section~\ref{sec:implementation}?
	\label{item:rq-optimization}
\end{enumerate}

\begin{table}[tb]
	\caption{Number of solved instances. SCC indicates that SCC-wise parity relations are used, while AD indicates that priority reassignment based on alternation depth is used. Out-of-memory errors are counted as aborts. The top half of the table shows the results for the POPL'23 benchmarks, while the bottom half shows TermCOMP ITS benchmarks.}
	\label{tab:experiments-solved-instances}
	\small
	\begin{tabular}{l|cccc|c|c}
		& \multicolumn{4}{c|}{\MuValPPM} & \MuVal & \MuStrat \\
		& SCC \& AD & AD & SCC & none & \cite{UnnoPOPL2023} & \cite{TsukadaPOPL2025} \\
		\hline
		valid & 100 & 94 & 99 & 90 & 100 & 100 \\
		invalid & 89 & 85 & 87 & 87 & 92 & 75 \\
		timeout + abort & 13+0 & 23+0 & 16+0 & 23+2 & 10+0 & 27+0 \\
		\midrule
		valid & 535 & 464 & 534 & 439 & 557 & 429 \\
		invalid & 451 & 428 & 449 & 410 & 459 & 382 \\
		timeout + abort & 228+8 & 317+13 & 226+13 & 343+30 & 202+4 & 402+9
	\end{tabular}
\end{table}

\begin{figure}[tb]
	\centering
	\begin{minipage}[t]{0.5\linewidth}
		\centering
		\includegraphics[width=0.6\linewidth]{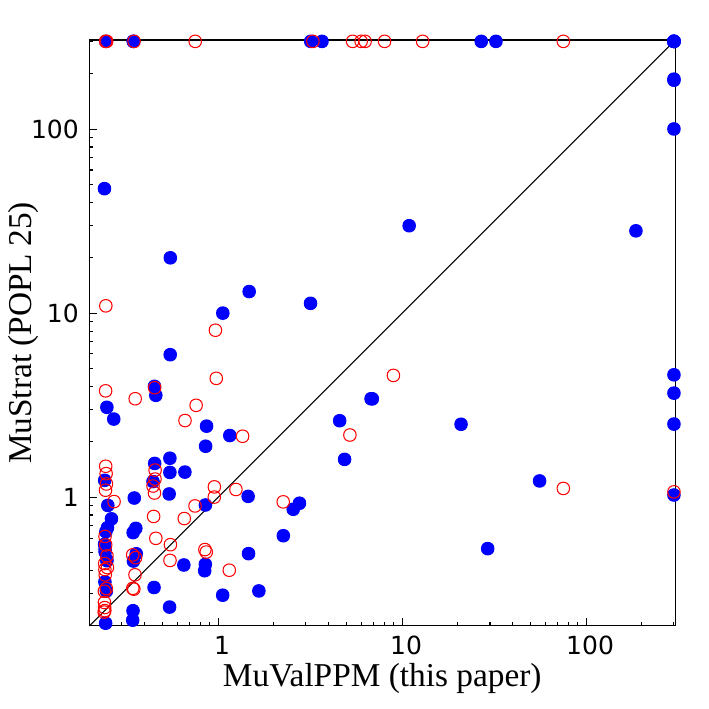}
		\includegraphics[width=0.6\linewidth]{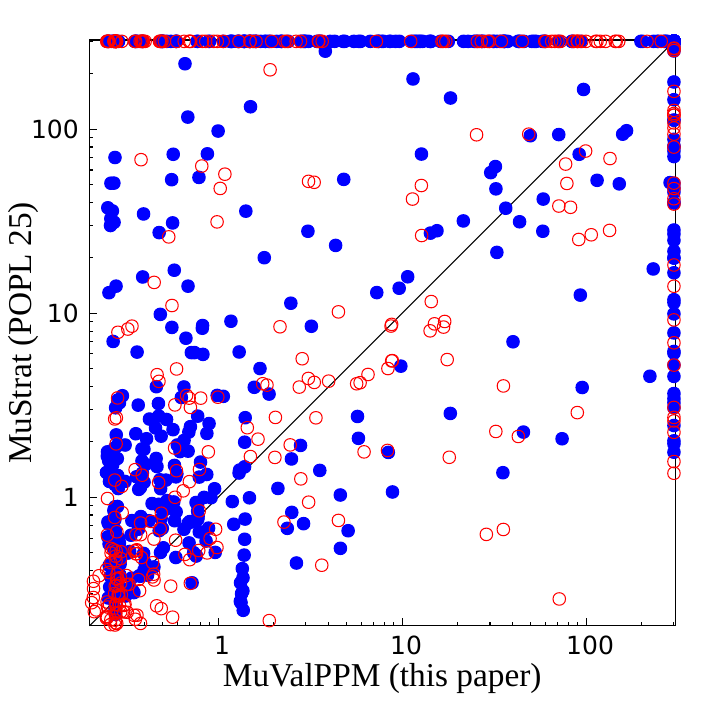}
		\subcaption{Comparison with \MuStrat~\cite{TsukadaPOPL2025}.}
		\label{subfig:pr-vs-dwf}
	\end{minipage}%
	\begin{minipage}[t]{0.5\linewidth}
		\centering
		\includegraphics[width=0.6\linewidth]{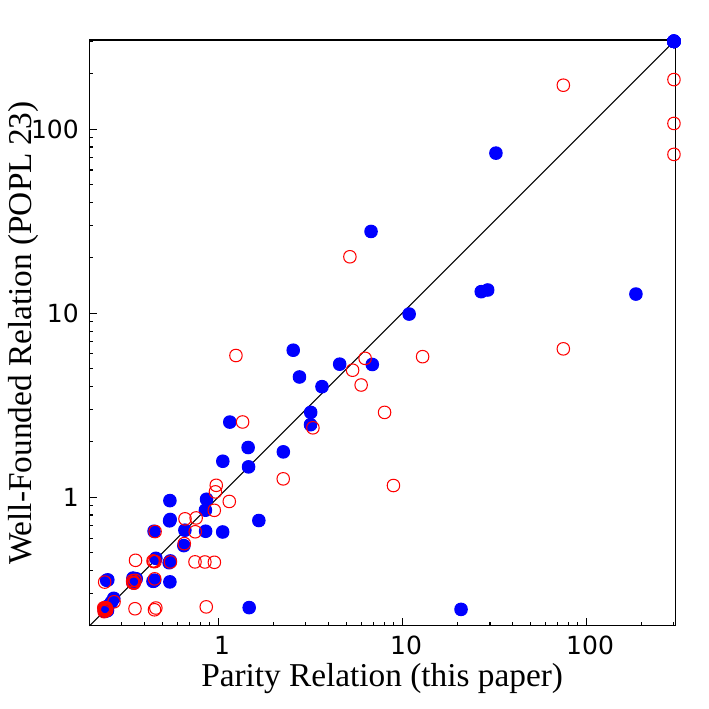}
		\includegraphics[width=0.6\linewidth]{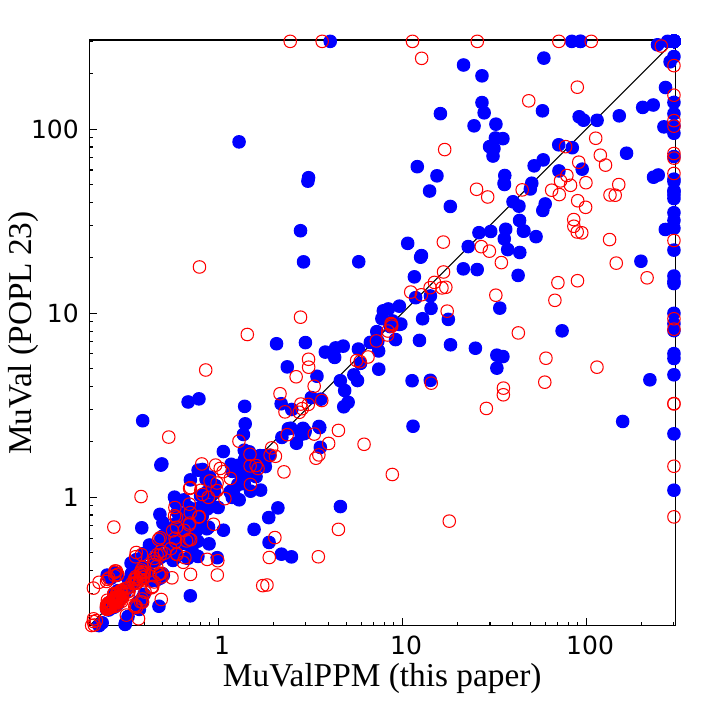}
		\subcaption{Comparison with \MuVal~\cite{UnnoPOPL2023}.}
		\label{subfig:pr-vs-wf}
	\end{minipage}
	\caption{Experimental results for \ref{item:rq-ppm}. Blue dots indicate ``valid'', while red dots indicate ``invalid''. The top row and bottom row show the results for POPL'23 benchmarks and TermCOMP ITS benchmarks, respectively.}
	\label{fig:experiments-comparison}
\end{figure}

\paragraph{Results for \ref{item:rq-ppm}.}
Fig.~\ref{fig:experiments-comparison} shows the scatter plots that compare the performance of \MuValPPM\ with that of \MuStrat~\cite{TsukadaPOPL2025} and \MuVal~\cite{UnnoPOPL2023}.
The numbers of solved instances are summarized in Table~\ref{tab:experiments-solved-instances}.

In terms of the number of solved instances, \MuValPPM\ based on parity relations performed slightly worse than \MuVal~\cite{UnnoPOPL2023}, which is based on well-founded relations, but outperformed \MuStrat~\cite{TsukadaPOPL2025}, which is based on disjunctively well-founded relations.
However, the scatter plots do not indicate that any one solver consistently dominates the others in terms of solving time.
Moreover, each solver solved instances that neither of the others could solve.
In the TermCOMP ITS benchmark suite, \MuValPPM, \MuVal, and \MuStrat\ uniquely solved 6, 14, and 39 instances, respectively.
Furthermore, a virtual best solver that always selects the fastest of the three solvers solves 587 satisfiable and 479 unsatisfiable TermCOMP ITS benchmarks, substantially more than any individual solver.
These results suggest that the three solvers are complementary and that combining their strengths appropriately could be a promising direction.

More concretely, the comparison between \MuValPPM\ and \MuVal~\cite{UnnoPOPL2023} reveals a trade-off.
On the one hand, parity relations often require more functions than well-founded relations.
On the other hand, well-founded relations require copying the arguments of predicate variables from previous visits, whereas parity relations do not.

For example, \MuVal~\cite{UnnoPOPL2023} solves the following benchmark (\texttt{lines3.hes}) by introducing only a single well-founded relation for $Y$, whereas \MuValPPM\ requires a parity relation $R={R_{i,j}}_{i,j=1,2}$ consisting of four components, resulting in a longer solving time:
\[ X(a, b) =_{\nu} a > b \land X(a + 2, b + 1) \land Y(b, a);\qquad
Y(a, b) =_{\mu} X(a, b) \lor Y(a + 2, b + 1) \]
The query for this benchmark is $\forall (m:\mathtt{int}),(n:\mathtt{int}).\ n \le m \lor X(n,m)$.

\begin{figure}[tb]
	\centering
	\begin{minipage}[t]{0.3\linewidth}
		\centering
		\includegraphics[width=\linewidth]{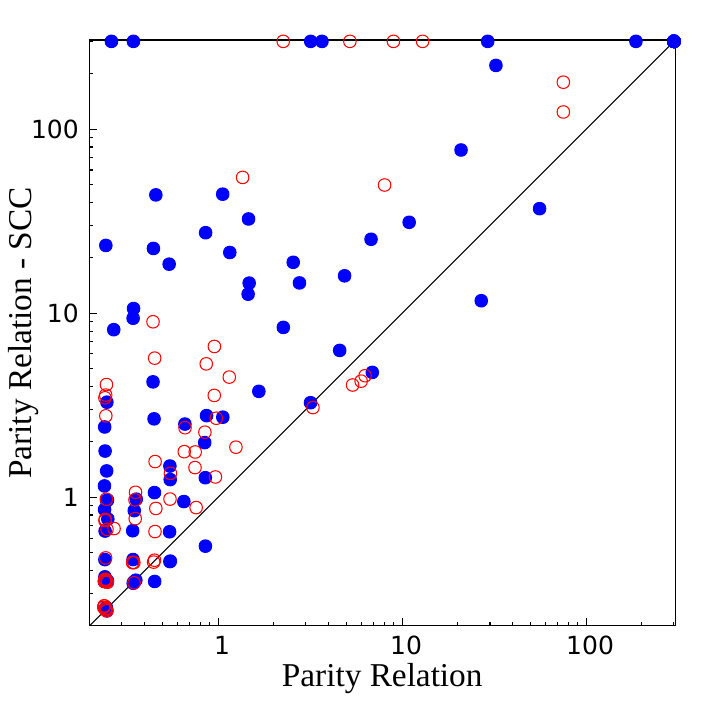}
		\includegraphics[width=\linewidth]{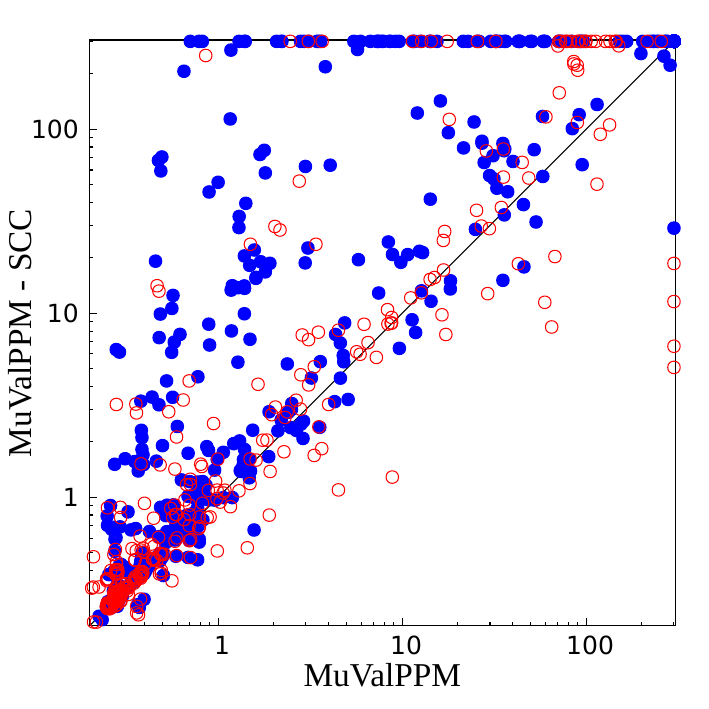}
		\subcaption{With and without SCC-wise parity relations.}
		\label{subfig:pr-vs-wo-scc}
	\end{minipage}
	\hspace{1em}
	\begin{minipage}[t]{0.3\linewidth}
		\centering
		\includegraphics[width=\linewidth]{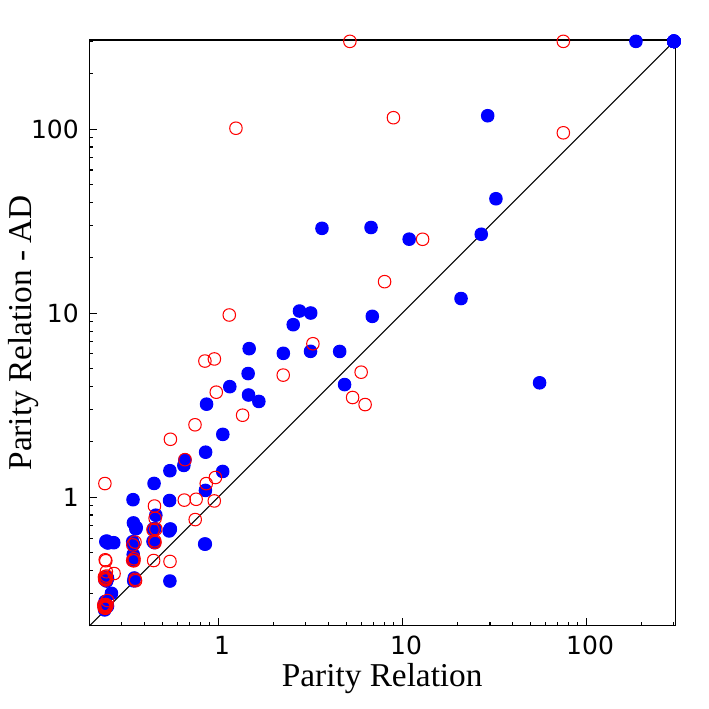}
		\includegraphics[width=\linewidth]{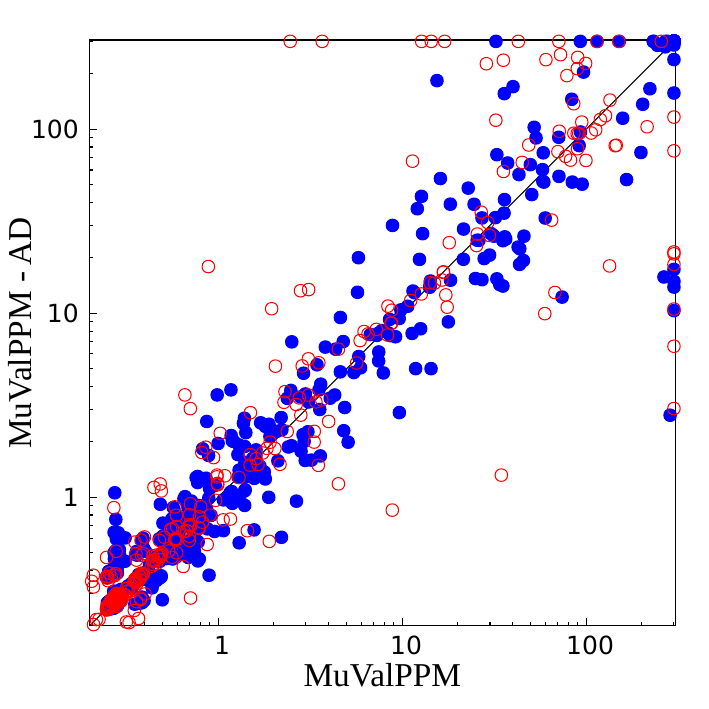}
		\subcaption{With and without priority reassignment.}
		\label{subfig:pr-vs-wo-ad}
	\end{minipage}
	\hspace{1em}
	\begin{minipage}[t]{0.3\linewidth}
		\centering
		\includegraphics[width=\linewidth]{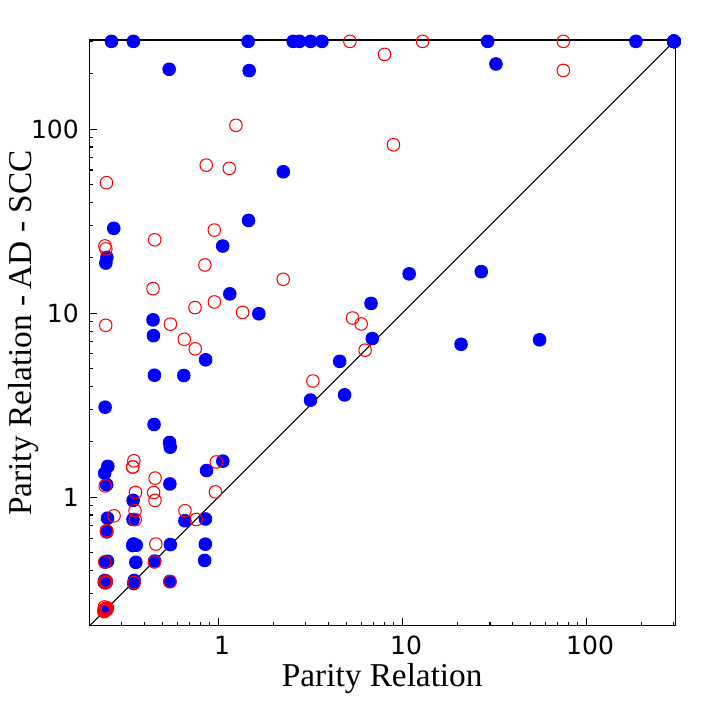}
		\includegraphics[width=\linewidth]{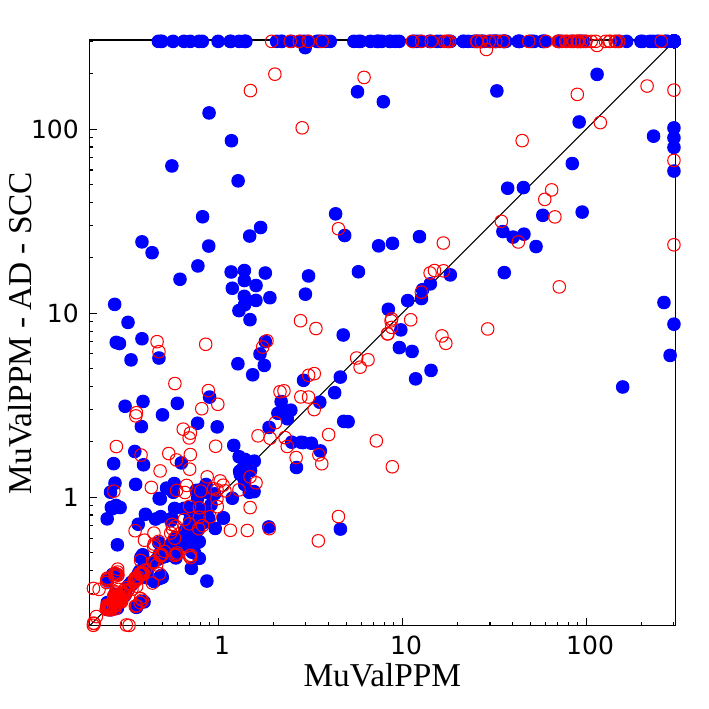}
		\subcaption{With and without SCC-wise parity relations and priority reassignment.}
		\label{subfig:pr-vs-wo-ad-scc}
	\end{minipage}
	\caption{Experimental results for \ref{item:rq-optimization}. Blue dots indicate ``valid'', while red dots indicate ``invalid''. The top row and bottom row show the results for POPL'23 benchmarks and TermCOMP ITS benchmarks, respectively.}
	\label{fig:experiments-optimization}
\end{figure}

\paragraph{Results for \ref{item:rq-optimization}.}
Fig.~\ref{fig:experiments-optimization} illustrates the effects of priority reassignment based on alternation depth (Section~\ref{sec:dependency-analysis}) and SCC-wise parity relations (Section~\ref{sec:scc-wise-parity-relation}).
Applying both optimization techniques significantly improves the performance of \MuValPPM\ (Fig.~\ref{subfig:pr-vs-wo-ad-scc}).
Figures~\ref{subfig:pr-vs-wo-scc} and~\ref{subfig:pr-vs-wo-ad} show the effect of each optimization technique in isolation.
Figure~\ref{subfig:pr-vs-wo-scc} demonstrates that SCC-wise parity relations are particularly effective, whereas Figure~\ref{subfig:pr-vs-wo-ad} shows that, once SCC-wise parity relations are applied, the additional benefit of priority reassignment is relatively modest.
This observation is consistent with the discussion in Section~\ref{sec:combining-scc-ad}: although priority reassignment has a theoretical advantage of its own, its effect partially overlaps with that of SCC-wise parity relations.

%% file: related-work.tex
\section{Related Work}
\label{sec:related-work}

\paragraph{Fixed-point logics and parity games.}
It is well-known that the model-checking problem for modal $\mu$-calculus can be reduced to solving parity games (see, e.g., \cite{BradfieldHandbookofModelChecking2018}), which can be done in quasi-polynomial time~\cite{CaludeSTOC2017}.
Various extensions of this result have been studied.
One direction is to consider fixed-point logics with more general truth values.
The reduction to parity games is extended for continuous complete lattices~\cite{BaldanPOPL2019} and for complete lattices~\cite{BaldanCONCUR2020}.
Another direction is to consider more general transition systems based on coalgebraic frameworks~\cite{HasuoPOPL2016,HausmannCONCUR2019,HausmannVMCAI2024}.

\paragraph{Transformations of first-order fixed-point logics.}
Transforming fixed-point equation systems is a common technique for solving them~\cite{KobayashiTACAS2020,NeeleLMCS2024,KobayashiSAS2019,UnnoPOPL2023}.
Among them, our focus is on $\mu$-to-$\nu$ transformations.
In \cite{KobayashiSAS2019}, a $\mu$-equation $X(\tilde{x}) =_{\mu} \phi$ is transformed to a $\nu$-equation $X'(y, \tilde{x}) =_{\nu} y > 0 \land \phi[X'(y - 1, {-})/X]$ by adding an extra argument $y$ that upper-bounds the number of unfoldings of $X$.
Then, it follows that the solution $X'^{*}$ for $X'$ satisfies $X'^{*}(y, \tilde{x}) = \interpret{\phi}^y(\bot)$.
Thus, for any term $t$, $X'^{*}(t, \tilde{x})$ is a lower approximation of $X^{*}(\tilde{x})$.
This idea was later generalized to higher-order setting in \cite{KobayashiPOPL2023}.
Another transformation is proposed in~\cite{UnnoPOPL2023}, which we have discussed in Section~\ref{sec:comparison-popl23}.
There are other kinds of transformations.
Fold/unfold transformations are studied in~\cite{KobayashiTACAS2020}, and swapping and substitution that preserve the solution are studied in~\cite{NeeleLMCS2024}.
These transformations themselves do not directly solve fixed-point equations, but they can be used to simplify the equations before solving them.

%% file: complete-progress-measure.tex
\section{Complete Progress Measures for Equation Systems}\label{sec:complete-progress-measure}

Given a $\mu$CLP $E$, there exists a progress measure $\{ r_i : \mathcal{D}_i \to \mathbf{Ord}^n \}_i$ that is \emph{complete} in the sense that for \emph{all} $\tilde{v} \in X^{*}_i$, the progress measure witnesses a winning strategy for Verifier from any position $(X_i, \tilde{v})$.
The complete progress measure is constructed from $\mu$-approximants~\cite[Section~3.4]{BaldanPOPL2019}.
Since this construction is used to prove completeness (Theorem~\ref{thm:completeness}), we will explain it in detail.

We first restrict ordinals by setting an upper bound $L \in \mathbf{Ord}^n$.
This upper bound is chosen so that we can still rely on Kleene's fixed-point theorem after the restriction.

\begin{definition}
	Let $A$ be a complete lattice and $f : A \to A$ be a monotone function.
	For any post-fixed point $x \in A$ (i.e., $x \le f(x)$) and any ordinal $\alpha$, we define the \emph{$\alpha$-th iteration} $f^{\uparrow \alpha}(x)$ by transfinite induction as follows.
	\[  f^{\uparrow 0}(x) \coloneqq x \qquad f^{\uparrow \alpha + 1}(x) \coloneqq f(f^{\uparrow \alpha}(x)) \qquad f^{\uparrow \lambda}(x) \coloneqq \sup_{\alpha < \lambda} f^{\uparrow \alpha}(x) \quad \text{if $\lambda$ is limit} \]
\end{definition}

\begin{theorem}[Kleene's fixed-point theorem]\label{thm:kleene}
	For any complete lattice $A$, there exists a sufficiently large ordinal $\alpha$ such that for any monotone function $f : A \to A$, we have $f^{\uparrow \alpha}(\bot) = \mu f$.
	\qed
\end{theorem}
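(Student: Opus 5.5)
The plan is the standard transfinite-iteration argument, keeping track of the fact that the bound on the ordinal must not depend on $f$. Fix a complete lattice $A$. I will take $\alpha$ to be the successor cardinal $|A|^{+}$ (equivalently, the Hartogs number of $A$), viewed as an ordinal. Its only needed property is that there is no injection from $\alpha$ into $A$, and this depends only on $A$, not on $f$. Now fix an arbitrary monotone $f : A \to A$. Note that $\bot$ is trivially a post-fixed point, so $f^{\uparrow \beta}(\bot)$ is defined for every ordinal $\beta$.

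The first step is to show by transfinite induction that the sequence $(f^{\uparrow \beta}(\bot))_\beta$ is increasing: $\gamma \le \beta$ implies $f^{\uparrow \gamma}(\bot) \le f^{\uparrow \beta}(\bot)$. Equivalently, I will prove simultaneously that each $f^{\uparrow \beta}(\bot)$ is a post-fixed point and that the sequence is monotone in $\beta$.
\begin{itemize}
\item At $0$ this is trivial.
\item At successors, from $f^{\uparrow \gamma}(\bot) \le f^{\uparrow \beta}(\bot)$ monotonicity of $f$ gives $f^{\uparrow \gamma + 1}(\bot) \le f^{\uparrow \beta + 1}(\bot)$.
\item At a limit $\lambda$, each $f^{\uparrow \gamma}(\bot)$ with $\gamma < \lambda$ lies below the supremum. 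Also $f^{\uparrow \gamma}(\bot) \le f^{\uparrow \gamma+1}(\bot) \le f(f^{\uparrow \lambda}(\bot))$ for all $\gamma<\lambda$, so $f^{\uparrow \lambda}(\bot) \le f(f^{\uparrow \lambda}(\bot))$, i.e.\ it is post-fixed. Monotonicity past $\lambda$ then follows.
\end{itemize}
This limit case is the only place needing care. I expect it to be the main (mild) obstacle, since the induction hypothesis must be formulated to include the post-fixed-point property.

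The second step is the upper bound $f^{\uparrow \beta}(\bot) \le \mu f$ for all $\beta$, again by transfinite induction. The case $\bot \le \mu f$ is immediate. If $f^{\uparrow \beta}(\bot) \le \mu f$, then $f^{\uparrow \beta+1}(\bot) \le f(\mu f) = \mu f$. At limits, the supremum of elements below $\mu f$ is below $\mu f$.

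The third step is a cardinality argument. If $f^{\uparrow \beta}(\bot) < f^{\uparrow \beta+1}(\bot)$ held for every $\beta < \alpha$, then by the first step $\beta \mapsto f^{\uparrow \beta}(\bot)$ would be strictly increasing on $\alpha$, hence an injection $\alpha \to A$, which is impossible. So there is $\beta_0 < \alpha$ with $f(f^{\uparrow \beta_0}(\bot)) = f^{\uparrow \beta_0}(\bot)$. This element is a fixed point, and by the second step it is below $\mu f$, so it equals $\mu f$. Finally, a short induction shows the sequence is constant equal to $\mu f$ from $\beta_0$ onward: successors fix it, and limits take the supremum of a sequence that is eventually constant. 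In particular $f^{\uparrow \alpha}(\bot) = \mu f$. Since $\alpha$ was chosen before $f$, the same $\alpha$ works for every monotone $f$, as required.
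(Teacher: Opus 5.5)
Your proof is correct. You fix $\alpha = |A|^{+}$ before choosing $f$, show the iterates form an increasing chain of post-fixed points bounded by $\mu f$, and use the absence of an injection $\alpha \to A$ to force stabilization at $\mu f$. This is the standard transfinite Kleene argument, which the paper takes for granted: it states the theorem without proof. The same argument started from any post-fixed point $x$ instead of $\bot$ shows that $f^{\uparrow \alpha}(x)$ is the least fixed point above $x$. That is the form the paper actually needs later, when it iterates from $X_i^{(l_1, \dots, l_{i-1}, 0)}$ in its construction of $\mu$-approximants.
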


Given a $\mu$CLP $E$, let $L^E = (L^E_1, \dots, L^E_n)$ be a tuple of ordinals such that if $\eta_i = \nu$, then $L^E_i = 0$, and if $\eta_i = \mu$, then $L^E_i$ is a sufficiently large ordinal such that $f^{\uparrow L^E_i}(\bot) = \mu f$ for any monotone function $f : (\mathcal{D}_i \to 2) \to (\mathcal{D}_i \to 2)$ (cf. Theorem~\ref{thm:kleene}).
We often omit the superscript $E$ when it is clear from the context.
Let $\mathbf{Ord}^i_{\le L} = \{ (l_1, \dots, l_i) \mid \forall j. l_j \le L_j \}$ be the set of tuples of ordinals that are pointwise less than or equal to $L$.
We define the order relation on $\mathbf{Ord}^i_{\le L}$ as the lexicographic order.

\begin{definition}\label{def:canonical-progress-measure}
	For each $i \in \{ 1, \dots, n \}$ and $(l_1, \dots, l_i) \in \mathbf{Ord}^i_{\le L}$, we inductively define $X_i^{(l_1, \dots, l_i)} : \mathcal{D}_i \to \{ 0, 1 \}$ as follows.
	\begin{align}
		&X_i^{(l_1, \dots, l_i)} \quad\coloneqq \\
		&\begin{cases}
			\nu X_i. \interpret{\phi_i}^{(i)}(X_1^{(l_1)}, \dots, X_{i-1}^{(l_1, \dots, l_{i-1})}, X_{i}) & \text{if $\eta_i = \nu$} \\
			\sup_{(l'_1, \dots, l'_{i - 1}) <_{\mathrm{lex}} (l_1, \dots, l_{i - 1})} X_i^{(l'_1, \dots, l'_{i - 1}, L_i)} & \text{if $\eta_i = \mu$ and $l_i = 0$} \\
			\big(\interpret{\phi_i}^{(i)}(X_1^{(l_1)}, \dots, X_{i-1}^{(l_1, \dots, l_{i-1})}, {-})\big)^{\uparrow l_i}(X_i^{(l_1, \dots, l_{i - 1}, 0)}) & \text{if $\eta_i = \mu$ and $l_i > 0$}
		\end{cases}
	\end{align}
\end{definition}
In Definition~\ref{def:canonical-progress-measure}, $l_i$ keeps track of how many times the $i$-th $\mu$-equation is unfolded and $X_i^{(l_1, \dots, l_i)}$ is the ``$(l_1, \dots, l_i)$-th'' under-approximation of the least fixed point of the $i$-th equation.
This is called a $\mu$-approximant in~\cite[Section~3.4]{BaldanPOPL2019}.
The following properties play a key role in constructing a complete progress measure.

\begin{lemma}\label{lem:canonical-progress-measure}
	\begin{enumerate}
		\item $X_i^{(l_1, \dots, l_i)}$ is monotone with respect to $(l_1, \dots, l_i) \in \mathbf{Ord}^i_{\le L}$.
		\item $(X_1^{(L_1)}, \dots, X_n^{(L_1, \dots, L_n)})$ is the solution of $E$.
		\item Let $(l_1, \dots, l_i) \in \mathbf{Ord}^i_{\le L}$ and $l' = (l_1, \dots, l_i, L_{i + 1}, \dots, L_n)$.
		We write $l' \downharpoonright i$ for the prefix of $l'$ of length $i$.
		\begin{itemize}
			\item If $\eta_i = \mu$, then
			$X_i^{(l_1, \dots, l_i + 1)} = \interpret{\phi_i}(X_1^{l' \downharpoonright 1}, \dots, X_n^{l' \downharpoonright n})$.
			\item If $\eta_i = \nu$, then
			$X_i^{(l_1, \dots, l_i)} = \interpret{\phi_i}(X_1^{l' \downharpoonright 1}, \dots, X_n^{l' \downharpoonright n})$.
		\end{itemize}
	\end{enumerate}
\end{lemma}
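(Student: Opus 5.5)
The plan is to prove the three items of Lemma~\ref{lem:canonical-progress-measure} together, by induction on $i$. The first step is to strengthen item~(2) to a relativized statement, which I denote (2$'$):

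\begin{quote}
For every prefix $(l_1, \dots, l_i) \in \mathbf{Ord}^i_{\le L}$, the tuple $(X_{i+1}^{(l_1, \dots, l_i, L_{i+1})}, \dots, X_n^{(l_1, \dots, l_i, L_{i+1}, \dots, L_n)})$ equals the solution of the subsystem of equations $i+1, \dots, n$ of $E$, with $X_1, \dots, X_i$ frozen to $X_1^{(l_1)}, \dots, X_i^{(l_1, \dots, l_i)}$.
\end{quote}

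Item~(2) is the case of (2$'$) with $i = 0$.

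\textbf{Item (1).} I prove monotonicity in the lexicographic order by induction on $i$. I use that if $l \le_{\mathrm{lex}} l'$, then every prefix of $l$ is $\le_{\mathrm{lex}}$ the corresponding prefix of $l'$.
\begin{itemize}
\item If $\eta_i = \nu$, the induction hypothesis makes the operator $f_{l} = \interpret{\phi_i}^{(i)}(X_1^{(l_1)}, \dots, X_{i-1}^{(l_1, \dots, l_{i-1})}, -)$ monotone in the prefix. Greatest fixed points are monotone in the operator, which settles this case.
\item If $\eta_i = \mu$, I first show that $X_i^{(l_1, \dots, l_{i-1}, 0)}$ is a post-fixed point of $f_l$, so that the iteration $f_l^{\uparrow \alpha}$ is well defined and increasing in $\alpha$.
  \begin{itemize}
  \item Each $X_i^{(l', L_i)}$ with $l' <_{\mathrm{lex}} (l_1, \dots, l_{i-1})$ is a fixed point of $f_{l'}$. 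This holds because, by the choice of $L_i$, iterating from a post-fixed point for $L_i$ steps reaches a fixed point (the Kleene argument of Theorem~\ref{thm:kleene}, started from a post-fixed point).
  \item Since $f_{l'} \le f_l$, each such $X_i^{(l', L_i)}$ is post-fixed for $f_l$.
  \item A supremum of post-fixed points is post-fixed, so $X_i^{(l_1, \dots, l_{i-1}, 0)}$ is post-fixed for $f_l$.
  \end{itemize}
  Monotonicity in $l_i$ with the prefix fixed is then the monotonicity of the iteration. For a strictly smaller prefix $l'$, I use the chain $X_i^{(l', l'_i)} \le X_i^{(l', L_i)} \le X_i^{(l, 0)} \le X_i^{(l, l_i)}$.
\end{itemize}
This whole argument must be carried out simultaneously for all $i$, as a single induction on $i$.

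\textbf{Item (2$'$).} I argue by induction on $j > i$, unfolding Definition~\ref{def:solution}. The $j$-th partial solution is the $\eta_j$-fixed point of $\interpret{\phi_j}^{(j)}$ applied to the earlier, already identified components.
\begin{itemize}
\item If $\eta_j = \nu$, this is literally the defining clause of $X_j^{(\dots)}$.
\item If $\eta_j = \mu$, the starting point $X_j^{(\dots, 0)}$ is a post-fixed point lying below the least fixed point $\mu f$. Each of its constituents is below a least fixed point of a smaller operator, so it lies below $\mu f$. Iterating $L_j$ times from such a start stays below $\mu f$ and reaches a fixed point, hence reaches $\mu f$ exactly.
\end{itemize}
The delicate part is bookkeeping. $\interpret{\phi_j}^{(j)}$ internally substitutes solutions for the later variables $X_{j+1}, \dots, X_n$, and I need these to coincide with the approximants carrying $L$'s in the later positions. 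This is exactly (2$'$) at level $j$, so I run (2$'$) as a downward induction nested inside the outer one.

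\textbf{Item (3).} Item~(3) follows from (2$'$). By (2$'$), $\interpret{\phi_i}^{(i)}(A_1, \dots, A_i)$ equals $\interpret{\phi_i}(A_1, \dots, A_i, B_{i+1}, \dots, B_n)$, where $B_j = X_j^{l' \downharpoonright j}$ is the relative solution.
\begin{itemize}
\item If $\eta_i = \mu$, take $A_i = X_i^{(l_1, \dots, l_i)}$. The successor clause of the iteration gives $X_i^{(l_1, \dots, l_i + 1)} = f_l(X_i^{(l_1, \dots, l_i)})$, which is the claimed equation.
\item If $\eta_i = \nu$, the claim is the fixed-point equation of the greatest fixed point.
\end{itemize}

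I expect the main obstacle to be (2$'$). Two points need care there: justifying that iteration from a post-fixed point below $\mu f$ reaches $\mu f$ within $L_j$ steps, and organizing the nested induction so that the substituted later solutions match the approximants.
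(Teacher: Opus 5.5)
Your proposal is correct and takes essentially the paper's route. The ingredients match: post-fixedness of $X_i^{(l_1,\dots,l_{i-1},0)}$ as a supremum of fixed points of smaller operators for item~(1); the identity $X_i^{(l_1,\dots,l_{i-1},L_i)} = \eta_i X_i.\interpret{\phi_i}^{(i)}(X_1^{(l_1)},\dots,X_{i-1}^{(l_1,\dots,l_{i-1})},X_i)$, obtained by showing the starting point lies below the least fixed point, for item~(2); and unfolding $\interpret{\phi_i}^{(j)}$ for item~(3). Your relativized (2$'$) just packages that identity together with the paper's upward induction on $j$ from item~(3). Two points to tidy: the $\mu$-cases are a well-founded induction on the prefix $(l_1,\dots,l_{i-1})$ in lexicographic order, as in the paper, not a ``downward induction''. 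Also, matching the later variables is needed only for item~(3), which (2$'$) supplies. Inside the proof of (2$'$) no such matching is required, because the approximants are defined with the very operator $\interpret{\phi_j}^{(j)}$.
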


To prove Lemma~\ref{lem:canonical-progress-measure}, we first prove the following lemma.
\begin{lemma}\label{lem:solution-fixed-point}
	If $(X_1^{*}, \dots, X_n^{*})$ is the solution of $E$, then for any $i$ and $j$,
	\[ X_i^{*} = \interpret{\phi_i}^{(j)}(X_1^{*}, \dots, X_j^{*}). \]
\end{lemma}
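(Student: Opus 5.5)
The plan is to fix $i$ and argue by induction on $j$, with the inductive parameter running from the base case $j = n$ down to $j = 0$, mirroring the inductive definition of $\interpret{\phi_i}^{(j)}$ in Definition~\ref{def:solution}. For notational convenience, write $F^{(j)}_i(\mathcal{A}_1, \dots, \mathcal{A}_j) \coloneqq \interpret{\phi_i}^{(j)}(\mathcal{A}_1, \dots, \mathcal{A}_j)$ and $\Phi_j(\mathcal{A}_1, \dots, \mathcal{A}_{j-1}) \coloneqq \eta_j X_j.\, F^{(j)}_j(\mathcal{A}_1, \dots, \mathcal{A}_{j-1}, X_j)$. Then the recursion reads $F^{(j-1)}_i(\tilde{\mathcal{A}}) = F^{(j)}_i(\tilde{\mathcal{A}}, \Phi_j(\tilde{\mathcal{A}}))$, and by definition $X_i^{*} = F^{(0)}_i()$. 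The statement to prove is $X_i^{*} = F^{(j)}_i(X_1^{*}, \dots, X_j^{*})$ for all $i, j$.

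I would establish the key auxiliary identity $\Phi_j(X_1^{*}, \dots, X_{j-1}^{*}) = X_j^{*}$ for each $j$. Granting it, the induction step is immediate:
\[ F^{(j-1)}_i(X_1^{*}, \dots, X_{j-1}^{*}) = F^{(j)}_i(X_1^{*}, \dots, X_{j-1}^{*}, \Phi_j(X_1^{*}, \dots, X_{j-1}^{*})) = F^{(j)}_i(X_1^{*}, \dots, X_j^{*}). \]
Thus the claim holds at $j-1$ if and only if it holds at $j$. It is also trivially true at $j = 0$, since $X_i^{*} = F^{(0)}_i()$ by definition. Propagating upward from $j = 0$ therefore gives every $j$, including $j = n$.

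For the auxiliary identity, I would prove by induction on $j$ (from $1$ upward) the statement $P(j)$: for every $k \ge j$, $F^{(j-1)}_k(X_1^{*}, \dots, X_{j-1}^{*}) = X_k^{*}$. The base case $P(1)$ holds by definition, since $F^{(0)}_k() = X_k^{*}$. For the step from $P(j)$ to $P(j+1)$, unfold the recursion:
\[ X_k^{*} = F^{(j-1)}_k(X_1^{*}, \dots, X_{j-1}^{*}) = F^{(j)}_k(X_1^{*}, \dots, X_{j-1}^{*}, \Phi_j(X_1^{*}, \dots, X_{j-1}^{*})). \]
Taking $k = j$ gives $X_j^{*} = F^{(j-1)}_j(X_1^{*}, \dots, X_{j-1}^{*}) = \Phi_j(X_1^{*}, \dots, X_{j-1}^{*})$ by definition of $F^{(j-1)}_j$. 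Substituting this back gives $X_k^{*} = F^{(j)}_k(X_1^{*}, \dots, X_j^{*})$ for all $k \ge j+1$, which is exactly $P(j+1)$. The same computation also delivers $\Phi_j(X^{*}_{<j}) = X^{*}_j$ along the way.

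In fact $P(j)$ for $k = i$ already is the lemma for $j - 1 < i$. The remaining cases $j \ge i$ follow by continuing the same unfolding, each time using $\Phi_{j'}(X^{*}_{<j'}) = X^{*}_{j'}$, which holds for all $j'$. The main obstacle is purely bookkeeping: keeping the two indices straight and noticing that the identity $\Phi_j(X^{*}_{<j}) = X^{*}_j$ is itself just the instance $k = j$ of the unfolding. No fixed-point reasoning such as Knaster--Tarski is needed beyond the definitions.
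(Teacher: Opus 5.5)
Your proof is correct and is essentially the paper's argument. It is an upward induction on $j$ from the trivial case $j = 0$: the hypothesis at level $j-1$, instantiated at $i = j$, gives $X_j^{*} = \Phi_j(X_1^{*}, \dots, X_{j-1}^{*})$, and this is substituted into the unfolding of $\interpret{\phi_i}^{(j-1)}$. Your auxiliary induction over $k \ge j$ is this same induction, so the two layers could be merged into one.

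One small precision: $\interpret{\phi_j}^{(j-1)}(X_1^{*}, \dots, X_{j-1}^{*}) = \Phi_j(X_1^{*}, \dots, X_{j-1}^{*})$ does not follow from the definition of $\interpret{\phi_j}^{(j-1)}$ alone. It also uses that the extremal fixed point $\Phi_j(X_1^{*}, \dots, X_{j-1}^{*})$ is a fixed point of $\interpret{\phi_j}^{(j)}(X_1^{*}, \dots, X_{j-1}^{*}, {-})$, which the paper writes out as an explicit step.
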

\begin{proof}
	By induction on $j$.
	\begin{itemize}
		\item Base case: $j = 0$. Trivial.
		\item Step case: Suppose the claim holds for $j$.
		First note that if $i = j + 1$, then $X_{j + 1}^{*}$ is the fixed point of the following form.
		\begin{align}
			&X_{j + 1}^{*} \\
			&= \interpret{\phi_{j + 1}}^{(j)}(X_1^{*}, \dots, X_j^{*}) \\
			&= \interpret{\phi_{j + 1}}^{(j + 1)}(X_1^{*}, \dots, X_j^{*}, \eta_{j + 1} X_{j + 1}. \interpret{\phi_{j + 1}}^{(j + 1)}(X_1^{*}, \dots, X_j^{*}, X_{j + 1})) \\
			&= \eta_{j + 1} X_{j + 1}. \interpret{\phi_{j + 1}}^{(j + 1)}(X_1^{*}, \dots, X_j^{*}, X_{j + 1})
		\end{align}
		Therefore, we have the following.
		\begin{align}
			X_i^{*} &= \interpret{\phi_i}^{(j)}(X_1^{*}, \dots, X_j^{*}) \\
			&= \interpret{\phi_i}^{(j + 1)}(X_1^{*}, \dots, X_j^{*}, \eta_{j + 1} X_{j + 1}. \interpret{\phi_{j + 1}}^{(j + 1)}(X_1^{*}, \dots, X_j^{*}, X_{j + 1})) \\
			&= \interpret{\phi_{i}}^{(j + 1)}(X_1^{*}, \dots, X_j^{*}, X_{j + 1}^{*}) \tag*{\qed}
		\end{align}
	\end{itemize}
\end{proof}

\begin{proof}[Proof of Lemma~\ref{lem:canonical-progress-measure}]
	\begin{enumerate}
		\item We simultaneously prove the following statements by induction on $i$.
		\begin{itemize}
			\item If $\eta_i = \mu$, then $X_i^{(l_1, \dots, l_{i - 1}, 0)}$ is a post-fixed point (i.e., the iteration of the monotone function is well-defined).
			\[ X_i^{(l_1, \dots, l_{i - 1}, 0)} \quad\le\quad \interpret{\phi_i}^{(i)}(X_1^{(l_1)}, \dots, X_{i-1}^{(l_1, \dots, l_{i-1})}, X_i^{(l_1, \dots, l_{i - 1}, 0)}) \]
			\item $X_i^{(l_1, \dots, l_i)}$ is monotone with respect to $(l_1, \dots, l_i)$.
		\end{itemize}
		Suppose that the above statements hold for each $i' < i$.
		Each statement for $i$ is proved as follows.
		\begin{itemize}
			\item By definition of $X_i^{(l_1, \dots, l_{i - 1}, 0)}$, it suffices to show the following for any $(l'_1, \dots, l'_{i - 1}) <_{\mathrm{lex}} (l_1, \dots, l_{i - 1})$.
			\[ X_i^{(l'_1, \dots, l'_{i - 1}, L_i)} \quad\le\quad \interpret{\phi_i}^{(i)}(X_1^{(l_1)}, \dots, X_{i-1}^{(l_1, \dots, l_{i-1})}, X_i^{(l_1, \dots, l_{i - 1}, 0)}) \]
			By definition of $L_i$, $X_i^{(l'_1, \dots, l'_{i - 1}, L_i)}$ is a fixed point, and hence by the induction hypothesis about the monotonicity, we have the following.
			\begin{align}
				X_i^{(l'_1, \dots, l'_{i - 1}, L_i)}
				&= \interpret{\phi_i}^{(i)}(X_1^{(l'_1)}, \dots, X_{i-1}^{(l'_1, \dots, l'_{i-1})}, X_i^{(l'_1, \dots, l'_{i - 1}, L_i)}) \\
				&\le \interpret{\phi_i}^{(i)}(X_1^{(l_1)}, \dots, X_{i-1}^{(l_1, \dots, l_{i-1})}, X_i^{(l'_1, \dots, l'_{i - 1}, L_i)}) \\
				&\le \interpret{\phi_i}^{(i)}(X_1^{(l_1)}, \dots, X_{i-1}^{(l_1, \dots, l_{i-1})}, X_i^{(l_1, \dots, l_{i - 1}, 0)})
			\end{align}
			\item Suppose $(l_1, \dots, l_i) <_{\mathrm{lex}} (l'_1, \dots, l'_i)$.
		The case of $\eta_i = \nu$ is obvious by the induction hypothesis.
		\begin{align}
			X_i^{(l_1, \dots, l_i)}
			&= \nu X_i. \interpret{\phi_i}^{(i)}(X_1^{(l_1)}, \dots, X_{i-1}^{(l_1, \dots, l_{i-1})}, X_{i}) \\
			&\le \nu X_i. \interpret{\phi_i}^{(i)}(X_1^{(l'_1)}, \dots, X_{i-1}^{(l'_1, \dots, l'_{i-1})}, X_{i}) \\
			&= X_i^{(l'_1, \dots, l'_i)}
		\end{align}
		Next, we consider the case of $\eta_i = \mu$.
		Because $X_i^{(l_1, \dots, 0)}$ is a post-fixed point and $X_i^{(l_1, \dots, l_i)}$ is defined as the $l_i$-th iteration of a monotone function starting from $X_i^{(l_1, \dots, 0)}$, $X_i^{(l_1, \dots, l_i)}$ is monotone with respect to $l_i$.
		If $(l_1, \dots, l_{i - 1}) = (l'_1, \dots, l'_{i - 1})$ and $l_i < l'_i$, then the statement follows from the monotonicity of $X_i^{(l_1, \dots, l_i)}$ with respect to $l_i$.
		If $(l_1, \dots, l_{i - 1}) <_{\mathrm{lex}} (l'_1, \dots, l'_{i - 1})$, then without loss of generality, we assume $l'_i = 0$ and $l_i = L_i$.
		\[ (l_1, \dots, l_{i - 1}, l_i) \le_{\mathrm{lex}} (l_1, \dots, l_{i - 1}, L_i) <_{\mathrm{lex}} (l'_1, \dots, l'_{i - 1}, 0) \le_{\mathrm{lex}} (l'_1, \dots, l'_{i - 1}, l'_i) \]
		Then, we have $X_i^{(l_1, \dots, l_{i - 1}, L_i)} \le X_i^{(l'_1, \dots, l'_{i - 1}, 0)}$ by definition.
		\end{itemize}
		\item We first prove the following equation for each $i$.
		\begin{equation}
			X_i^{(l_1, \dots, l_{i - 1}, L_i)} \quad=\quad \eta_i X_i. \interpret{\phi_i}^{(i)}(X_1^{(l_1)}, \dots, X_{i - 1}^{(l_1, \dots, l_{i - 1})}, X_i)
			\label{eq:canonical-progress-measure-fixed-point-proof}
		\end{equation}
		The case of $\eta_i = \nu$ is obvious by definition.
		We prove the case of $\eta_i = \mu$ by induction on $(l_1, \dots, l_{i - 1})$ with respect to the lexicographic order.
		By definition of $X_i^{(l_1, \dots, l_{i - 1}, l_i)}$, it suffices to show
		\[ X_i^{(l_1, \dots, l_{i - 1}, 0)} \quad\le\quad \eta_i X_i. \interpret{\phi_i}^{(i)}(X_1^{(l_1)}, \dots, X_{i - 1}^{(l_1, \dots, l_{i - 1})}, X_i). \]
		By definition of $X_i^{(l_1, \dots, l_{i - 1}, 0)}$, it suffices to show the following for each $(l'_1, \dots, l'_{i - 1}) <_{\mathrm{lex}} (l_1, \dots, l_{i - 1})$, which obviously follows from the induction hypothesis.
		\[ X_i^{(l'_1, \dots, l'_{i - 1}, L_i)} \quad\le\quad \eta_i X_i. \interpret{\phi_i}^{(i)}(X_1^{(l_1)}, \dots, X_{i - 1}^{(l_1, \dots, l_{i - 1})}, X_i) \]
		Now, we prove the second statement of Lemma~\ref{lem:canonical-progress-measure} by induction on $i$.
		\begin{align}
			X_i^{(L_1, \dots, L_i)}
			&= \eta_i X_i. \interpret{\phi_i}^{(i)}(X_1^{(L_1)}, \dots, X_{i-1}^{(L_1, \dots, L_{i-1})}, X_{i}) \\
			&= \eta_i X_i. \interpret{\phi_i}^{(i)}(X_1^{*}, \dots, X_{i-1}^{*}, X_{i}) \\
			&= \interpret{\phi_i}^{(i - 1)}(X_1^{*}, \dots, X_{i-1}^{*}) \\
			&= X_i^{*} \qquad\qquad \text{(by Lemma~\ref{lem:solution-fixed-point})}
		\end{align}
		\item It suffices to show the following by induction on $j$.
	\[ i \le j \implies X_i^{(l_1, \dots, l_i + 1)} = \interpret{\phi_i}^{(j)}(X_1^{l' \downharpoonright 1}, \dots, X_j^{l' \downharpoonright j}) \]
	The base case $j = i$ is obvious by definition.
	\[ X_i^{(l_1, \dots, l_i + 1)} = \interpret{\phi_i}^{(i)}(X_1^{(l_1)}, \dots, X_i^{(l_1, \dots, l_i)}) \]
	Suppose that the statement holds for $j$.
	By definition of $l'$ and $i \le j$, we have $l' \downharpoonright j + 1 = (l_1, \dots, l_i, L_{i + 1}, \dots, L_{j + 1}) = (l' \downharpoonright j) \cdot L_{j + 1}$.
	\begin{align}
		&X_i^{(l_1, \dots, l_i + 1)} \\
		&= \interpret{\phi_i}^{(j)}(X_1^{l' \downharpoonright 1}, \dots, X_j^{l' \downharpoonright j}) \\
		&= \interpret{\phi_i}^{(j + 1)}(X_1^{l' \downharpoonright 1}, \dots, X_j^{l' \downharpoonright j}, \eta_{j + 1} X_{j + 1}. \interpret{\phi_{j + 1}}^{(j + 1)}(X_1^{l' \downharpoonright 1}, \dots, X_j^{l' \downharpoonright j}, X_{j + 1})) \\
		&= \interpret{\phi_i}^{(j + 1)}(X_1^{l' \downharpoonright 1}, \dots, X_j^{l' \downharpoonright j}, X_{j + 1}^{l' \downharpoonright j + 1}) \qquad\qquad \text{(by~\eqref{eq:canonical-progress-measure-fixed-point-proof})}
	\end{align}
	Therefore, the statement holds for $j + 1$.
	\end{enumerate}
\end{proof}

\begin{definition}
	For each $i = 1, \dots, n$, we define a partial function $r'_i : \mathcal{D}_i \rightharpoonup \mathbf{Ord}^i_{\le L}$ and $r_i : \mathcal{D}_i \rightharpoonup \mathbf{Ord}^n_{\le L}$ as
	\[ r'_i(x) \quad\coloneqq\quad \min \{ (l_1, \dots, l_i) \mid x \in X_i^{(l_1, \dots, l_i)} \} \qquad\qquad r_i(x) \quad\coloneqq\quad r'_i(x) \upharpoonright n \]
	where we regard $\min \emptyset$ as undefined, and $({-}) \upharpoonright n : \mathbf{Ord}^i_{\le L} \to \mathbf{Ord}^n_{\le L}$ is the function that extends a tuple by appending $0$, i.e., $(l_1, \dots, l_i) \upharpoonright n = (l_1, \dots, l_i, 0, \dots, 0)$.
	We call $r = \{ r_i \}_{i = 1, \dots, n}$ the \emph{canonical progress measure} for the $\mu$CLP $E$.
\end{definition}

\begin{lemma}\label{lem:canonical-progress-measure-lfp-successor}
	Let $x \in X_i^{*}$ and $(l_1, \dots, l_i) = r'_i(x)$.
	If $\eta_i = \mu$, then $l_i$ is a successor ordinal.
\end{lemma}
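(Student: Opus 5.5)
Let $x \in X_i^{*}$ with $\eta_i = \mu$ and $(l_1, \dots, l_i) = r'_i(x)$, which is well-defined. By Lemma~\ref{lem:canonical-progress-measure}(2), $x \in X_i^{(L_1, \dots, L_i)}$, so the set whose minimum is taken is nonempty; the minimum exists because the lexicographic order on $\mathbf{Ord}^i_{\le L}$ is a well-order. The plan is to exclude the two remaining possibilities: $l_i = 0$ and $l_i$ a limit ordinal. In each case we produce a lexicographically strictly smaller index $(l'_1, \dots, l'_i)$ with $x \in X_i^{(l'_1, \dots, l'_i)}$, contradicting minimality.

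\emph{Case $l_i = 0$.} By Definition~\ref{def:canonical-progress-measure}, $X_i^{(l_1, \dots, l_{i-1}, 0)}$ is the pointwise supremum of $X_i^{(l'_1, \dots, l'_{i-1}, L_i)}$ over $(l'_1, \dots, l'_{i-1}) <_{\mathrm{lex}} (l_1, \dots, l_{i-1})$. In the lattice $\mathcal{D}_i \to \{0,1\}$ a supremum is a union. Hence if $x$ lies in it, then $x \in X_i^{(l'_1, \dots, l'_{i-1}, L_i)}$ for some such index. This index is strictly lex-smaller than $(l_1, \dots, l_{i-1}, 0)$ because its prefix is strictly smaller, which is the desired contradiction. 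If the index set is empty (e.g.\ $(l_1,\dots,l_{i-1})$ is lex-minimal, or $i=1$), then the supremum is empty and $x$ cannot belong to it at all.

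\emph{Case $l_i$ a limit ordinal.} Write $f = \interpret{\phi_i}^{(i)}(X_1^{(l_1)}, \dots, X_{i-1}^{(l_1, \dots, l_{i-1})}, {-})$, so that $X_i^{(l_1, \dots, l_i)} = f^{\uparrow l_i}(X_i^{(l_1, \dots, l_{i-1}, 0)})$. By the definition of the iteration at limits, this equals $\sup_{\alpha < l_i} f^{\uparrow \alpha}(X_i^{(l_1, \dots, 0)})$. For $0 < \alpha < l_i$, the set $f^{\uparrow\alpha}(X_i^{(l_1,\dots,0)})$ is exactly $X_i^{(l_1, \dots, l_{i-1}, \alpha)}$. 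For $\alpha = 0$ it is $X_i^{(l_1, \dots, l_{i-1}, 0)}$ itself. Again the supremum is a union, so $x \in X_i^{(l_1, \dots, l_{i-1}, \alpha)}$ for some $\alpha < l_i$, which contradicts minimality.

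The only mildly delicate point is checking that the iteration in the $l_i > 0$ clause of Definition~\ref{def:canonical-progress-measure} is the same object for all $\alpha \le L_i$. That is, for fixed $(l_1,\dots,l_{i-1})$ the function $f$ and the starting point do not depend on $l_i$, so intermediate iterates coincide with the $\mu$-approximants indexed by smaller last components. This is immediate from the definition, since only the exponent changes. Well-definedness of the iteration, namely that the starting point is a post-fixed point, was established in the proof of Lemma~\ref{lem:canonical-progress-measure}(1). Combining the two cases, $l_i$ is neither $0$ nor a limit, so it is a successor ordinal.
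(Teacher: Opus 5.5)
Your proof is correct and follows the same route as the paper. In both the $l_i = 0$ case and the limit case, Definition~\ref{def:canonical-progress-measure} makes $X_i^{(l_1,\dots,l_i)}$ a supremum, i.e.\ a union, of approximants with lexicographically smaller indices, which contradicts the minimality of $r'_i(x)$. You also make explicit the details that the paper's one-line argument leaves implicit: nonemptiness via Lemma~\ref{lem:canonical-progress-measure}(2), the well-ordering of $\mathbf{Ord}^i_{\le L}$, and the identification of the intermediate iterates with $X_i^{(l_1,\dots,l_{i-1},\alpha)}$.
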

\begin{proof}
	By definition of $r'_i$ and $X^{(l_1, \dots, l_i)}$, if $l_i$ is either $0$ or a limit ordinal, then there exists $(l'_1, \dots, l'_i) <_{\mathrm{lex}} (l_1, \dots, l_i)$ such that $x \in X^{(l'_1, \dots, l'_i)}$, which contradicts the minimality of $(l_1, \dots, l_i)$.
\end{proof}

The canonical progress measure $r$ plays an essential role in the proof of completeness (Theorem~\ref{thm:completeness}).
\begin{proof}[Proof of Theorem~\ref{thm:completeness}]
	Let $r$ be the canonical progress measure for $E$ and $R$ be the parity relation~\eqref{eq:on-exit-progress-measure-relation} induced by $r$.
	Then, the solution $(X_1^{*}, \dots, X_n^{*})$ of $E$ is a post-fixed point of $E[R]$, which is shown as follows.
	Suppose $\tilde{v} \in X_i^{*}$.
	Then, $(l_1, \dots, l_i) = r'_i(\tilde{v})$ is defined.
	By case analysis on $\eta_i$, we can show that $\interpret{\phi_i}(X_1^{*} \cap R_{i, 1}(\tilde{v}), \dots, X_n^{*} \cap R_{i, n}(\tilde{v})) = X_i^{(l_1, \dots, l_i)}$.
	Therefore, we have $\tilde{v} \in \interpret{\phi_i}(X_1^{*} \cap R_{i, 1}(\tilde{v}), \dots, X_n^{*} \cap R_{i, n}(\tilde{v}))$.
\end{proof}

The canonical progress measure $r$ is also used to prove Theorem~\ref{thm:game-semantics}.
\begin{proof}[Proof of Theorem~\ref{thm:game-semantics}]
	It suffices to prove that (a) if $X_i^{*}(\tilde{v})$ is true, then there exists a Verifier's winning strategy from $(X_i, \tilde{v})$, and (b) if $X_i^{*}(\tilde{v})$ is false, then there exists a Falsifier's winning strategy from $(X_i, \tilde{v})$.
	For the former, let $r$ be the canonical progress measure for $E$.
	For any Verifier's position $(X_i, \tilde{v})$ such that $\tilde{v} \in X^{*}_i$, Verifier chooses the next move using Lemma~\ref{lem:canonical-progress-measure} as follows.
	\begin{itemize}
		\item If $\eta_i = \mu$ and $(l_1, \dots, l_i + 1) = r'(\tilde{v})$, then Verifier moves to $(X_1^{l' \downharpoonright 1}, \dots, X_n^{l' \downharpoonright n})$.
		\item If $\eta_i = \nu$ and $(l_1, \dots, l_i) = r'(\tilde{v})$, then Verifier moves to $(X_1^{l' \downharpoonright 1}, \dots, X_n^{l' \downharpoonright n})$.
	\end{itemize}
	Here, we define $l' = (l_1, \dots, l_i, L_{i+1}, \dots, L_n)$ in both cases.
	It is straightforward to show that this strategy is winning for Verifier if the initial position is $(X_i, \tilde{v})$ with $\tilde{v} \in X^{*}_i$.

	For the latter, let $\overline{r}$ be the canonical progress measure for the dual of $E$.
	Note that for any Verifier's position $(X_i, \tilde{v})$ such that $\tilde{v} \notin X^{*}_i$, the next position must be in $\{ (\mathcal{A}_1, \dots, \mathcal{A}_n) \mid \exists i, \mathcal{A}_i \not\subseteq X_i^{*} \}$.
	Thus, the Falsifier's strategy is defined by choosing $(X_i, \tilde{v})$ such that $\tilde{v} \in \mathcal{A}_i \setminus X_i^{*}$ and $\overline{r}_i(\tilde{v})$ is minimum with respect to the lexicographic order: $\overline{r}_i(\tilde{v}) = \min \{ \overline{r}_j(\tilde{u}) \mid j = 1, \dots, n; \tilde{u} \in \mathcal{A}_j \setminus X_j^{*} \}$.
	It is straightforward to show that this strategy is winning for Falsifier if the initial position is $(X_i, \tilde{v})$ with $\tilde{v} \notin X^{*}_i$.
\end{proof}

%% file: proofs.tex
\section{Ommitted Proofs}
\label{sec:proofs}

\begin{proof}[Proof of Theorem~\ref{thm:parity-relation-well-founded-relation}]
	We define $\hat{\mathcal{A}}_i$ as follows.
	\[ \hat{\mathcal{A}}_i \quad\coloneqq\quad \{ (\hat{x}_1, \dots, \hat{x}_{i - 1}, x_i) \mid \forall j < i. (\eta_j = \mu \land \hat{x}_j \neq \bot \implies (\hat{x}_j, x_i) \in Q_{j, i}) \land x_i \in \mathcal{A}_i \} \]
	where for each $i$ and $j$ such that $\eta_j = \mu$, if $j < i$, then we define a relation $Q_{j, i} \subseteq \mathcal{D}_j \times \mathcal{D}_i$ as
	\[ Q_{j, i} \quad\coloneqq\quad \{ (y_0, y_m) \mid m > 0 \land y_0 \mathrel{R} y_1 \mathrel{R} \cdots \mathrel{R} y_m \land y_0 \in \mathcal{D}_j \land y_m \in \mathcal{D}_i \land y_1, \dots, y_{m-1} \in \mathcal{D}_{> j} \} \]
	and if $j \ge i$, then we define $Q_{j, i} \coloneqq \emptyset$. 
	Note that $Q_{j, i}$ is defined similarly to $\hat{R}_{\mathcal{D}_{\le j}, \mathcal{D}_{< j}}$ except that we consider sequences $y_0, \dots, y_m$ that have not yet reached $\mathcal{D}_j$ after $y_0$.
	In particular, if $j < i$, $(x, y) \in Q_{j, i}$, and $(y, z) \in R$ for some $z \in \mathcal{D}_j$, then we have $(x, z) \in \hat{R}_{\mathcal{D}_{\le j}, \mathcal{D}_{< j}}$.
	We also have that if $j < i$, $(x, y) \in Q_{j, i}$, and $(y, z) \in R$ for some $z \in \mathcal{D}_{k}$ with $k > j$, then we have $(x, z) \in Q_{j, k}$. 

	Now, we show that $(\hat{\mathcal{A}}_1, \dots, \hat{\mathcal{A}}_n)$ is a post-fixed point of $\mathbf{elim}_{\mu}(E)$.
	Suppose that we have $(\hat{x}_1, \dots, \hat{x}_{i - 1}, x_i) \in \hat{\mathcal{A}}_i$.
	Then, we have $x_i \in \mathcal{A}_i$ and hence $\interpret{\phi_i}(\mathcal{A}_1 \cap R_{i, 1}(x_i), \dots, \mathcal{A}_n \cap R_{i, n}(x_i))(x_i)$ is true since $(\mathcal{A}_1, \dots, \mathcal{A}_n)$ is a post-fixed point of $E[R]$.
	By definition of $\mathbf{elim}_{\mu}(E)$, we have
	\begin{equation}
		\interpret{\phi_i[\lambda y. \psi_{i, 1}/X_1, \dots, \lambda y. \psi_{i, n}/X_n]}(\hat{\mathcal{A}}_1, \dots, \hat{\mathcal{A}}_n)(\hat{x}_1, \dots, \hat{x}_{i - 1}, x_i) \quad=\quad \interpret{\phi_i}(\hat{\mathcal{A}'}_1, \dots, \hat{\mathcal{A}'}_n)(x_i) \label{eq:post-fixed-point-elim-mu}
	\end{equation}
	where $\hat{\mathcal{A}'}_1, \dots, \hat{\mathcal{A}'}_n$ are defined as follows: if $\eta_j = \mu$,
	\[ \hat{\mathcal{A}'}_j \quad\coloneqq\quad \begin{cases}
		\{ u \mid (\hat{x}_1, \dots, \hat{x}_{i - 1}, u) \in \hat{\mathcal{A}}_j \land (\hat{x}_j \neq \bot \implies (\hat{x}_j, u) \in \hat{R}_{\mathcal{D}_{\le j}, \mathcal{D}_{< j}}) \} & j < i \\
		\{ u \mid (\hat{x}_1, \dots, \hat{x}_{i - 1}, u) \in \hat{\mathcal{A}}_i \land (x_i, u) \in \hat{R}_{\mathcal{D}_{\le j}, \mathcal{D}_{< j}} \} & j = i \\
		\{ u \mid (\hat{x}_1, \dots, \hat{x}_{i - 1}, x_i, \bot, \dots, \bot, u) \in \hat{\mathcal{A}}_j \} & j > i
	\end{cases} \]
	and if $\eta_j = \nu$, we replace $\hat{R}_{\mathcal{D}_{\le j}, \mathcal{D}_{< j}}$ with $\True$ in the above definition.
	Then, we have $\mathcal{A}_j \cap R_{i, j}(x_i) \subseteq \hat{\mathcal{A}'}_j$ for each $j$.
	\begin{itemize}
		\item If $j \le i$ and $y \in \mathcal{A}_j \cap R_{i, j}(x_i)$, then we have $(\hat{x}_1, \dots, \hat{x}_{j - 1}, y) \in \hat{\mathcal{A}}_j$ because for any $k < j$, if $\eta_k = \mu$ and $\hat{x}_k \neq \bot$, then $(\hat{x}_k, x_i) \in Q_{k, i}$ and $(x_i, y) \in R_{i, j}$ imply $(\hat{x}_k, y) \in Q_{k, j}$.
		Moreover, if $\eta_j = \mu$, then we have the following.
		\begin{itemize}
			\item If $j < i$ and $\hat{x}_j \neq \bot$, then we have $(\hat{x}_j, y) \in \hat{R}_{\mathcal{D}_{\le j}, \mathcal{D}_{< j}}$ because $(\hat{x}_j, x_i) \in Q_{j, i}$ and $(x_i, y) \in R_{i, j}$ imply $(\hat{x}_j, y) \in \hat{R}_{\mathcal{D}_{\le j}, \mathcal{D}_{< j}}$.
			\item If $j = i$, we have $(x_i, y) \in \hat{R}_{\mathcal{D}_{\le i}, \mathcal{D}_{< i}}$ because $(x_i, y) \in R_{i, i}$ and $R_{i, i} \subseteq \hat{R}_{\mathcal{D}_{\le i}, \mathcal{D}_{< i}}$.
		\end{itemize}
		\item If $j > i$ and $y \in \mathcal{A}_j \cap R_{i, j}(x_i)$, then we have $(\hat{x}_1, \dots, \hat{x}_{i - 1}, x_i, \bot, \dots, \bot, y) \in \hat{\mathcal{A}}_j$ because
		\begin{itemize}
			\item for any $k < i$, if $\eta_k = \mu$ and $\hat{x}_k \neq \bot$, then $(\hat{x}_k, x_i) \in Q_{k, i}$ and $(x_i, y) \in R_{i, j}$ imply $(\hat{x}_k, y) \in Q_{k, j}$; and
			\item for $k = i$, if $\eta_k = \mu$, then $(x_i, y) \in R_{i, j}$ implies $(x_i, y) \in Q_{i, j}$.
		\end{itemize}
	\end{itemize}
	By monotonicity of $\interpret{\phi_i}$, the left-hand side of~\eqref{eq:post-fixed-point-elim-mu} is true whenever $(\hat{x}_1, \dots, \hat{x}_{i - 1}, x_i) \in \hat{\mathcal{A}}_i$.
	Hence, $(\hat{\mathcal{A}}_1, \dots, \hat{\mathcal{A}}_n)$ is a post-fixed point of $\mathbf{elim}_{\mu}(E)$.
\end{proof}